\theoremstyle{definition}
\theoremstyle{theorem}
\newtheorem{theorem}{Theorem}
\theoremstyle{lemma}
\newtheorem{lemma}{Lemma}
\theoremstyle{remark}
\newtheorem{remark}{Remark}
\theoremstyle{proposition}
\newtheorem{proposition}{Proposition}
\theoremstyle{corollary}
\newtheorem{corollary}{Corollary}
\theoremstyle{assumption}
\newtheorem{assumption}{Assumption}
\begin{document}
%
\title{Scalable Vehicle Team Continuum Deformation Coordination with Eigen Decomposition}
%
%
%

\author{Hossein~Rastgoftar, Ella M. Atkins,~\IEEEmembership{Senior Member},~and Ilya Kolmanovsky,~\IEEEmembership{{\color{black}Fellow}}
\thanks{{\color{black}All authors are with the Aerospace Engineering Department, University of Michigan, Ann Arbor,
MI, 48109 USA, e-mails: \{hosseinr, ematkins, ilya\}@umich.edu}.}
}
%
%

\markboth{}%
{Shell \MakeLowercase{\textit{et al.}}: Bare Demo of IEEEtran.cls for Journals}
%



\maketitle

\begin{abstract}
The continuum deformation leader-follower cooperative control strategy models vehicles in a multi-agent system as particles of a deformable body. A desired continuum deformation is defined based on leaders' trajectories and acquired by followers in real-time through local communication. The existing continuum deformation theory requires followers to be placed inside the convex simplex defined by leaders. This constraint is relaxed in this paper. 
We prove that under suitable assumption{\color{black}s} any $n+1$ {\color{black}($n=1,2,3$)} vehicles forming an $n$-D simplex can be selected as leaders while followers, arbitrarily positioned inside or outside the leading simplex, can acquire a desired continuum deformation in a decentralized fashion. The paper's second contribution is to assign a one-to-one mapping between leaders' smooth trajectories and homogeneous deformation features obtained by continuum deformation eigen-decomposition. Therefore, a safe and smooth continuum deformation coordination can be planned either by shaping homogeneous transformation features or by choosing appropriate leader trajectories. This is beneficial to efficiently plan and guarantee 
collision avoidance in a large-scale group. 
A simulation case study is reported in which a virtual convex simplex contain{\color{black}s} a {\color{black}quadcopter} vehicle team at any time $t$; A* search {\color{black}is} applied to optimize quadcopter team continuum deformation in an obstacle-laden environment.
\end{abstract}

\begin{IEEEkeywords}
Path Planning, Collision Avoidance, Multi-vehicle System (MVS), Eigen Decomposition, Local Communication
\end{IEEEkeywords}

%
\IEEEpeerreviewmaketitle

\section{Introduction}
Formation and cooperative control algorithms \cite{lin2014distributed} have been applied to problems in biology \cite{wang2015bswarm}, computer science \cite{boissier2013multi}, aerospace engineering \cite{dong2015time, nazari2016decentralized}, and elsewhere.
Virtual structure \cite{ren2002virtual, low2011flexible}, consensus \cite{ren2007information, feng2014group, zhu2010leader, xiao2006state, liu2018exponential, li2018nonlinear, cao2015leader, shao2018leader, olfati2004consensus, zhang2010consensus}, containment \cite{li2016containment, li2015containment, liu2014containment, wang2014distributed, liu2015containment, zhao2015finite, zhao2015robust, notarstefano2011containment}, and continuum deformation \cite{ rastgoftar2016continuum, rastgoftar2017continuum, lal2006continuum} are available {\color{black}multi-agent system (MAS)} coordination methods. While the virtual structure (VS) method is commonly exploited for centralized coordination, the other three methods provide decentralized solutions. The VS method treats MAS as particles of a virtual rigid body; rigid body translation and rotation prescribes agents' trajectories in a 3-D motion space. 
Consensus algorithm stability has been analyzed under fixed and switching communication topologies \cite{feng2014group, yu2010group} and in the presence of fixed and time-varying delays \cite{peng2007distributed, zhu2010leader, xiao2006state, zhang2010consensus}. Finite-time consensus under fixed and switching communication topologies {\color{black}is} developed in Refs. \cite{liu2018exponential, li2018nonlinear}, while leader-follower consensus is investigated in Refs. \cite{cao2015leader, shao2018leader}.

In containment control, leaders independently guide collective motion, and followers acquire the desired coordination via local communication. Containment control stability and convergence \cite{ji2008containment} with fixed \cite{li2016containment} and switching \cite{li2015containment, liu2015containment, notarstefano2011containment} communication topologies have been analyzed in {\color{black}the} existing literature.  Retarded containment with fixed \cite{li2016containment} and time-varying \cite{liu2014containment, zhao2015robust} time-delays and finite-time containment control and coordination \cite{wang2014distributed, zhao2015finite} have {\color{black}also} been investigated. 

{\color{black}Defining agent coordination by continuum deformation\footnote{A continuum is {\color{black}defined as} a continuous domain containing an infinite number of particles with infinitesimal size \cite{lai2009introduction}.} was first introduced in Ref. \cite{lal2006continuum}.} {\color{black}Leader-follower continuum deformation proposed in \cite{rastgoftar2016continuum, rastgoftar2018asymptotic}} treats leader and follower agents as particles of a deformable body. Leaders form an $n$-D simplex containing follower agents during MAS evolution ($n=1,2,3$). A desired formation is defined by a homogeneous transformation (deformation) uniquely related to the trajectories of $n+1$ leaders. Followers acquire the desired homogeneous transformation in real-time through local communication and apply communication weights consistent with each agent's position in the reference configuration. {\color{black}Also,  Refs. \cite{zhao2018affine, xu2018affine} offer a leader-follower affine transformation method for  multi-agent coordination  where graph rigidity is explained to specify followers' communication weights based on agents' reference configuration.} Continuum deformation supports fixed \cite{rastgoftar2016continuum} and switching \cite{rastgoftar2017continuum} communication topologies. Ref. \cite{rastgoftar2016continuum} analyzes continuum deformation stability in {\color{black}the} presence of communication delays. Alignment and polyhedral communication topologies are analyzed in \cite{rastgoftar2016continuum, rastgoftar2014continuum}, and continuum deformation with more than $n+1$ moving leaders is studied in \cite{rastgoftar2018safe}.  Containment control and continuum deformation are both leader-follower methods. Continuum deformation extends containment control by assuring inter-agent collision avoidance as well as containment. 

{\color{black}
This paper offers a novel eigen-decomposition method  for continuum deformation coordination of a multi-vehicle system (MVS) in a 3-D motion space. {\color{black}This eigen-decomposition leads to a computationally-efficient and scalable continuum deformation coordination approach in an obstacle-free environment and a less conservative safety condition for inter-agent collision avoidance.} By relaxing limitations considered in the previous work, we advance {\color{black}the theory of} continuum deformation acquisition in obstacle-laden and obstacle-free environments. Furthermore, we study continuum deformation planning and optimization in a cluttered environment.} 
{\color{black} Compared to the available {\color{black}literature} and the authors' {\color{black}prior} work \cite{rastgoftar2018safe, rastgoftar2017continuum, rastgoftar2016continuum, rastgoftar2018asymptotic}, this article offers the following novel contributions:
\begin{itemize}
    \item{This paper advances continuum deformation coordination theory \cite{rastgoftar2016continuum, rastgoftar2017continuum, rastgoftar2014continuum, rastgoftar2018safe} by relaxing the containment requirement considered previously. {\color{black}Specifically,} we show {\color{black}in this paper} that any $n+1$ {\color{black}($n=1,2,3$)} {{vehicle}}s forming an $n$-D simplex can be selected as leaders. Followers, placed either inside or outside of the leading simplex, infer the desired continuum deformation in real time through local communication. 
    }
     \item{We advance {\color{black}the theory of} continuum deformation {\color{black}for} integrator agents toward continuum deformation of vehicles with {\color{black}input-output} linearizable dynamics. Assuming each vehicle has minimum-phase dynamics, {\color{black}this} paper guarantees inter-agent collision {\color{black}avoidance} in a motion governed by the continuum deformation algorithm with significant rotation and deformation {\color{black}possible}.}
     \item{{\color{black}The existing continuum deformation coordination method ensures inter-agent collision avoidance by assigning a single lower-limit for all deformation eigenvalues. This could make continuum deformation coordination restrictive specifically when agents are non-uniformly distributed in the reference configuration as agent minimum separation constraints are related to deformation matrix eigenvalues. This paper guarantees inter-agent collision avoidance by assigning a lower-limit {\color{black} on one of the eigenvalues of the pure deformation matrix that characterizes the minimum separation distance in the reference configuration of the vehicles,} while the {\color{black}other two} eigenvalues only need to be positive to maintain the requirement of the continuum deformation coordination (see Theorem 5 below). This paper also relaxes agent spacing requirements in regions where the single minimum eigenvalue was unnecessarily restrictive. This new less conservative strategy is advantageous because more aggressive continuum deformation maneuvers are possible with distinct lower limits for the deformation eigenvalues.} }
\end{itemize}

}

{\color{black}In this paper, MVS desired homogeneous transformation is uniquely represented by the following features: (i) A rotation matrix, (i) A positive definite deformation matrix defining principle deformations (eigenvalues) and their orientations (eigenvectors) along with a rigid-body displacement vector. A one-to-one mapping is obtained to relate leader trajectories defining an $n$-D homogeneous transformation to homogeneous transformation features. Safe continuum deformations will be planned either by shaping homogeneous transformation features or by choosing desired leader trajectories. In an obstacle-free environment, a large-scale continuum deformation coordination is planned strictly by shaping homogeneous transformation features. This is beneficial because safe leaders' trajectories, ensuring inter-agent collision avoidance, can be determined at low computational cost. Alternatively, desired homogeneous transformations through an obstacle-laden environment can be planned by optimizing leaders' trajectories such that the prescribed lower limit on homogeneous deformation eigenvalues required for obstacle collision avoidance is satisfied.  Case studies are presented illustrating how safety constraints {\color{black}can be incorporated into optimizing}
continuum deformation given initial and target MAS formations.}

This paper is organized as follows. Preliminaries presented in 
Section \ref{Preliminaries} are followed by {\color{black}inter-agent communication topology and graph theory definitions in Section \ref{Inter-Agent Communication}.} {\color{black}Section \ref{Problem Statement} presents the formulations and statements of the problems considered in this paper. MVS collective dynamics is obtained in Section \ref{MAS Collective Dynamics}.} Safety requirements of MVS continuum deformation coordination are obtained in Section \ref{SimpleSafety}. Continuum deformation planning is formulated in Section \ref{Continuum Deformation Optimization}. Case study results in Section \ref{simulation} are followed by a conclusion in Section \ref{Conclusion}.

\section{Preliminaries}
\label{Preliminaries}
\subsection{Position Notations}\label{Position Notations}
 Agent positions are expressed with respect to  a Cartesian frame with unit basis vectors $\hat{\mathbf{e}}_1$,  $\hat{\mathbf{e}}_2$, and  $\hat{\mathbf{e}}_3$. {\color{black}Expressing $\hat{\mathbf{e}}_1=\left[1~0~0\right]^T$,  $\hat{\mathbf{e}}_2=\left[0~1~0\right]^T$,  $\hat{\mathbf{e}}_3=\left[0~0~1\right]^T$,} the  paper defines the following position notations for every agent $i\in \mathcal{V}$:
 \\
 \textbf{Actual Position} vector of vehicle $i\in \mathcal{V}$  denoted by $\mathbf{r}_i=\left[x_i~y_i~z_i\right]^T$ is considered as the output of the control system of every vehicle $i\in \mathcal{V}$.
 \\
 \textbf{Initial Position} vector of vehicle $i\in \mathcal{V}$ is denoted by $\mathbf{r}_{i,s}=\left[x_{i,s}~y_{i,s}~z_{i,s}\right]^T\in \Omega_s$, where $\Omega_s\subset \mathbb{R}^3$ is a finite set.
 \\
 \textbf{Reference Position} vector of vehicle $i\in \mathcal{V}$ is denoted by $\mathbf{r}_{i,0}=\left[x_{i,0}~y_{i,0}~z_{i,0}\right]^T\in \Omega_0$, where $\Omega_0\subset \mathbb{R}^3$ is a finite set.
 \\
 \textbf{Global Desired Position} vector of vehicle $i\in \mathcal{V}$  denoted by $\mathbf{r}_{i,HT}=\left[x_{i,HT}~y_{i,HT}~z_{i,HT}\right]^T$ is defined by homogeneous deformation that is presented in Section \ref{MVS Homogeneous Deformation Coordination}.
 \\
 \textbf{Local Desired Position} vector of vehicle $i\in \mathcal{V}$ denoted by $\mathbf{r}_{d,i}=\left[x_{d,i}~y_{d,i}~z_{d,i}\right]^T$ is defined in Section \ref{MAS Collective Dynamics}.

{\color{black} 
 \subsection{Motion Space Discretization}
 Let $\mathbf{p}_1\in \mathbb{R}^{n+1}$, $\cdots$, $\mathbf{p}_{n+1}\in \mathbb{R}^{n+1}$, and $\mathbf{c}$ be position vectors of $n+2$ points in an $n$-D hyperplane. Defining a scalar function
 \begin{equation}
     \Psi_n\left(\mathbf{p}_1,\cdots,\mathbf{p}_{n+1}\right)=\mathrm{rank}\left(
     \begin{bmatrix}
         \mathbf{p}_2-\mathbf{p}_1&\cdots&\mathbf{p}_{n+1}-\mathbf{p}_1
     \end{bmatrix}
     \right), 
 \end{equation}
vectors  $\mathbf{p}_1\in \mathbb{R}^{n+1}$, $\cdots$, $\mathbf{p}_{n+1}\in \mathbb{R}^{n+1}$ assign positions of vertices of an $n$-D simplex, if $\Psi_n\left(\mathbf{p}_1,\cdots,\mathbf{p}_{n+1}\right)=n$.

If $\Psi_n\left(\mathbf{p}_1,\cdots,\mathbf{p}_{n+1}\right)=n$,  we can define a vector function
 \begin{equation}
 \label{Thetaaan}
     \mathbf{\Theta}_n\left(\mathbf{p}_1,\cdots,\mathbf{p}_{n+1},\mathbf{c}\right)=
     \begin{bmatrix}
     \mathbf{p}_1&\cdots&\mathbf{p}_{n+1}\\
     1&\cdots&1
     \end{bmatrix}
     ^{-1}
     \begin{bmatrix}
         \mathbf{c}\\
         1
     \end{bmatrix}
     .
 \end{equation}
 The following properties hold for the vector $\mathbf{\Theta}_n\in \mathbb{R}^{\left(n+1\right)\times 1}$:
 \begin{enumerate}
     \item{The sum of the entries of $\mathbf{\Theta}_n$ is $1$, i.e. $\mathbf{1}_{1\times \left(n+1\right)}\mathbf{\Theta}_n=1$, where $\mathbf{1}_{1\times \left(n+1\right)}\in \mathbb{R}^{1\times \left(n+1\right)}$ is a vector with all components equal to $1$.} 
     \item{ If $\mathbf{\Theta}_n\left(\mathbf{p}_1,\cdots,\mathbf{p}_{n+1},\mathbf{c}\right)>\mathbf{0}$, then, point $\mathbf{c}$ is inside the simplex made by  $\mathbf{p}_1$, $\cdots$, and $\mathbf{p}_{n+1}$. Otherwise $\mathbf{c}$ is outside.}
 \end{enumerate}

 }

\subsection{Rotation {\color{black}Matrix} }Angles $\beta_1$, $\beta_2$, and $\beta_3$ define rotation {\color{black}matrix} 
\begin{equation}\label{rotationmatrix}
\resizebox{0.99\hsize}{!}{%
$
    \mathbf{R}\left(\beta_1,\beta_2,\beta_3\right)=\begin{bmatrix}
    C_{\beta_2} C_{\beta_3}&  C_{\beta_2} S_{\beta_3} &-S_{\beta_2}\\
S_{\beta_1}S_{\beta_2} C_{\beta_3}-C_{\beta_1}S_{\beta_3}&S_{\beta_1}S_{\beta_2} S_{\beta_3}+C_{\beta_1}C_{\beta_3}&S_{\beta_1}C_{\beta_2}\\
   C_{\beta_1}S_{\beta_2} C_{\beta_3}+S_{\beta_1}S_{\beta_3} &C_{\beta_1}S_{\beta_2} S_{\beta_3}-S_{\beta_1}C_{\beta_3}&C_{\beta_1}C_{\beta_2}
\end{bmatrix}
,
$
}
\end{equation}
where $C_{\left(\cdot\right)}$ and $S_{\left(\cdot\right)}$ abbreviate $\cos{\left(\cdot\right)}$ and $\sin{\left(\cdot\right)}$, respectively. Rotation {\color{black}matrix} $\mathbf{R}\left(\beta_1,\beta_2,\beta_3\right)$ has the following properties:
\begin{enumerate}
    \item{Orthonormal: $\mathbf{R}^T\left(\beta_1,\beta_2,\beta_3\right)\mathbf{R}\left(\beta_1,\beta_2,\beta_3\right)=\mathbf{I}_3$ where $\mathbf{I}_3\in \mathbb{R}^{3\times 3}$ is the identity matrix.}
    \item{$\mathbf{R}(\beta_1,\beta_2,\beta_3)=\mathbf{R}(\beta_1,0,0)\mathbf{R}(0,\beta_2,0)\mathbf{R}(0,0,\beta_3)$.}
    \item{$\mathbf{R}(0,0,0)=\mathbf{I}_3$.}
\end{enumerate}


\subsection{MVS Homogeneous Deformation Coordination}\label{MVS Homogeneous Deformation Coordination}
 {\color{black}We consider collective motion of an MVS consisting of $N$ vehicles that are identified by index numbers defined by set $\mathcal{V}_R$.} {\color{black}We treat the {\color{black}vehicles constituting} the MVS as particles of a deformable body, where}
the global desired position of vehicle $i\in \mathcal{V}_R$, denoted $\mathbf{r}_{i,HT}=x_{i,HT}\hat{\mathbf{e}}_1+y_{i,HT}\hat{\mathbf{e}}_2+z_{i,HT}\hat{\mathbf{e}}_3$, is defined by homogeneous transformation. 

\begin{equation}
\label{homogtrans}
   \mathbf{r}_{i,HT}(t)=\mathbf{Q}(t)\mathbf{r}_{i,0}+\mathbf{d}(t),
\end{equation}
where {\color{black}$t$ is the current time,} $t_0$ is the reference time, $\mathbf{Q}\in \mathbb{R}^{3\times 3}$ is the Jacobian matrix, and {\color{black}$\mathbf{d}=\left[d_1~d_2~d_3\right]\in  \mathbb{R}^{3\times 1}$} is a  \emph{rigid body displacement vector}. Note that $\mathbf{Q}{\color{black}(t)}$ is nonsingular at time $t$.
Schematics of $1$-D, $2$-D, and $3$-D MVS reference configurations are illustrated, where reference position of agent $i\in \mathcal{V}$ is given by
\begin{equation}
    \mathbf{r}_{i,0}=
    \begin{cases}
    \left[x_{i,0}~0~0\right]^T&n=1\\
    \left[x_{i,0}~y_{i,0}~0\right]^T&n=2\\
    \left[x_{i,0}~y_{i,0}~z_{i,0}\right]^T&n=3\\
    \end{cases}
    .
\end{equation}
{\color{black}
\begin{assumption}
Vehicles' initial positions are given by 
\begin{equation}
\label{initialformation}
  \mathbf{r}_{i,s}=\mathbf{Q}_s\mathbf{r}_{i,0}+\mathbf{d}_s,
\end{equation}
where $\mathbf{Q}_s=\mathbf{Q}(t_s)$ and $\mathbf{d}_s=\mathbf{d}(t_s)$ denote  Jacobian matrix and  rigid-body displacement vector, respectively at the initial time instant $t=t_s$. {\color{black}This paper assumes that $\mathbf{Q}_s$ is an orthogonal matrix.}
\end{assumption}
}

\begin{figure}
 \centering
 \subfigure[]{\includegraphics[width=0.3\linewidth]{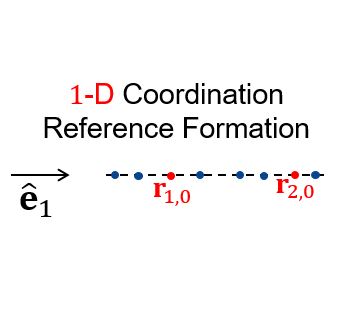}}
  \subfigure[]{\includegraphics[width=0.3\linewidth]{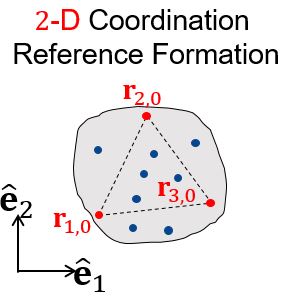}}
  \subfigure[]{\includegraphics[width=0.3\linewidth]{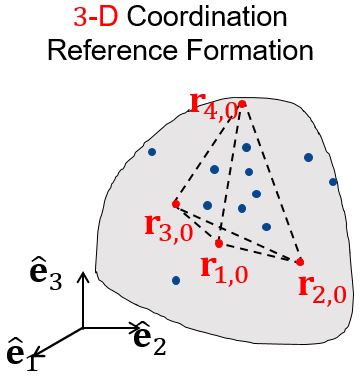}}
     \caption{Schematic of (a) $1$-D, (b) $2$-D, and (c) $3$-D reference configurations. }
\label{ReferenceSchematic}
\end{figure}

{

\subsubsection{Leader-Follower Homogeneous Deformation Coordination}
If $\mathbf{Q}$ and $\mathbf{d}$ are known at a time $t$,
leaders' trajectories can be assigned using the homogeneous transformation given in \eqref{homogtrans}. However, obstacle collision avoidance may not be necessarily guaranteed when Eq. \eqref{homogtrans} is directly used to define a continuum deformation coordination. {\color{black}This issue can be handled by} defining a large-scale continuum deformation coordination  by choosing appropriate leaders' trajectories {\color{black}that avoid} obstacles rather than shaping $\mathbf{Q}$ and $\mathbf{d}$ at any time $t$.
{\color{black}
\begin{assumption}\label{leaderrankcondition}
It is assumed that leader vehicles  $1$, $2$, $\cdots$, ${n+1}$ form an $n$-D simplex {\color{black}in the reference configuration}. Therefore,
\begin{equation}
\begin{split}
 \Psi_n\left(\mathbf{r}_{1,0},\cdots,\mathbf{r}_{{n+1},0}\right)=n.
\end{split}
\end{equation}
\end{assumption}
Because leaders' reference positions satisfy Assumption \ref{leaderrankcondition}}
global desired position of vehicle $i\in \mathcal{V}_R$ {\color{black}at $t=t_0$} is expressed as a linear combination of leader positions \cite{rastgoftar2016continuum}:
\begin{equation}
\label{leadersexpression}
    i\in \mathcal{V}_R,~t\geq t_s,\qquad \mathbf{r}_{i,HT}=\sum_{j=1}^{n+1}\alpha_{i,j}\mathbf{r}_{j,HT}(t),
\end{equation}
where $\alpha_{i,1}$, $\alpha_{i,2}$, $\cdots$, $\alpha_{i,n+1}$ are  {\color{black}called \textit{reference}} \emph{$\alpha$-parameters} and obtained by 

\begin{equation}
\label{communicationwithleaders}
    \begin{bmatrix}
    \alpha_{i,1}&
    \cdots&
    \alpha_{i,n+1}
    \end{bmatrix}
    ^T
    =\mathbf{\Theta}_n\left(\mathbf{r}_{1,0},\cdots,\mathbf{r}_{n+1,0},\mathbf{r}_{i,0}\right).
\end{equation}
}
\begin{table*}
    \small
    \centering
    \caption{$n$-dimensional Homogeneous Deformation Parameters}
    \begin{tabular}{|c| c| c| c| c| c| c| c| c |c |c |c| c| }
    \hline
    $n$&$\lambda_1$&$\lambda_2$&$\lambda_3$&$\phi_u$&$\theta_u$&$\psi_u$&$\phi_r$&$\theta_r$&$\psi_r$&$d_1$&$d_2$&$d_3$  \\
    \hline
         $1$&$>0$&$=1$&$=1$&$=0$&$=0$&$=0$&$=0$&$\in[-\pi/2,\pi/2]$&$\in[0,2\pi]$&$\in \mathbb{R}$&$\in \mathbb{R}$& $\in \mathbb{R}$ \\
         \hline
         $2$&$>0$&$>0$&$=1$&$=0$&$=0$&$\in[0,2\pi]$&$[0,\pi]$&$[0,2\pi]$&$[0,2\pi]$&$\in \mathbb{R}$&$\in \mathbb{R}$& $\in \mathbb{R}$ \\
         \hline
         $3$&$>0$&$>0$&$>0$&$\in[0,\pi]$&$\in[0,2\pi]$&$\in[0,2\pi]$&$\in[0,\pi]$&$\in[0,2\pi]$&$\in[0,2\pi]$&$\in \mathbb{R}$&$\in \mathbb{R}$& $\in \mathbb{R}$ \\
         \hline
    \end{tabular}
    \label{tab:l}
\end{table*}

\subsubsection{Homogeneous Deformation Decomposition}
{\color{black}Matrix $\mathbf{Q}(t)$ in \eqref{homogtrans} can be decomposed as
\begin{equation}
\label{DECOM}
        \mathbf{Q}(t)=\mathbf{R}_D(t)\mathbf{U}_D(t),
\end{equation}
where
\begin{subequations}
\label{ROTDEF}
\begin{equation}
    \mathbf{R}_D(t)=\mathbf{R}\left(\phi_r(t),\theta_r(t),\psi_r(t)\right),
\end{equation}
\begin{equation}
    \mathbf{U}_D(t)=\sum_{i=1}^3\lambda_i\hat{\mathbf{u}}_i\left(\phi_u(t),\theta_u(t),\psi_u(t)\right)\hat{\mathbf{u}}_i^T\left(\phi_u(t),\theta_u(t),\psi_u(t)\right),
\end{equation}
\begin{equation}
\label{ui}
i=1,2,3,\qquad    \hat{\mathbf{u}}_i=\mathbf{R}^T\left(\phi_u(t),\theta_u(t),\psi_u(t)\right)\hat{\mathbf{e}}_i.
\end{equation}
\end{subequations}
{\color{black}Note that $\hat{\mathbf{u}}_1$, $\hat{\mathbf{u}}_2$, and $\hat{\mathbf{u}}_3$ are the eigenvectors of $\mathbf{U}_D$ while $\hat{\mathbf{e}}_1$, $\hat{\mathbf{e}}_2$, and $\hat{\mathbf{e}}_3$ are the base vectors of the inertial coordinate system defined in Section \ref{Position Notations}.}
A desired homogeneous transformation \eqref{homogtrans} can thus be uniquely expressed by the following features: (i) Rotation angles $\phi_r{\color{black}(t)}$, $\theta_r{\color{black}(t)}$, $\psi_r{\color{black}(t)}$, (ii) {\color{black}Deformation} eigenvalues $\lambda_1{\color{black}(t)}$, $\lambda_2{\color{black}(t)}$, $\lambda_3{\color{black}(t)}$, (iii) {\color{black}Deformation} angles $\phi_u$, $\theta_u$, and $\psi_u$, and {\color{black}(iv)} Rigid body displacement components $d_1{\color{black}(t)}$, $d_2{\color{black}(t)}$, and $d_3{\color{black}(t)}$. {\color{black}{\color{black}The} decomposition of $n$-D homogeneous deformation coordination {\color{black}into such features} is detailed in Appendix \ref{APA}, where $n=1,2,3$.} 
The features are all real-valued signals with {\color{black}ranges} given in Table \ref{tab:l}.
}

\section{\hspace{0.1cm}Inter-Agent Communication}\label{Inter-Agent Communication}

Suppose an MVS consists of $N$ vehicles moving in a 3-D motion space. The set $\mathcal{V}_R=\{1,2,\cdots,N\}$ {\color{black}defining} identification (index) numbers of the vehicles {\color{black}is expressed as} $\mathcal{V}_R=\mathcal{V}_L\bigcup \mathcal{V}_F$ where $\mathcal{V}_L$ and $\mathcal{V}_F$ define index numbers of leaders and followers, respectively. The paper considers cases in which vehicles are distributed on an $n$-D ($n=1,2,3$) {\color{black}Euclidean} space in $\mathbb{R}^3$. The MVS is guided by $n+1$ leaders with index numbers $\mathcal{V}_L=\{1,2,\cdots,n+1\}$.
Followers' index numbers are defined by the set $ \mathcal{V}_F=\{n+2,\cdots,N\}$.

\begin{figure}
\center
\includegraphics[width=2.0 in]{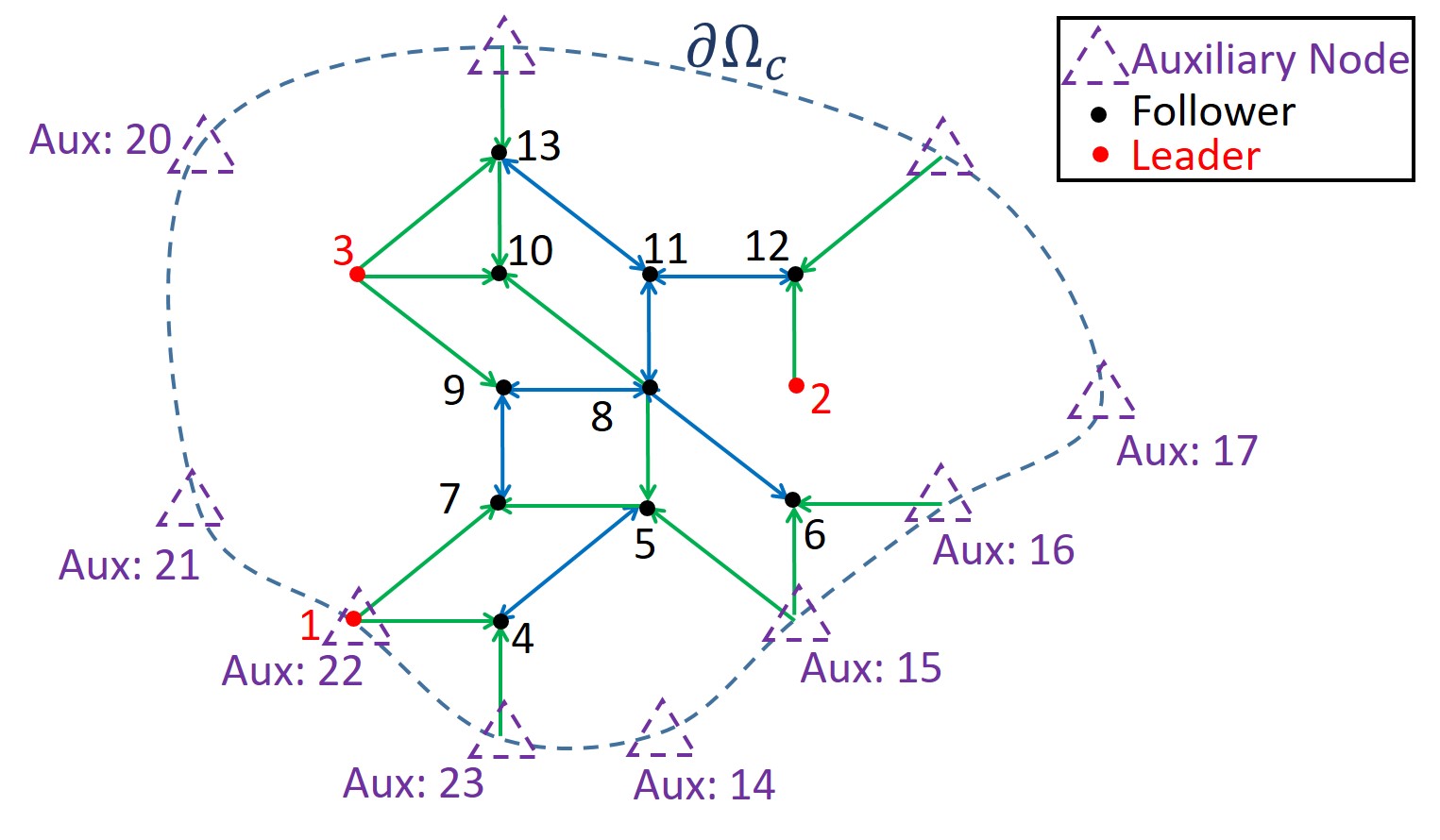}
\caption{Schematic of a communication graph with real and auxiliary nodes used in a $2$-D continuum deformation.  }
\label{schematicauxiliaryyyy}
\end{figure}

{\color{black}Let $\Omega_c$ be an arbitrary closed domain enclosing  all real vehicles at a reference configuration, where $N_a$ auxiliary nodes are arbitrarily distributed on the boundary $\partial \Omega_c$ and identified by the set $\mathcal{V}_{aux}=\{N+1,N+2,\cdots,N+N_a\}$.
}
{\color{black}
\begin{remark}
Auxiliary nodes do not represent real agents and they are defined only to ensure MVS collective motion stability. 
\end{remark}
}

\subsection{Reference Communication Weights and Weight Matrix} 
Inter-agent communication is defined by graph $\mathcal{G}_w=\mathcal{G}_w\left(\mathcal{V},\mathcal{E}_w\right)$ with  node set $\mathcal{V}$ and edge set $\mathcal{E}_w\in \mathcal {V}\times \mathcal{V}$.  $\mathcal{V}$ defines real and auxiliary (virtual) agents, e.g.  $\mathcal{V}=\mathcal{V}_R\bigcup\mathcal{V}_{aux}$ where $\mathcal{V}_R$ and $\mathcal{V}_{aux}$ define real and auxiliary vehicle index numbers, respectively. {\color{black}For every node $i\in \mathcal{V}$, \textit{reference in-neighbor} set $\mathcal{N}_i=\{j\in \mathcal{V}\big|(j,i)\in \mathcal{E}_w\}$ defines the in-neighbor nodes in the reference configuration.}

{\color{black}An example communication graph $\mathcal{G}_w$ for a $2$-D MVS coordination is shown in Fig. \ref{schematicauxiliaryyyy}. Real nodes are defined by $\mathcal{V}_{R}=\{1,\cdots,13\}$, where $\mathcal{V}_L=\{1,2,3\}$ and $\mathcal{V}_F=\{4,\cdots,13\}$ define leaders and followers, respectively. $\mathcal{V}_{aux}=\{14,\cdots,23\}$ defines auxiliary nodes. {\color{black}The set of all nodes in given by $\mathcal{V}=\{1,\cdots,23\}$.} Each auxiliary node communicates with all three leaders, {\color{black}and} communication between auxiliary nodes and leaders is not shown in Fig. \ref{schematicauxiliaryyyy}. {\color{black}An} auxiliary node {\color{black}may or may not} be {\color{black}coincident with} a real node positioned at boundary $\partial \Omega_c$ at reference time $t_0$. As shown in Fig. \ref{schematicauxiliaryyyy}, real agent 1 and auxiliary node 22 are {\color{black}coincident}.}
{\color{black}
\begin{assumption}
It is assumed that every leader $i\in \mathcal{V}_L$ moves independently. Therefore, $\mathcal{N}_i=\emptyset$, if $i\in \mathcal{V}_L$. {\color{black} Furthermore, considering that leaders are in-neighbors of auxiliary nodes,  we can say
\begin{equation}
    i\in \mathcal{V}_{aux},\qquad \mathcal{N}_i=\mathcal{V}_L.
\end{equation}}
\end{assumption}
\begin{assumption}\label{graphassumption}
{\color{black}Graph $\mathcal{G}_w\left(\mathcal{V},\mathcal{E}_w\right)$ defines at least one directed path from every leader $j\in \mathcal{V}_L$ toward vehicle $i\in \mathcal{V}_F\bigcup \mathcal{V}_{aux}$, where $\left|\mathcal{N}_i\right|={\color{black}n+1}$, i.e. every non-leader vehicle communicates with three in-neighbors defined by $\mathcal{N}_i$.}
\end{assumption}
}

{\color{black}

\begin{assumption}\label{graphpositive}
It is assumed that in-neighbor vehicles  $i_1$, $i_2$, $\cdots$, $i_{n+1}$ form an $n$-D simplex enclosing follower $i\in \mathcal{V}_F$ in the reference configuration. Therefore,
\begin{subequations}
\begin{equation}
\label{INNEIGHBOR}
\begin{split}
   \forall i\in \mathcal{V}_R,\qquad \Psi_n\left(\mathbf{r}_{i_1,0},\cdots,\mathbf{r}_{i_{n+1},0}\right)=n,
\end{split}
\end{equation}
\begin{equation}
\begin{split}
   \forall i\in \mathcal{V}_R,\qquad \mathbf{\Theta}_n\left(\mathbf{r}_{i_1,0},\cdots,\mathbf{r}_{i_{n+1},0},\mathbf{r}_{i,0}\right)>\mathbf{0}.
\end{split}
\end{equation}
\end{subequations}

\end{assumption}
}

{\color{black}Defining reference in-neighbor set of follower $i\in  \left(\mathcal{V}_F\bigcup \mathcal{V}_{aux}\right)$ as $\mathcal{N}_i=\{i_1,\cdots,i_{n+1}\}$,}
the communication weight between  $i\in \left(\mathcal{V}_F\bigcup \mathcal{V}_{aux}\right)$ and in-neighbor vehicle $i_k\in \mathcal{V}$ ($k=1,2,\cdots,n+1$) is denoted by $w_{i,i_k}$ and obtained as follows:
{\color{black}
\begin{equation}
    \label{communicationwitfollowers}
       \begin{bmatrix}
        w_{i,i_1}&
        \cdots&
        w_{i,i_{n+1}}
    \end{bmatrix}
    ^T
    =
    \mathbf{\Theta}_n\left(\mathbf{r}_{i_1,0},\cdots,\mathbf{r}_{i_{n+1},0},\mathbf{r}_{i,0}\right),
\end{equation}
where $ n=1,2,3$ is the dimension of the homogeneous deformation coordination and $\mathbf{\Theta}_n$ is defined by  \eqref{Thetaaan}.
}
\begin{remark}
If vehicle $i$ is a follower ($i\in \mathcal{V}_F$), communication weights $w_{i,i_1}$ through $w_{i,i_{n+1}}$ are all positive. This is because the communication simplex defined by $i_1$, $i_2$, $\cdots$, $i_{n+1}$ encloses follower $i\in \mathcal{V}_F$.
\end{remark}

 We define the weight matrix $\mathbf{W}=\left[W_{ij}\right]\in \mathbb{R}^{\left(N+N_a\right)\times \left(N+N_a\right)}$ as follows:
\begin{equation}
\label{comweightdefinition}
    \mathbf{W}_{ij}=
    \begin{cases}
    -1&i=j\\
    w_{i,j}>0&i\in \left(\mathcal{V}_F\bigcup\mathcal{V}_{aux}\right),~j\in \mathcal{N}_i\\
    0&\mathrm{{\color{black}otherwise}}.
    \end{cases}
\end{equation}
The matrix $\mathbf{W}$ can be partitioned as follows:
\begin{equation}
    \mathbf{W}=
    \begin{bmatrix}
    -\mathbf{I}_{n+1}&\mathbf{0}_{\left(n+1\right)\times\left(N-n-1\right)}&\mathbf{0}_{\left(n+1\right)\times N_a}\\
    \mathbf{W}_{f,l}&\mathbf{A}&\mathbf{W}_{f,a}\\
    \mathbf{W}_{a,l}&\mathbf{0}_{N_a\times\left(N-n-1\right)}&-\mathbf{I}_{N_a}\\
    \end{bmatrix}
    ,
\end{equation}
where $\mathbf{I}_{n+1}\in \mathbb{R}^{\left(n+1\right)\times \left(n+1\right)}$ and $\mathbf{I}_{N_a}\in \mathbb{R}^{N_a\times N_a}$ are the identity matrices,  $\mathbf{0}_{\left(n+1\right)\times \left(N-n-1\right)}\in \mathbb{R}^{\left(n+1\right)\times \left(N-n-1\right)}$, $\mathbf{0}_{\left(n+1\right)\times N_a}\in \mathbb{R}^{\left(n+1\right)\times N_a}$, and $\mathbf{0}_{N_a\times \left(N-n-1\right)}\in \mathbb{R}^{N_a\times \left(N-n-1\right)}$ are the zero-entry matrices, $\mathbf{W}_{f,l}\in \mathbb{R}^{\left(N-n-1\right)\times \left(n+1\right)}$ and $\mathbf{W}_{f,a}\in \mathbb{R}^{\left(N-n-1\right)\times N_a}$ are non-negative matrices, and $\mathbf{A}\in \mathbb{R}^{\left(N-n-1\right)\times \left(N-n-1\right)}$. Also,
\begin{equation}
    \mathbf{W}_{a,l}
    =
    \begin{bmatrix}
    \mathbf{\Theta}_n^T\left(\mathbf{r}_{1,0},\cdots,\mathbf{r}_{n+1,0},\mathbf{r}_{N+1,0}\right)\\
    \vdots\\
    \mathbf{\Theta}_n^T\left(\mathbf{r}_{1,0},\cdots,\mathbf{r}_{n+1,0},\mathbf{r}_{N+N_a,0}\right)\\
    \end{bmatrix}
    \in \mathbb{R}^{N_a\times 3},
\end{equation}
is one-sum row. where $\alpha_{i,k}$ ($i\in \mathcal{V}_{aux}$, $k\in \mathcal{V}_L$) is assigned by Eq. \eqref{communicationwithleaders}. 
\begin{proposition}\label{proposssssition1}
Let $\mathbf{z}_{q,l,0}=[q_{1,0}~\cdots~q_{n+1,0}]^T$, $\mathbf{z}_{q,f,0}=[q_{n+2,0}~\cdots~q_{N,0}]^T$, and $\mathbf{z}_{q,aux,0}=[q_{N+1,0}~\cdots~q_{N+N_a,0}]^T$
define component $q\in \{x,y,z\}$ of leaders, followers, and auxiliary nodes, respectively. Then, $\mathbf{z}_{q,f,0}$ is related to $\mathbf{z}_{q,l,0}$ by
\begin{equation}
\label{leaderfollowerreferencerelation}
\mathbf{L}
\begin{bmatrix}
\mathbf{z}_{q,l,0}\\
\mathbf{z}_{q,f,0}\\
\end{bmatrix}
=\begin{bmatrix}
-\mathbf{z}_{q,l,0}\\
{\color{black}\mathbf{0}}
\end{bmatrix}
\end{equation}
where
\begin{equation}
\label{LLLLL}
    \mathbf{L}=
    \begin{bmatrix}
 -\mathbf{I}_{n+1}&\mathbf{0}_{\left(n+1\right)\times\left(N-n-1\right)}\\
    {\color{black}\mathbf{B}}&\mathbf{A}\\
\end{bmatrix}
,
\end{equation}
and 
\begin{equation}
    {\color{black}\mathbf{B}=\mathbf{W}_{f,l}+\mathbf{W}_{f,a}\mathbf{W}_{a,l}.}
\end{equation}
\end{proposition}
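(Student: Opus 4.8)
The plan is to derive \eqref{leaderfollowerreferencerelation} directly from the definition \eqref{comweightdefinition} of the weight matrix $\mathbf{W}$ and the fact that each row of $\mathbf{W}$ associated with a follower or auxiliary node is one-sum (its diagonal entry is $-1$ and its off-diagonal entries are the barycentric weights $w_{i,i_k}$, which sum to $1$ by property 1 of $\mathbf{\Theta}_n$ in Section \ref{Position Notations}). The key identity is the \emph{reproducing property} of barycentric coordinates: since $w_{i,i_1},\dots,w_{i,i_{n+1}}$ are the components of $\mathbf{\Theta}_n\left(\mathbf{r}_{i_1,0},\dots,\mathbf{r}_{i_{n+1},0},\mathbf{r}_{i,0}\right)$, the defining equation \eqref{Thetaaan} gives $\sum_{k=1}^{n+1} w_{i,i_k}\,\mathbf{r}_{i_k,0}=\mathbf{r}_{i,0}$ and $\sum_{k=1}^{n+1} w_{i,i_k}=1$. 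Taking the $q$-component ($q\in\{x,y,z\}$) of the first identity, for every node $i\in\mathcal{V}_F\cup\mathcal{V}_{aux}$ we get $\sum_{j\in\mathcal{N}_i} w_{i,j}\,q_{j,0}-q_{i,0}=0$, i.e. the $i$-th entry of $\mathbf{W}\,\mathbf{z}_{q,0}$ is zero, where $\mathbf{z}_{q,0}=[\,\mathbf{z}_{q,l,0}^T\;\mathbf{z}_{q,f,0}^T\;\mathbf{z}_{q,aux,0}^T\,]^T$. For the leader rows, $\mathbf{W}$ restricted to $\mathcal{V}_L$ is just $-\mathbf{I}_{n+1}$, so those rows of $\mathbf{W}\,\mathbf{z}_{q,0}$ equal $-\mathbf{z}_{q,l,0}$.

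Next I would eliminate the auxiliary coordinates. Reading the block structure of $\mathbf{W}$, the auxiliary-node rows give $\mathbf{W}_{a,l}\,\mathbf{z}_{q,l,0}-\mathbf{z}_{q,aux,0}=\mathbf{0}$, hence $\mathbf{z}_{q,aux,0}=\mathbf{W}_{a,l}\,\mathbf{z}_{q,l,0}$; this is consistent because $\mathbf{W}_{a,l}$ is one-sum row with entries $\mathbf{\Theta}_n^T\left(\mathbf{r}_{1,0},\dots,\mathbf{r}_{n+1,0},\mathbf{r}_{N+\ell,0}\right)$ and each auxiliary node communicates exactly with the $n+1$ leaders. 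Substituting this expression for $\mathbf{z}_{q,aux,0}$ into the follower rows $\mathbf{W}_{f,l}\,\mathbf{z}_{q,l,0}+\mathbf{A}\,\mathbf{z}_{q,f,0}+\mathbf{W}_{f,a}\,\mathbf{z}_{q,aux,0}=\mathbf{0}$ yields $\left(\mathbf{W}_{f,l}+\mathbf{W}_{f,a}\mathbf{W}_{a,l}\right)\mathbf{z}_{q,l,0}+\mathbf{A}\,\mathbf{z}_{q,f,0}=\mathbf{0}$, which is exactly the lower block of \eqref{leaderfollowerreferencerelation} with $\mathbf{B}=\mathbf{W}_{f,l}+\mathbf{W}_{f,a}\mathbf{W}_{a,l}$ as in \eqref{LLLLL}. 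Stacking this with the trivial leader identity $-\mathbf{I}_{n+1}\,\mathbf{z}_{q,l,0}=-\mathbf{z}_{q,l,0}$ gives the claimed system with the block matrix $\mathbf{L}$.

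I expect the main obstacle to be purely bookkeeping rather than conceptual: one must be careful that the in-neighbor index set $\mathcal{N}_i=\{i_1,\dots,i_{n+1}\}$ for a given $i$ may contain leaders, followers, \emph{and} auxiliary nodes, so the single scalar equation $\sum_{j\in\mathcal{N}_i}w_{i,j}q_{j,0}=q_{i,0}$ has to be re-sorted into the leader/follower/auxiliary block partition before the elimination of $\mathbf{z}_{q,aux,0}$ is legitimate — this is what produces the $\mathbf{W}_{f,l}$, $\mathbf{A}$, $\mathbf{W}_{f,a}$ blocks and, on the auxiliary side, the $\mathbf{W}_{a,l}$ block together with the zero blocks (auxiliary nodes never talk to followers or to each other, by the assumption $\mathcal{N}_i=\mathcal{V}_L$ for $i\in\mathcal{V}_{aux}$). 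One should also note that the reproducing property used here is valid for \emph{every} $i\in\mathcal{V}_R\cup\mathcal{V}_{aux}$ with $|\mathcal{N}_i|=n+1$, regardless of whether the barycentric weights are all positive (follower case) or not (a non-leader real node lying outside its in-neighbor simplex); positivity is only needed elsewhere for collision-avoidance arguments, not for this algebraic identity. Once the partitioned equations are written down, the substitution is a one-line matrix manipulation and the result follows.
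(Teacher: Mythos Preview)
Your proposal is correct and follows essentially the same route as the paper: establish $\mathbf{W}\,[\mathbf{z}_{q,l,0}^T\ \mathbf{z}_{q,f,0}^T\ \mathbf{z}_{q,aux,0}^T]^T$ using the barycentric reproducing property, read off $\mathbf{z}_{q,aux,0}=\mathbf{W}_{a,l}\mathbf{z}_{q,l,0}$ from the auxiliary block, and substitute into the follower block to obtain \eqref{leaderfollowerreferencerelation}. Your treatment is in fact more careful than the paper's, which writes the full product as $\mathbf{0}$ even though the leader rows actually yield $-\mathbf{z}_{q,l,0}$; you handle that block correctly.
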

\begin{proof}
When {\color{black}reference} communication weights 
and $\alpha$ parameters 
are consistent with agents' reference positions and assigned using Eqs. \eqref{communicationwitfollowers} and \eqref{communicationwithleaders},
\begin{equation}
\label{INEFFICIENTT}
\mathbf{W}
\begin{bmatrix}
\mathbf{z}_{q,l,0}^T&
\mathbf{z}_{q,f,0}^T&
\mathbf{z}_{q,aux,0}^T
\end{bmatrix}
^T
=\mathbf{0}.
\end{equation}
Substituting $\mathbf{z}_{q,aux,0}= \mathbf{W}_{a,l}\mathbf{z}_{q,l,0}$, Eq. \eqref{INEFFICIENTT} simplifies to {\color{black}relation \eqref{leaderfollowerreferencerelation}.}

\end{proof}



{\color{black}
\begin{theorem}\label{theorem:3}
If Assumptions \ref{leaderrankcondition}, \ref{graphassumption}, and \ref{graphpositive} are satisfied, then the following properties hold:
\begin{enumerate}
    \item{Matrix $\mathbf{A}\in \mathbb{R}^{\left(N-n-1\right)\times \left(N-n-1\right)}$ is Hurwitz.}
    \item{For an arbitrary placement of the auxiliary agents on $\partial \Omega_c$, 
\begin{equation}
\label{originalWL}
    \mathbf{W}_L=\mathbf{A}^{-1}{\color{black}\mathbf{B}}\in \mathbb{R}^{\left(N-n-1\right)\times \left(n+1\right)}
\end{equation}
is {\color{black} i.e., } {\color{black}the} sum of {\color{black}the} row-elements is one for every row of matrix $\mathbf{W}_L$, where $\Omega_c$ is an arbitrary closed domain enclosing MVS reference configuration ($\Omega_0\subset\Omega_c$).}
\end{enumerate}

\end{theorem}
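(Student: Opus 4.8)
The plan is to regard $-\mathbf{A}$ as a weakly diagonally dominant Z-matrix and combine the sign pattern that \eqref{comweightdefinition}--\eqref{communicationwitfollowers} force on $\mathbf{W}$ with the reachability hypothesis of Assumption \ref{graphassumption}. Throughout, Assumptions \ref{leaderrankcondition} and \ref{graphpositive} are what guarantee that all the $\mathbf{\Theta}_n$ quantities appearing in \eqref{communicationwitfollowers} are well defined (their defining matrices are invertible) and that, for a follower $i\in\mathcal{V}_F$, the weights $w_{i,i_1},\dots,w_{i,i_{n+1}}$ are strictly positive and sum to one.

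First I would extract the structure of $\mathbf{A}$. By \eqref{comweightdefinition} the diagonal of $\mathbf{A}$ is $-1$ and its off-diagonal entries are the nonnegative follower-to-follower weights, so $\mathbf{A}=-\mathbf{I}_{N-n-1}+\mathbf{M}$ with $\mathbf{M}\ge\mathbf{0}$. Since each follower's weights sum to one while $-1$ sits on the diagonal, every row of $[\,\mathbf{W}_{f,l}\ \ \mathbf{A}\ \ \mathbf{W}_{f,a}\,]$ sums to zero; hence row $i$ of $\mathbf{M}$ has sum $1-\sigma_i$, where $\sigma_i\ge0$ is the total weight follower $i$ places on its leader and auxiliary in-neighbors, and $\sigma_i>0$ exactly when $\mathcal{N}_i\not\subseteq\mathcal{V}_F$. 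In particular $\|\mathbf{M}\|_\infty\le1$, so $\rho(\mathbf{M})\le1$, and Gershgorin alone only gives $\mathrm{Re}\,\lambda(\mathbf{A})\le0$; the work is to upgrade this to a strict inequality.

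The core step is $\rho(\mathbf{M})<1$, which I would prove combinatorially. Fix a follower $i$ and let $S\subseteq\mathcal{V}_F$ be the set of followers reachable from $i$ by chains that stay in $\mathcal{V}_F$ and step each time to an in-neighbor. If $\sigma_v=0$ for all $v\in S$, then $\mathcal{N}_v\subseteq\mathcal{V}_F$ for each such $v$, and in fact $\mathcal{N}_v\subseteq S$, so $S$ is closed under taking in-neighbors. Assumption \ref{graphassumption} provides a directed path from some leader $\ell$ to $i$; walking it backwards from $i\in S$ and using closedness of $S$ at each step forces $\ell\in S\subseteq\mathcal{V}_F$, contradicting that $\ell$ is a leader. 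Hence some $i^\ast\in S$ has $\sigma_{i^\ast}>0$: from every follower there is a positive-weight chain in the dependency digraph of $\mathbf{M}$ to a strictly sub-stochastic row. Reading $\mathbf{M}$ as the transient block of an absorbing Markov chain (the leakage $\sigma_i$ routed to a single absorbing state) then shows all states transient, i.e. $\mathbf{M}^k\to\mathbf{0}$ and $\rho(\mathbf{M})<1$; equivalently $-\mathbf{A}=\mathbf{I}_{N-n-1}-\mathbf{M}$ is a nonsingular M-matrix. Consequently every eigenvalue of $\mathbf{A}$ equals $-1+\mu$ with $|\mu|\le\rho(\mathbf{M})<1$, so $\mathrm{Re}(-1+\mu)<0$ and $\mathbf{A}$ is Hurwitz; in particular $\mathbf{A}$ is invertible and $\mathbf{W}_L$ in \eqref{originalWL} is well defined.

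For the second claim I would test the identities above against the all-ones vector, which is exactly where the independence from the auxiliary-node placement surfaces. The zero-row-sum property of $[\,\mathbf{W}_{f,l}\ \ \mathbf{A}\ \ \mathbf{W}_{f,a}\,]$ reads $\mathbf{W}_{f,l}\mathbf{1}_{n+1}+\mathbf{A}\mathbf{1}_{N-n-1}+\mathbf{W}_{f,a}\mathbf{1}_{N_a}=\mathbf{0}$; since the rows of $\mathbf{W}_{a,l}$ are values of $\mathbf{\Theta}_n$ and therefore sum to one for \emph{any} distribution of the auxiliary nodes on $\partial\Omega_c$, we get $\mathbf{W}_{f,a}\mathbf{1}_{N_a}=\mathbf{W}_{f,a}\mathbf{W}_{a,l}\mathbf{1}_{n+1}$, whence $\mathbf{A}\mathbf{1}_{N-n-1}+\mathbf{B}\mathbf{1}_{n+1}=\mathbf{0}$ with $\mathbf{B}=\mathbf{W}_{f,l}+\mathbf{W}_{f,a}\mathbf{W}_{a,l}$. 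Left-multiplying by $\mathbf{A}^{-1}$ then yields item (2): every row of $\mathbf{W}_L=\mathbf{A}^{-1}\mathbf{B}$ has sum one, regardless of where the auxiliary agents sit on $\partial\Omega_c$; the same identity is also the $\mathbf{1}_{n+1}$-instance of Proposition \ref{proposssssition1}, since reference follower coordinates are affine combinations of leader coordinates with weights summing to one. The step I expect to be the main obstacle is the combinatorial one — making rigorous that the backward in-neighbor closure of any follower must leave $\mathcal{V}_F$ — while everything else is nonnegative-matrix and row-sum bookkeeping.
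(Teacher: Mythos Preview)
Your argument for item (1) is correct and is essentially the paper's: write $\mathbf{A}=-\mathbf{I}+\mathbf{M}$ with $\mathbf{M}\ge\mathbf{0}$, use the reachability of each follower from a leader (Assumption~\ref{graphassumption}) to rule out a closed sub-stochastic block and conclude $\rho(\mathbf{M})<1$, hence $-\mathbf{A}$ is a nonsingular M-matrix and $\mathbf{A}$ is Hurwitz. The paper phrases the combinatorial step as ``$\mathbf{W}$ is not reducible, so at least $n+1$ rows of $\mathbf{A}$ have strictly negative sum''; your absorbing-chain formulation is a cleaner way to say the same thing.

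For item (2) you take a genuinely different route from the paper. The paper does \emph{not} test against $\mathbf{1}$; instead it invokes Assumption~\ref{leaderrankcondition} to write each follower's reference position as a barycentric combination of the leaders' via \eqref{communicationwithleaders}, obtaining $\mathbf{W}_L$ explicitly as the matrix of $\alpha$-parameters in \eqref{WL}, which is one-sum-row by construction of $\mathbf{\Theta}_n$. It then identifies this matrix with $\mathbf{A}^{-1}\mathbf{B}$ via Proposition~\ref{proposssssition1}. Your ones-vector test is more elementary and makes the independence from auxiliary placement transparent (only $\mathbf{W}_{a,l}\mathbf{1}_{n+1}=\mathbf{1}_{N_a}$ is used), while the paper's route additionally yields the entrywise identification ${(\mathbf{W}_L)}_{i,j}=\alpha_{i+n+1,j}$.

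One slip to flag: from $\mathbf{A}\mathbf{1}_{N-n-1}+\mathbf{B}\mathbf{1}_{n+1}=\mathbf{0}$ you get $\mathbf{A}^{-1}\mathbf{B}\,\mathbf{1}_{n+1}=-\mathbf{1}_{N-n-1}$, not $+\mathbf{1}$. This is not a flaw in your reasoning but a sign inconsistency already present in the paper: Proposition~\ref{proposssssition1} gives $\mathbf{B}\mathbf{z}_{q,l,0}+\mathbf{A}\mathbf{z}_{q,f,0}=\mathbf{0}$, i.e.\ $\mathbf{z}_{q,f,0}=-\mathbf{A}^{-1}\mathbf{B}\,\mathbf{z}_{q,l,0}$, so the one-sum-row object is $-\mathbf{A}^{-1}\mathbf{B}$ rather than $\mathbf{A}^{-1}\mathbf{B}$ as written in \eqref{originalWL} and \eqref{findrel}. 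Your derivation is fine once this sign is corrected.
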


\begin{proof}
{\color{black}By Assumption \ref{graphassumption}}, the position information can be transmitted from a leader to every follower. Consequently, $\mathbf{W}\in\mathbb{R}^{\left(N+N_a\right)\times \left(N+N_a\right)}$ is not reducible. {\color{black}By Assumption \ref{graphpositive} and due to how communication weights are defined by Eq. \ref{communicationwitfollowers}},  {\color{black}it follows that:} (i) {\color{black}The sum} of the elements {\color{black}is} zero {\color{black}for} rows $n+2$ through $N+N_a$ of $\mathbf{W}$. (ii) Except for diagonal elements of $\mathbf{W}$ that are all $-1$, off-{\color{black}diagonal} elements are non-negative at rows $n+2$ through $N$. (ii) Because $\mathbf{W}$ is not reducible,  the sums of the row elements are negative in at least $n+1$ rows of matrix $\mathbf{A}\in \mathbb{R}^{\left(N-n-1\right)\times\left(N-n-1\right)}$. Hence, the matrix $\mathbf{A}$ can be expressed as  $\mathbf{A}=-\mathbf{I}_{N-n-1}+\mathbf{H}$ where $\mathbf{H}\in \mathbb{R}^{\left(N-n-1\right)\times\left(N-n-1\right)}$ has {\color{black}a} {\color{black}spectral} radius $\rho<1$, i.e. eigenvalues of matrix $\mathbf{H}$ are all located inside a disk with radius $\rho<1$ centered at the origin {\color{black}on a complex plane}. Therefore, $-\mathbf{A}$ is {\color{black}a} nonsingular M-matrix \cite{qu2009cooperative} {\color{black}and} matrix $\mathbf{A}$ is Hurwitz.

Because Assumption  \ref{leaderrankcondition} is satisfied, {\color{black}the} position of follower $i\in \mathcal{V}_F$ can be expressed with respect to leaders as given in Eq. \eqref{leadersexpression} where $\alpha_{1,1}$, $\cdots$, $\alpha_{i,n+1}$ are unique and assigned by Eq. \eqref{communicationwithleaders}. In other words, component $q$ of {\color{black}the} followers' and leaders' reference positions can be related by
$
\mathbf{z}_{q,f,0}=\mathbf{W}_L\mathbf{z}_{q,l,0},
$
where
\begin{equation}
\label{WL}
\mathbf{W}_L=\begin{bmatrix}
\mathbf{\Theta}_n^T\left(\mathbf{r}_{1,0},\cdots,\mathbf{r}_{n+1,0},\mathbf{r}_{n+2,0}\right)\\
\vdots\\
\mathbf{\Theta}_n^T\left(\mathbf{r}_{1,0},\cdots,\mathbf{r}_{n+1,0},\mathbf{r}_{N,0}\right)\\
\end{bmatrix}=
\begin{bmatrix}
\alpha_{n+2,1}&\cdots&\alpha_{n+2,n+1}\\
\vdots&\ddots&\vdots\\
\alpha_{N,1}&\cdots&\alpha_{N,n+1}\\
\end{bmatrix}
.
\end{equation}
Because $\alpha$-parameters $\alpha_{i,1}$, $\cdots$, $\alpha_{i,n+1}$ are assigned by Eq. \eqref{communicationwithleaders}, $\mathbf{W}_L$ is one-sum row. On the other hand, $\mathbf{z}_{q,f,0}$ and $\mathbf{z}_{q,l,0}$ can be also related {\color{black}with} Eq. \eqref{leaderfollowerreferencerelation}:
\begin{equation}
\label{findrel}
    \mathbf{z}_{q,f,0}=\mathbf{A}^{-1}{\color{black}\mathbf{B}} \mathbf{z}_{q,,l,0}.
\end{equation}
By equating right-hand sides of Eqs. \eqref{WL} and \eqref{findrel}{\color{black}, {\color{black}the following {\color{black}properties hold:} (i) $\mathbf{W}_L$ is} obtained as given in Eq. \eqref{originalWL}, and (ii) ${\mathbf{W}_L}_{i,j}=\alpha_{i+n+1,j}$ ($(i+n+1)\in \mathcal{V}_F,~j\in \mathcal{V}_L)$.}
\end{proof}
}


\begin{remark}\label{RM3}
  Let $\mathbf{z}_{q,l,HT}\left(t\right)=[q_{1,HT}~\cdots~q_{n+1,HT}]^T\in \mathbb{R}^{n+1}$ and $\mathbf{z}_{q,f,HT}\left(t\right)\left(t\right)=[q_{n+2,HT}~\cdots~q_{N,HT}]^T\in \mathbb{R}^{N-n-1}$ denote the vector formed from component $q$ ($q\in\{x,y,z\}$) of {\color{black}the} global desired positions of leaders and followers, e.g. $\mathbf{r}_{i,HT}=x_{i,HT}\hat{\mathbf{e}}_1+y_{i,HT}\hat{\mathbf{e}}_2+z_{i,HT}\hat{\mathbf{e}}_3$ is the global desired position of vehicle $i\in \left(\mathcal{V}_L\bigcup\mathcal{V}_F\right)$. Then, $\mathbf{z}_{q,f,HT}$ and $\mathbf{z}_{q,l,HT}$ are related by
 \begin{equation}
 \label{followerglobaldesired}
 t\geq t_s,\qquad    \mathbf{z}_{q,f,HT}(t)=\mathbf{W}_L\mathbf{z}_{q,l,HT}(t).
 \end{equation}
 Therefore, 
  \[
j=0,1,\cdots,\rho_q,\qquad \dfrac{d^j\mathbf{z}_{q,HT}}{dt^j}=\begin{bmatrix}
    \mathbf{I}_{n+1}\\
    \mathbf{W}_L
\end{bmatrix}
\begin{bmatrix}
\frac{d^{j}q_{1,HT}}{dt^j}\\
\vdots\\
\frac{d^{j}q_{n+1,HT}}{dt^j}\\
\end{bmatrix}
,
 \]
 where  $
 \mathbf{z}_{q,HT}(t)=
 \begin{bmatrix}
     q_{1,HT}(t)&
     \cdots&
     q_{N,HT}(t)
 \end{bmatrix}
 ^T
$.
\end{remark}\label{RM4}

{\color{black}
\begin{assumption}\label{leaderglobaldesired}
It is assumed that leader vehicles  $1$, $2$, $\cdots$, ${n+1}$ form an $n$-D simplex at any time $t\in [t_s,t_f]$, where $t_s$ and $t_f$ denote the initial and final times, respectively. Therefore,
\begin{equation}
\begin{split}
 \Psi_n\left(\mathbf{r}_{1,HT}{\color{black}(t)},\cdots,\mathbf{r}_{{n+1},HT}{\color{black}(t)}\right)=n.
\end{split}
\end{equation}
\end{assumption}
}

{\color{black}\subsection{Coordination Graph and Real Communication Weights}
As {\color{black}mentioned in Section \ref{Inter-Agent Communication}}, auxilliary nodes do not move and they are just defined in the reference configuration in order to obtain communication weights and ensure stability of collective motion. {\color{black}When the MVS is moving}, follower vehicles only communicate with real in-neighbor where inter-agent communication is defined by coordination graph $\mathcal{G}_c\left(\mathcal{V}_R,\mathcal{E}_c\right)$ with edge set $\mathcal{E}_c\subset \mathcal{V}_R\times \mathcal{V}_R$. \underline{\textit{Real in-neighbors}} of follower $i\in \mathcal{V}_F$ {\color{black}are} defined by time-invariant set 
$\mathcal{I}_i=\{j\big|(j,i)\in \mathcal{E}_c\}$ which is called \textit{real in-neighbor set}. We define the local desired position of vehicle $i\in \mathcal{V}_R$ by
\begin{equation}
     \label{localdesiredpositionnnnnnnnn}
         \mathbf{r}_{d,i}=
         \begin{cases}
         \mathbf{r}_{i,HT}&i\in \mathcal{V}_L\\
        \sum_{j\in \mathcal{I}_i}\varpi_{i,j}\mathbf{r}_j=&i\in \mathcal{V}_F
         \end{cases}
         ,
     \end{equation}
 where $\varpi_{i,j}>0$ is the \underline{\textit{real communication weight}}  between follower $i\in \mathcal{V}_F$ and vehicle $j\in \mathcal{I}_i\subset \mathcal{V}_R$ and 
 \[
 \sum_{j\in \mathcal{I}_i}\varpi_{i,j}=1.
 \]

By considering Proposition \eqref{proposssssition1} and Theorem \eqref{theorem:3}, the following properties hold about the real in-neighbor set $\mathcal{I}_i$ and real communication weight $\varpi_{i,j}$:
\begin{enumerate}
    \item{If $\mathcal{N}_i\bigcap \mathcal{V}_{aux}=\emptyset$, then, $\mathcal{I}_i=\mathcal{N}_i$ and $\varpi_{i,j}=w_{i,j}$ for every vehicle $i\in \mathcal{V}_F$ with every in-neighbor vehicle $j\in \mathcal{I}_i=\mathcal{N}_i$.}
     \item{If $\mathcal{N}_i\bigcap \mathcal{V}_{aux}\neq\emptyset$, then, $\mathcal{I}_i\neq\mathcal{N}_i$ and $\varpi_{i,j}$ is defined as follows:
     \[
     \varpi_{i,j}=
     \begin{cases}
     w_{i,j}&j\in (\mathcal{I}_i\bigcap \mathcal{N}_i)\\
     \sum_{h\in \left(\mathcal{N}_i\bigcap \mathcal{V}_{aux}\right)}w_{i,h}\alpha_{h,j}&\mathrm{otherwise}\\
     \end{cases}
     .
     \]
     }
\end{enumerate}
}

\section{\hspace{0.2cm}Problem Formulation}
\label{Problem Statement}
{\color{black}This paper studies the properties of our homogeneous deformation approach for an $N$-vehicle MVS {\color{black}where vehicles are} {\color{black}treated as particles of an $n$-D deformable body ($n=1,2,3$)}. The desired {\color{black}MVS vehicle positions} are guided by $n+1$ leaders {\color{black}and acquired by the remaining followers through local communication, where followers {\color{black}are} {\color{black}arbitrarily} distributed in a $3D$ motion space.} 
}
{\color{black}Dynamics of the vehicle $i\in \mathcal{V}_R$ is {\color{black}represented} by a nonlinear model}
 \begin{equation}\label{Dynamicsofagenti}
     \begin{cases}
     \dot{\mathbf{x}}_i=\mathbf{f}_i\left(\mathbf{x}_i\right)+\mathbf{g}_i\left(\mathbf{x}_i\right)\mathbf{u}_i\\
     \mathbf{r}_i=\mathbf{r}_i\left(\mathbf{x}_i\right),
     \end{cases}
\end{equation}
where $\mathbf{u}_i\in \mathcal{U}_i\subset \mathbb{R}^{n_u}$ and $\mathcal{U}_i$ is compact.  
In Eq. \eqref{Dynamicsofagenti}, $\mathbf{x}_i\in \mathbb{R}^{n_{x}\times 1}$ and $\mathbf{u}_i\in \mathbb{R}^{n_{u}\times 1}$ are state and control input, respectively, {\color{black}the} actual position $\mathbf{r}_i\in \mathbb{R}^{3\times 1}$ is the output of vehicle dynamics \eqref{Dynamicsofagenti}, and $\mathbf{f}_i:\mathbb{R}^{n_x}\rightarrow \mathbb{R}^{n_x}$ and $\mathbf{g}_i:\mathbb{R}^{n_x}\rightarrow \mathbb{R}^{n_x\times n_u}$ are smooth. 
{\color{black}Defining  $\mathbf{F}_L=\left[\mathbf{f}_1^T~\cdots~\mathbf{f}_{n+1}^T\right]^T$,  $\mathbf{F}_F=\left[\mathbf{f}_{n+2}^T~\cdots~\mathbf{f}_{N}^T\right]^T$, $\mathbf{G}_L=\left[\mathbf{g}_1^T~\cdots~\mathbf{g}_{n+1}^T\right]^T$,  $\mathbf{G}_F=\left[\mathbf{g}_{n+2}^T~\cdots~\mathbf{g}_{N}^T\right]^T$, the leader state vector by $\mathbf{X}_L=\left[\mathbf{x}_1^T~\cdots~\mathbf{x}_{n+1}^T\right]^T$, the follower state vector by $\mathbf{X}_F=\left[\mathbf{x}_{n+2}^T~\cdots~\mathbf{x}_{N}^T\right]^T$, the leader input vector by
$\mathbf{U}_L=\left[\mathbf{u}_1^T~\cdots~\mathbf{u}_{n+1}^T\right]^T$, the follower input vector $\mathbf{U}_F=\left[\mathbf{u}_{n+2}^T~\cdots~\mathbf{u}_{N}^T\right]^T$, the MVS collective dynamics can be expressed by
\begin{equation}
\begin{cases}
    \dot{\mathbf{X}}_L=\mathbf{F}_L\left(\mathbf{X}_L\right)+\mathbf{G}_L\left(\mathbf{X}_L\right)\mathbf{U}_L\\
    \dot{\mathbf{X}}_F=\mathbf{F}_F\left(\mathbf{X}_F\right)+\mathbf{G}_F\left(\mathbf{X}_F\right)\mathbf{U}_F\\
\end{cases}
.
\end{equation}
We also define
\[
\begin{split}
    \mathbf{P}_L=\mathrm{vec}\left(
    \begin{bmatrix}
    \mathbf{r}_1&\cdots&\mathbf{r}_{n+1}&\cdots&\frac{d^\Xi\mathbf{r}_1}{ dt}&\cdots&\frac{d^\Xi\mathbf{r}_{n+1}}{ dt^\Xi}
    \end{bmatrix}^T\right)
    \\
    \mathbf{P}_F=\mathrm{vec}\left(
    \begin{bmatrix}
    \mathbf{r}_{n+2}&\cdots&\mathbf{r}_{N}&\cdots&\frac{d^\Xi \mathbf{r}_{n+2}}{ dt}&\cdots&\frac{d^\Xi\mathbf{r}_{N}}{dt^\Xi}
    \end{bmatrix}^T\right)
\end{split}
\]
are the outputs of follower and leader MVS dynamics, respectively. ``vec'' is the matrix vectorization operator. Also,
\[
\begin{split}
\resizebox{0.99\hsize}{!}{%
$
    \mathbf{P}_{L,d}=\mathrm{vec}\left(
    \begin{bmatrix}
    \mathbf{r}_{1,HT}&\cdots&\mathbf{r}_{n+1,HT}&\cdots&\frac{d\mathbf{r}_{1,HT}}{ dt}&\cdots&\frac{d^\Xi\mathbf{r}_{n+1,HT}}{ dt^\Xi}
    \end{bmatrix}^T\right)
    $
    }
\end{split}
\]
in the reference input of the MVS control system, and
\[
\begin{split}
    \mathbf{P}_{F,d}(t)=\mathbf{D}_{\mathrm{MVS}}\mathbf{P}_F(t)+\mathbf{B}_{\mathrm{MVS}}\mathbf{P}_L(t)
\end{split}
\]
is the reference input of the follower MVS dynamics, where $\mathbf{D}_{\mathrm{MVS}}=\mathbf{I}_{3\Xi}\otimes\mathbf{D}$, $\mathbf{B}_{\mathrm{MVS}}=\mathbf{I}_{3\Xi}\otimes\mathbf{B}$, $\mathbf{D}=\mathbf{I}_{N-n-1}+\mathbf{A}$, $\otimes$ is the Koronecker product symbol, $\Xi\leq n_x$ is known, and $\mathbf{I}_{3\Xi}\in \mathbb{R}^{3\Xi\times 3\Xi}$ and $\mathbf{I}_{N-n-1}\in \mathbb{R}^{\left(N-n-1\right)\times \left(N-n-1\right)}$ are the identity matrices.

\begin{figure}
\center
\includegraphics[width=3.2 in]{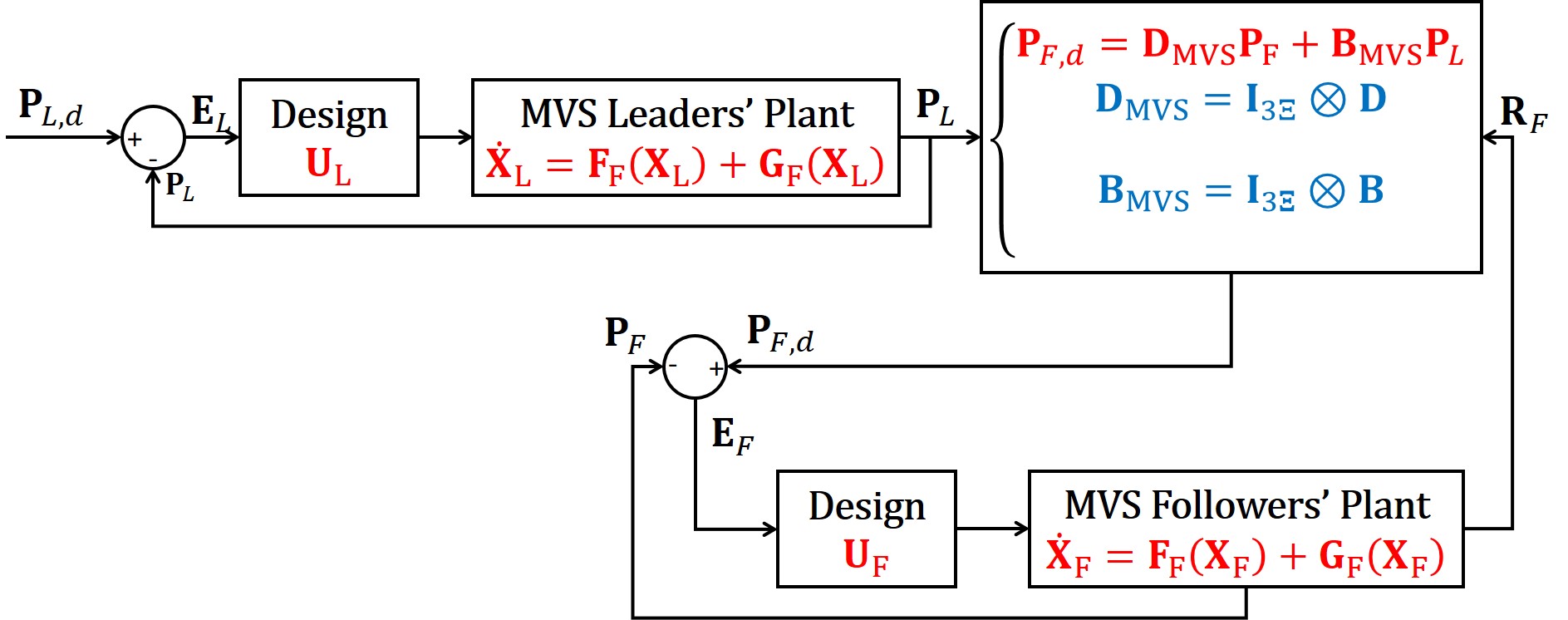}
\caption{Continuum deformation coordination block diagram.}
\label{1stcontrollerblock}
\end{figure}
This paper studies the following problems:

\textbf{\textit{Problem 1 (MVS Continuum Deformation {\color{black}Coordinated Control}):}} {\color{black}The first problem to be addressed is to determine} $\mathbf{U}_L(t)$ and $\mathbf{U}_F(t)$ such that {\color{black}position} deviation of every vehicle $i\in \mathcal{V}_L\bigcup \mathcal{V}$ is less than or equal to $\delta>0$ at any time $t$ ($\|\mathbf{r}_i(t)-\mathbf{r}_{i,HT}\|\leq \delta,~\forall i\in \mathcal{V},~t\in [t_s,t_f]$).
Note {\color{black}that the} local desired position $\mathbf{r}_{d,i}$, defined in Section \ref{Position Notations} is the reference input for every vehicle $i\in \mathcal{V}_L\bigcup \mathcal{V}_F$. Therefore, leader $i\in \mathcal{V}_L$ knows the global desired position $\mathbf{r}_{i,HT}$ while follower $i\in \mathcal{V}$ does not know $\mathbf{r}_{i,HT}$ but acquires it through local communication. The block diagram of the MVS collective dynamics is shown in Fig. \ref{1stcontrollerblock}. 
{\color{black}
\begin{assumption}\label{feedbacklinearizable}
{\color{black}The following assumption is made:} Dynamics of every vehicle $i\in \mathcal{V}_L\bigcup \mathcal{V}_F$ is input-output linearizable. 
\end{assumption}
{\color{black}Stability and convergence of the MVS collective dynamics are investigated in Section \ref{SimpleSafety}.}
}

\textbf{\textit{Problem 2 ({\color{black}Guaranteeing} Continuum Deformation Safety Specification):}} {\color{black}The objective is to mathematically} specify safety conditions for large-scale MVS coordination in obstacle-{\color{black}free} and obstacle-{\color{black}laden} motion spaces.  In particular, a large-scale coordination is labelled safe if inter-agent collision is avoided, obstacle collision avoidance is avoided, and vehicle input constraints are satisfied. {\color{black}This paper also advances the existing condition for assurance of inter-agent collision avoidance in a large-scale continuum deformation coordination by proposing a new {\color{black}and less restrictive} inter-agent collision avoidance condition. {\color{black}Specifically, we} show that how inter-agent collision can be avoided only by constraining the {\color{black}one of the eigenvalues} of the deformation matrix, if the shear-deformation angles remain constant at all times $t$. {\color{black}This can significantly improve the flexibility and maneuverability of the large-scale continuum deformation coordination.}}
\\

\textbf{\textit{Problem 3 (MVS Continuum Deformation Coordination Planning):}} 
{\color{black}For the planning of a continuum deformation coordination, the global desired trajectories of leaders are assigned in obstacle-free and obstacle-laden environments such that the initial conditions at time $t_s$, final conditions at time $t_f$, and  continuum deformation safety condition at any time $t\in [t_s,t_f]$ are satisfied. We define the} desired trajectory of every leader $i\in \mathcal{V}_L$ by
\begin{equation}
\label{riiihttttt}
   \mathbf{r}_{i,HT}=\mathbf{r}_{i,HT}\left(\mathbf{s}_\varrho^n(t)\right), 
\end{equation}
where 
\begin{equation}
\label{generalplease}
    \mathbf{s}_{\varrho}^n(t)=\left[s_{1,\varrho}^n(t)~\cdots~s_{g_{n,\varrho},\varrho}^n(t)\right]^T\in \mathbb{R}^{g_{n,\varrho}\times 1}
\end{equation}
is the generalized coordinate vector, $n\in\{1,2,3\}$ is the dimension of the homogeneous deformation coordination and 
\begin{equation}
    g_{n,\varrho}=
    \begin{cases}
    n(n+1)&\varrho=\mathrm{OL},~n=1,2,3\\
    6&\varrho=\mathrm{OF},~n=1\\
    8&\varrho=\mathrm{OF},~n=2\\
    9&\varrho=\mathrm{OF},~n=3\\
    \end{cases}
\end{equation}
is the number of generalized coordinates. Also,
$s_{h,\varrho}^n:[t_s,t_f]\rightarrow \mathbb{R}$ is the $h$-th generalized coordinate ($h=1,\cdots,n(n+1)$), $\varrho\in \{\mathrm{OF},\mathrm{OL}\}$ is a discrete variable, and ``$\mathrm{OF}$'' and ``$\mathrm{OL}$'' stand for ``obstacle free'' and ``obstacle laden'', respectively. Note that $\mathbf{s}_{\mathrm{OL}}^n$ {\color{black}is} defined based on leaders' position components while  $\mathbf{s}_{\mathrm{OF}}$ specifies the homogeneous transformation features. The generalized coordination vector $\mathbf{s}_\varrho^n(t)$ is given by
\begin{equation}
\label{barsss}
t\in [t_k,t_{k+1}],\qquad \mathbf{s}_{\varrho}^n(t)=\bar{\mathbf{s}}_{k,\varrho}^n(1-\beta(t,T_k))+\beta(t,T_k)\bar{\mathbf{s}}_{k+1,\varrho}^n,
\end{equation}
where $k=1,\cdots,n_\tau,~\varrho\in \{\mathrm{OL},\mathrm{OF}\}$, {\color{black}$t_1=t_s$, $t_{n_\tau+1}=t_f$,} $\bar{\mathbf{s}}_{1,\varrho}^n$, $\cdots$, $\bar{\mathbf{s}}_{n_\tau+1,\varrho}^n$ determine intermediate configurations for the generalized coordinate, $T_k=t_{k+1}-t_k$, and $\beta(t,T_k)$ is defined by the following fifth order polynomial:
\begin{equation}
 \label{poly1}
t\in [t_k,t_{k+1}],\qquad \beta(t,T_k)=\sum_{j=0}^5\zeta_{j,k}\left(\frac{t-t_k}{ T_k}\right)^5,
\end{equation}
where $\beta(t_k,T_k)\in [0,1]$ is an increasing function of $t$ over $[t_k,t_{k+1}]$, $\beta(t_k,T_k)=0$, $\beta(t_{k+1},T_k)=1$, $\zeta_{0,k}$ through $\zeta_{5,k}$ are constant and determined based on boundary conditions at times $t_k$ and $t_{k+1}$. 
Note that $T_1$ through $T_n$ are {\color{black}the} design parameters determined such that the continuum deformation safety conditions are all satisfied.

{\color{black}For continuum deformation planning in obstacle-free environments, $n_\tau=1$ and Eq. \eqref{barsss} is used to assign the global desired trajectories of the leaders by shaping the homogeneous deformation features.} For motion planning in obstacle-laden environments, $\bar{\mathbf{s}}_{1,\mathrm{OL}}^n$, $\cdots$, $\bar{\mathbf{s}}_{n_\tau+1,\mathrm{OL}}^n$ are determined using A* search  such that the leaders' travel distance is minimized. 
{\color{black}This paper determines the leaders' paths through connecting intermediate way-points assigned by the A* planner. Therefore, the additional conditions $\dot{\beta}(t_k,T_k)=0$, $\ddot{\beta}(t_k,T_k)=0$, $\dot{\beta}(t_{k+1},T_k)=0$,  $\ddot{\beta}(t_{k+1},T_k)=0$ are all satisfied in order to ensure the continuity of the desired leaders' velocity and acceleration at all times $t$.}
}

 {\color{black}
\section{\hspace{0.1cm}Problem 1: {\color{black}MVS Continuum Deformation Coordinated Control}}
\label{MAS Collective Dynamics}
{\color{black}{\color{black}By} Assumption \ref{feedbacklinearizable}, we can {\color{black}find a}} state transformation $\mathbf{x}_i\rightarrow \left(\mathbf{r}_i,\varpi_i\right)$, {\color{black}such that}  \eqref{Dynamicsofagenti} {\color{black}is decomposed into external dynamics}
\begin{equation}
\label{controllable}
 {\color{black}q\in \{x,y,z\},\qquad}   \dfrac{{d}^{\rho_q} q_i}{{d}t^{\rho_q}}={v}_{q,i}
\end{equation}
and internal dynamics
  \begin{equation}
  \label{zero}
  \dot{\omega}_i={\mathbf{f}}_{\mathrm{int},i}\left({\omega}_i,x_i,\cdots,L_f^{\rho_x-1}x_i,y_i,\cdots,L_f^{\rho_y-1}y_i,z_i,\cdots,L_f^{\rho_z-1}z_i\right),
  \end{equation}
{\color{black}where  $v_{q,i}$ is assigned as}
\begin{equation}
    {v}_{q,i}=-\sum_{k=1}^{\rho_q-\Xi-1}\gamma_{i,k}L_f^{\rho_q-k}q_i\\
     +\sum_{k=\rho_q-\Xi}^{\rho_q}\gamma_{i,k}L_f^{\rho_q-k}\left(q_{d,i}-q_{i}\right),
\end{equation}
{\color{black}and where $L_{\mathbf{F}}{G}=\left(\bigtriangledown G\right)^T \mathbf{F}$ is the Lie derivative of {\color{black}a} smooth function $G$ with respect to {\color{black}a vector field} $\mathbf{F}$, $q_{d,i}$ is the component $q\in \{x,y,z\}$ of local desired position $\mathbf{r}_{d,i}$ defined by Eq. \eqref{localdesiredpositionnnnnnnnn}. 
     } 
 Control gains $\gamma_{i,1}$ through  $\gamma_{i,\rho_q}$ are assigned such that {\color{black}the} MVS collective dynamics is stable. 
In Eq. \eqref{controllable}, $\rho_q$ ($q\in \{x,y,z\}$) is the relative degree, $\rho=\rho_x+\rho_y+\rho_z\leq n_x$ is the total relative degree, {\color{black}$\Xi< \rho_q$ is constant}, and
   \[
   \resizebox{0.99\hsize}{!}{%
$
  \begin{split}
    \begin{bmatrix}
      v_{x,i}\\
      v_{y,i}\\
      v_{z,i}
  \end{bmatrix}
  =&
  \mathbf{M}_i\mathbf{u}_i+\mathbf{n}_i,\\
  \mathbf{M}_i=&
  \begin{bmatrix}
     L_{\mathbf{g}_i}L_{\mathbf{f}_i}^{\rho_x-1}x_i\\
     L_{\mathbf{g}_i}L_{\mathbf{f}_i}^{\rho_y-1}y_i\\
     L_{\mathbf{g}_i}L_{\mathbf{f}_i}^{\rho_z-1}z_i
  \end{bmatrix}
  ,~\mathbf{n}_i=
  \begin{bmatrix}
     \left( L_{\mathbf{f}_i}^{\rho_x}x_i\right)^T&
      \left(L_{\mathbf{f}_i}^{\rho_y}y_i\right)^T&
      \left( L_{\mathbf{f}_i}^{\rho_z}z_i\right)^T
  \end{bmatrix}
  ^T.
  \end{split}
  $
  }
  \]
  {\color{black}
\begin{assumption}
For the system \eqref{Dynamicsofagenti} and \eqref{controllable}-\eqref{zero}, (i) $n_u\geq 3$, (ii) $\mathrm{Rank}\left(\mathbf{M}_i\right)=3$, and (iii) zero dynamics of vehicle $i$, $\dot{\omega}_i={\mathbf{f}}_{\mathrm{int},i}\left({\omega}_i,0,\cdots,0\right)$,  is locally asymptotically stable {\color{black}at the origin}.
\end{assumption}
}

{\color{black}Define
\[
\resizebox{0.99\hsize}{!}{%
$
\mathbf{X}_{\mathrm{SYS},q}=
\begin{bmatrix}
    q_1&\cdots&q_N&\cdots&L_{\mathbf{f}_1}^{\rho_q}q_1&\cdots&L_{\mathbf{f}_N}^{\rho_q}q_N
\end{bmatrix}
^T\in \mathbb{R}^{\rho_qN\times 1},\\
$
}
\]
\[
\mathbf{X}_{\mathrm{SYS}}=
\begin{bmatrix}
   \mathbf{X}_{\mathrm{SYS},x}^T& \mathbf{X}_{\mathrm{SYS},y}^T& \mathbf{X}_{\mathrm{SYS},z}^T
\end{bmatrix}
^T\in \mathbb{R}^{\left(\rho_x+\rho_y+\rho_z\right)N\times 1},
\]
\[
\begin{split}
 {\mathbf{U}}_{\mathrm{SYS},q}=&\sum_{j=\rho_q-\Xi}^{\rho_q}\mathbf{\Gamma}_{j,l}
    \begin{bmatrix}
        L_f^{\rho_q-j}q_{1,HT}\\
       \vdots\\
        L_f^{\rho_q-j}q_{n+1,HT}\\
    \end{bmatrix}
    ,
\end{split}
\]
\[
\mathbf{U}_{\mathrm{SYS}}=
\begin{bmatrix}
   \mathbf{U}_{\mathrm{SYS},x}^T& \mathbf{U}_{\mathrm{SYS},y}^T& \mathbf{U}_{\mathrm{SYS},z}^T
\end{bmatrix}
^T\in \mathbb{R}^{\left(\rho_x+\rho_y+\rho_z\right)(n+1)\times 1},
\]
where $q_i$ and $q_{i,HT}$ denote component $q\in\{x,y,z\}$ of actual and global desired positions of vehicle $i$. The MVS collective dynamics can be expressed by the following normal form:}
\begin{subequations}
\label{MVSD}
\begin{equation}
\label{External}
\mathrm{{External~Dynamics}:}~\dot{\mathbf{X}}_{\mathrm{SYS}}=\mathbf{A}_{\mathrm{SYS}}{\mathbf{X}}_{\mathrm{SYS}}+\mathbf{B}_{\mathrm{SYS}}{\mathbf{U}}_{\mathrm{SYS}},
\end{equation}
\begin{equation}
\label{Internal}
\mathrm{{Internal~Dynamics}:}~~
\frac{d{\omega}}{ dt}=\mathbf{F}_{\mathrm{INT}}\left(\omega,\mathbf{X}_{\mathrm{SYS}}\right),
\end{equation}
\end{subequations}
where 
$
\omega=
\begin{bmatrix}
    \omega_1^T&\cdots&\omega_N^T
\end{bmatrix}
^T
,
~
\mathbf{F}_{\mathrm{INT}}=
\begin{bmatrix}
    \mathbf{f}_{\mathrm{int},1}^T&\cdots&\mathbf{f}_{\mathrm{int},N}^T
\end{bmatrix}
^T,
$
\[
\resizebox{0.99\hsize}{!}{%
$
\mathbf{A}_{\mathrm{SYS}}=
\begin{bmatrix}
\mathbf{A}_{\mathrm{SYS},x}&\mathbf{0}&\mathbf{0}\\
\mathbf{0}&\mathbf{A}_{\mathrm{SYS},y}&\mathbf{0}\\
\mathbf{0}&\mathbf{0}&\mathbf{A}_{\mathrm{SYS},z}
\end{bmatrix}
,~
\mathbf{B}_{\mathrm{SYS}}=
\begin{bmatrix}
\mathbf{B}_{\mathrm{SYS},x}&\mathbf{0}&\mathbf{0}\\
\mathbf{0}&\mathbf{B}_{\mathrm{SYS},y}&\mathbf{0}\\
\mathbf{0}&\mathbf{0}&\mathbf{B}_{\mathrm{SYS},z}
\end{bmatrix}
,
$
}
\]
\[
\resizebox{0.99\hsize}{!}{%
$
    j=\rho_q-\Xi+1,\cdots,\rho_q,\qquad \mathbf{\Gamma}_{j,l}=\mathrm{diag}(\gamma_{j,1},\cdots,\gamma_{j,n+1})\in \mathbb{R}^{\left(n+1\right)\times \left(n+1\right)}
   ,
   $
   }
\]
\[
    j=1,\cdots,\rho_q \qquad \mathbf{\Gamma}_{j}=\mathrm{diag}(\gamma_{j,1},\cdots,\gamma_{j,N})\in \mathbb{R}^{N\times N}
   ,
\]
\[
\resizebox{0.99\hsize}{!}{%
$
q\in \{x,y,z\},\qquad  \mathbf{B}_{\mathrm{SYS},q}=
    \begin{bmatrix}
        \mathbf{0}_{\left(n+1\right)\times\left(\rho_q-1\right)N}&\mathbf{I}_{n+1}&\mathbf{0}_{\left(n+1\right)\times\left(N-n-1\right)}
    \end{bmatrix}
    ^T
$,
}
\]
\[
\resizebox{0.99\hsize}{!}{%
$
q\in \{x,y,z\},\qquad \mathbf{A}_{\mathrm{SYS},q}=
    \begin{bmatrix}
        \mathbf{0}_N&\mathbf{I}_N&\cdots&\mathbf{0}_N\\
        \vdots&\vdots&\ddots&\vdots\ddots\\
        \mathbf{0}_N&\mathbf{0}_N&\cdots&\mathbf{I}_N\\
        \mathbf{\Gamma}_{\rho_q}\mathbf{H}_{\rho_q,q}&\mathbf{\Gamma}_{\rho_q-1}\mathbf{H}_{\rho_q-1,q}&\cdots&\mathbf{\Gamma}_{1}\mathbf{H}_{1,q}
    \end{bmatrix}
$.
}
\]
Note that $\mathbf{I}_{N}\in \mathbb{R}^{N\times N}$ and $\mathbf{0}_{N}\in \mathbb{R}^{N\times N}$ {\color{black}are the identity and zero-entry matrices.}
Matrix $\mathbf{H}_{i,q}\in \mathbb{R}^{N\times N}$ ($i=1,\cdots,\rho_q$) is defined as follows:
\[
q\in \{x,y,z\},\qquad
\mathbf{H}_{i,q}=
\begin{cases}
    -\mathbf{I}_N&1\leq i\leq \rho_q-\Xi-1\\
    \mathbf{L}&\rho_q-\Xi\leq i\leq \rho_q\\
\end{cases}
.
\]

\subsection{Error Dynamics} {\color{black}Defining
\[
\resizebox{0.99\hsize}{!}{%
$
\mathbf{X}_{\mathrm{SYS},HT,q}=
\begin{bmatrix}
    q_{1,HT}&\cdots&q_{N,HT}&\cdots&L_{\mathbf{f}_1}^{\rho_q}q_{1,HT}&\cdots&L_{\mathbf{f}_N}^{\rho_q}q_{N,HT}
\end{bmatrix}
$
}
\]
\[
\mathbf{X}_{\mathrm{SYS},HT}=
\begin{bmatrix}
   \mathbf{X}_{\mathrm{SYS},HT,x}^T& \mathbf{X}_{\mathrm{SYS},HT,y}^T& \mathbf{X}_{\mathrm{SYS},HT,z}^T
\end{bmatrix},
\]
the transient error $\mathbf{E}_{\mathrm{SYS}}=\mathbf{X}_{\mathrm{SYS}}-\mathbf{X}_{\mathrm{SYS},HT}$ assigning deviation vehicles' actual position components from their global desired position components are updated by}
{
\begin{equation}
\label{ERROR}
\dot{\mathbf{E}}_{\mathrm{SYS}}=\mathbf{A}_{\mathrm{SYS}}{\mathbf{E}}_{\mathrm{SYS}}+{\color{black}\begin{bmatrix}
\mathbf{0}\\
\mathbf{I}
\end{bmatrix}
}
{\mathbf{V}}_{\mathrm{SYS}}
,
\end{equation}
where
\[
\label{AAAAAAAAAAAAAAAAA}
\mathbf{V}_{\mathrm{SYS}}=
\begin{bmatrix}
   \mathbf{V}_{\mathrm{SYS},x}^T& \mathbf{V}_{\mathrm{SYS},y}^T& \mathbf{V}_{\mathrm{SYS},z}^T
\end{bmatrix}
^T\in \mathbb{R}^{3N\times 1},\tag{AA}
\]
\[
\resizebox{0.99\hsize}{!}{%
$
\begin{split}
\mathbf{V}_{{\color{black}\mathrm{SYS}},q}=\sum_{j=0}^{\rho_q-\Xi}\mathbf{\Gamma}_j\dfrac{d^{\rho_q-j}}{dt^{\rho_q-j}}\left(\left[
\begin{array}{cc}
\mathbf{z}_{q,l,HT}\\
\mathbf{z}_{q,f,HT}\\
\end{array}
\right]
\right)
=
-\sum_{j=0}^{\rho_q-\Xi}\mathbf{\Gamma}_j\begin{bmatrix}
L_{\mathbf{f}_1}^{\rho_q-j}q_{1,HT}\\
\vdots\\
L_{\mathbf{f}_N}^{\rho_q-j}q_{N,HT}
\end{bmatrix}
\end{split}
$
}
,
\]
{\color{black}and $q\in \{x,y,z\}$.} Considering Remark \ref{RM3} and substituting $\mathbf{z}_{q,f,HT}$ by Eq. \eqref{followerglobaldesired}, $\mathbf{V}_{\mathrm{SYS},q}$ simplifies to
\begin{equation}
\label{VMUS}
\begin{split}
q\in \{x,y,z\},~
\mathbf{V}_{{\color{black}\mathrm{SYS}},q}=-\sum_{j=0}^{\rho_q-\Xi}\mathbf{\Gamma}_{j}
\begin{bmatrix}
    \mathbf{I}_{n+1}\\
    \mathbf{W}_L
\end{bmatrix}
\begin{bmatrix}
L_{\mathbf{f}_1}^{\rho_q-j}q_{1,HT}\\
\vdots\\
L_{\mathbf{f}_{n+1}}^{\rho_q-j}q_{n+1,HT}
\end{bmatrix}
\end{split}
,
\end{equation}
where $\mathbf{\Gamma}_{0}=\mathbf{I}_N$ and $\mathbf{I}_{n+1}$ are the identity matrices.}

\begin{theorem}\label{Theorem55}
Assume control gains, $\gamma_{1,i}$ through $\gamma_{\rho_q,i}$ are chosen such that eigenvalues of {\color{black}the} characteristic equation of external dynamics \eqref{External},
$
 \bigg|s\mathbf{I}-\mathbf{A}_{\mathrm{SYS},q}\bigg|=0
$ ($q\in \{x,y,z\}$),
are all located in the open left-half $s$-plane. If zero dynamics of the MVS internal dynamics \eqref{Internal} is \underline{locally} asymptotically stable {\color{black}and $\mathbf{F}_{\mathrm{I},\omega}=\frac{\partial \mathbf{F}_{\mathrm{INT}}}{\partial \omega}$ is Hurwitz}, then, {\color{black}the} error zero dynamics, obtained by substituting $\mathbf{V}_{\mathrm{SYS},q}=\mathbf{0}$ in Eq. \eqref{ERROR}, is a locally asymptotically stable {\color{black} at the origin}.
 \end{theorem}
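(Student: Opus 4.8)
The plan is to read the error zero dynamics as a cascade interconnection: with $\mathbf{V}_{\mathrm{SYS}}=\mathbf{0}$ the external error equation \eqref{ERROR} becomes autonomous, linear and exponentially stable, and it forward-drives the internal dynamics \eqref{Internal}; local asymptotic stability of the interconnection then follows from Lyapunov's indirect method once the combined linearization is shown to be block triangular with Hurwitz diagonal blocks.

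First I would substitute $\mathbf{V}_{\mathrm{SYS},q}=\mathbf{0}$ ($q\in\{x,y,z\}$) into \eqref{ERROR}, which reduces the external error dynamics to $\dot{\mathbf{E}}_{\mathrm{SYS}}=\mathbf{A}_{\mathrm{SYS}}\mathbf{E}_{\mathrm{SYS}}$. Since $\mathbf{A}_{\mathrm{SYS}}=\mathrm{diag}\!\left(\mathbf{A}_{\mathrm{SYS},x},\mathbf{A}_{\mathrm{SYS},y},\mathbf{A}_{\mathrm{SYS},z}\right)$ is block diagonal, its spectrum is the union of the spectra of the three diagonal blocks, so the hypothesis that every root of $\left|s\mathbf{I}-\mathbf{A}_{\mathrm{SYS},q}\right|=0$ lies in the open left half-plane makes $\mathbf{A}_{\mathrm{SYS}}$ Hurwitz; hence $\mathbf{E}_{\mathrm{SYS}}(t)\to\mathbf{0}$ exponentially and globally, independently of the internal state $\omega$. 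Next I would adjoin the internal dynamics $\dot{\omega}=\mathbf{F}_{\mathrm{INT}}\!\left(\omega,\mathbf{X}_{\mathrm{SYS}}\right)$ with $\mathbf{X}_{\mathrm{SYS}}=\mathbf{E}_{\mathrm{SYS}}+\mathbf{X}_{\mathrm{SYS},HT}$ and observe that on the slice $\mathbf{E}_{\mathrm{SYS}}=\mathbf{0}$ it reduces to the decoupled per-vehicle zero dynamics $\dot{\omega}_i=\mathbf{f}_{\mathrm{int},i}\!\left(\omega_i,0,\ldots,0\right)$ of \eqref{zero}, whose origin is locally asymptotically stable by assumption. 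Linearizing the full state $\left(\mathbf{E}_{\mathrm{SYS}},\omega\right)$ about the origin yields a Jacobian of block lower-triangular form $\begin{bmatrix}\mathbf{A}_{\mathrm{SYS}}&\mathbf{0}\\ \partial\mathbf{F}_{\mathrm{INT}}/\partial\mathbf{X}_{\mathrm{SYS}}&\mathbf{F}_{\mathrm{I},\omega}\end{bmatrix}$, because the external error equation does not involve $\omega$; its eigenvalues are exactly those of $\mathbf{A}_{\mathrm{SYS}}$ together with those of $\mathbf{F}_{\mathrm{I},\omega}$, and both sets lie in the open left half-plane by the two Hurwitz hypotheses. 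Lyapunov's indirect method then delivers local asymptotic stability of the origin of the error zero dynamics.

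The step I expect to be the main obstacle is making this cascade rigorous rather than formal. One must verify that the coupling from $\mathbf{E}_{\mathrm{SYS}}$ into the internal equation appears only below the diagonal so that the spectrum of the linearization genuinely splits, and one must address that $\mathbf{X}_{\mathrm{SYS},HT}$ is time-varying along the planned homogeneous transformation, so the linearization is in general time-varying and ``at the origin'' should be read at the frozen desired configuration, where the internal equilibrium parametrized by that configuration is the one carrying the Hurwitz hypothesis. A route that sidesteps linearization is a composite Lyapunov argument: take a quadratic converse Lyapunov function $V_1=\mathbf{E}_{\mathrm{SYS}}^{T}\mathbf{P}\mathbf{E}_{\mathrm{SYS}}$ for the exponentially stable external error, a local converse Lyapunov function $V_2(\omega)$ for the locally asymptotically stable zero dynamics, bound the cross terms using smoothness of $\mathbf{F}_{\mathrm{INT}}$ on a compact neighborhood of the origin, and choose $V=V_2+\kappa V_1$ with $\kappa$ sufficiently large to make $\dot V$ negative definite locally. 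The pole-placement part is immediate because each $\mathbf{A}_{\mathrm{SYS},q}$ is already in block companion form with the control gains appearing in its last block row.
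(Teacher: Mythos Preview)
Your proposal is correct and follows essentially the same route as the paper: linearize the cascaded error zero dynamics to obtain a block lower-triangular Jacobian with diagonal blocks $\mathbf{A}_{\mathrm{SYS}}$ and $\mathbf{F}_{\mathrm{I},\omega}$, both Hurwitz by hypothesis, and conclude local asymptotic stability by Lyapunov's indirect method (the paper cites Theorem~6.3 of Slotine--Li for this). Your discussion is in fact more careful than the paper's---you flag the time-varying $\mathbf{X}_{\mathrm{SYS},HT}$ issue and sketch the composite Lyapunov alternative, whereas the paper's proof simply writes down the linearized cascade and invokes Lyapunov's first method in two lines.
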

\begin{proof}
The proof in inspired by theorem 6.3 in Ref. \cite{slotine1991applied}. By linearization, MVS collective zero dynamics is given by
\[
\begin{cases}
    \dot{\mathbf{E}}_{\mathrm{SYS}}=\mathbf{A}_{\mathrm{SYS}}{\mathbf{E}}_{\mathrm{SYS}}\\
    \frac{d\omega}{ dt}=\mathbf{F}_{\mathrm{I},E}\mathbf{E}_{{\color{black}\mathrm{SYS},y}}+\mathbf{F}_{\mathrm{I},z}\mathbf{E}_{{\color{black}\mathrm{SYS},z}}+\mathbf{F}_{\mathrm{I},\omega}\omega+\mathrm{H.O.T}.
\end{cases}
\]
$\mathbf{A}_{\mathrm{SYS}}$ is exponentially stable.
{\color{black}Because} $\mathbf{F}_{\mathrm{I},\omega}$ is {\color{black} a Hurwitz matrix}, MVS collective zero dynamics is locally asymptotically stable, {\color{black}which follows by} Lyapunov's first method. 
\end{proof}
\begin{corollary}
If $\Xi=\rho_q$ for $q\in\{x,y,z\}$, $\mathbf{V}_{\mathrm{SYS}}=\mathbf{0}$. Then, Theorem \ref{Theorem55} proves stability of the MVS collective dynamics in a continuum deformation coordination.
\end{corollary}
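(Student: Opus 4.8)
The plan is to read this statement as a direct specialization of Theorem~\ref{Theorem55}: once one shows that the choice $\Xi=\rho_q$ for every $q\in\{x,y,z\}$ forces the reference-input residual $\mathbf{V}_{\mathrm{SYS}}$ in the error dynamics \eqref{ERROR} to vanish, the full closed-loop error dynamics coincides with the ``error zero dynamics'' already treated in Theorem~\ref{Theorem55}, and there is nothing left to prove. So the proof splits into two parts: (i) verify $\mathbf{V}_{\mathrm{SYS}}=\mathbf{0}$; (ii) invoke Theorem~\ref{Theorem55}.

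For part (i), I would substitute $\Xi=\rho_q$ into the feedforward portion $\sum_{k=\rho_q-\Xi}^{\rho_q}\gamma_{i,k}L_f^{\rho_q-k}(q_{d,i}-q_i)$ of the control law \eqref{controllable} and into the block expression for $\mathbf{V}_{\mathrm{SYS},q}$ in \eqref{VMUS}. With $\Xi=\rho_q$ the feedforward covers every derivative order $0,1,\dots,\rho_q$ of the local desired signal, so after collecting terms through the partitioned/Kronecker structure of $\mathbf{B}_{\mathrm{SYS}}$, the diagonal gain factors $\mathbf{\Gamma}_j$, and the leader-to-follower identity $\mathbf{z}_{q,f,HT}=\mathbf{W}_L\mathbf{z}_{q,l,HT}$ from Remark~\ref{RM3}, every residual drive built from the leaders' global desired trajectory and its derivatives is cancelled by a matching feedforward term. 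This yields $\mathbf{V}_{\mathrm{SYS},q}=\mathbf{0}$ for each $q$, hence $\mathbf{V}_{\mathrm{SYS}}=\mathbf{0}$, and \eqref{ERROR} collapses to the autonomous linear system $\dot{\mathbf{E}}_{\mathrm{SYS}}=\mathbf{A}_{\mathrm{SYS}}\mathbf{E}_{\mathrm{SYS}}$.

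For part (ii), this is exactly the system whose local asymptotic stability at the origin is established in Theorem~\ref{Theorem55} under the present hypotheses: the gains are chosen so that $\mathbf{A}_{\mathrm{SYS},q}$ is Hurwitz for $q\in\{x,y,z\}$, so the external error dynamics is exponentially stable; the internal dynamics \eqref{Internal} has a locally asymptotically stable zero dynamics with $\mathbf{F}_{\mathrm{I},\omega}$ Hurwitz; and the resulting cascade is then locally asymptotically stable by Lyapunov's first method. Since now $\mathbf{V}_{\mathrm{SYS}}=\mathbf{0}$ identically, this conclusion applies to the \emph{whole} MVS collective dynamics \eqref{MVSD} and not merely to its zero dynamics; in particular $\mathbf{E}_{\mathrm{SYS}}=\mathbf{X}_{\mathrm{SYS}}-\mathbf{X}_{\mathrm{SYS},HT}\to\mathbf{0}$, so $\mathbf{r}_i(t)\to\mathbf{r}_{i,HT}(t)$ for all $i\in\mathcal{V}_R$ and the desired homogeneous-deformation coordination \eqref{homogtrans} is acquired. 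I expect the only delicate point to be part (i) --- confirming that $\Xi=\rho_q$ annihilates \emph{all} of $\mathbf{V}_{\mathrm{SYS}}$, including the top-order term, once the index ranges in \eqref{controllable} and \eqref{VMUS} and the block structure of $\mathbf{A}_{\mathrm{SYS},q}$, $\mathbf{B}_{\mathrm{SYS}}$ and $\mathbf{W}_L$ are tracked carefully; everything after that is a one-line appeal to Theorem~\ref{Theorem55}.
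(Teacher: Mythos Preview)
Your proposal is correct and matches the paper's (implicit) reasoning: the paper states this corollary without proof, treating it as an immediate consequence of the definition of $\mathbf{V}_{\mathrm{SYS},q}$ in \eqref{VMUS} together with Theorem~\ref{Theorem55}. Your two-step plan---(i) observe that setting $\Xi=\rho_q$ empties the residual sum so that $\mathbf{V}_{\mathrm{SYS}}=\mathbf{0}$ and the error dynamics \eqref{ERROR} becomes autonomous, and (ii) invoke Theorem~\ref{Theorem55} for the resulting cascade---is exactly the intended reading, and your added remark that the conclusion then applies to the full collective dynamics rather than just its zero dynamics is the point of the corollary.
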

\begin{corollary}
If the internal dynamics \eqref{Internal} is linear, then, {\color{black}the} MVS collective dynamics can be represented by linear external and internal dynamics. Under this {\color{black}assumption}, Theorem \ref{Theorem55} can be used to ensure stability of the MVS continuum deformation and assign an upper bound $\delta$ for deviation of every vehicle from {\color{black}the} desired position given by {\color{black}a} {\color{black}homogeneous} deformation, e.g. $\|\mathbf{r}_i(t)-\mathbf{r}_{i,HT}(t)\|\leq \delta,~\forall i\in \mathcal{V},~\forall t\geq t_0$. 
\end{corollary}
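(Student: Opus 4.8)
The plan is to exploit the observation that, once the internal dynamics \eqref{Internal} is linear, the full error description collapses to a genuinely \emph{linear} system, so the qualitative conclusion of Theorem \ref{Theorem55} upgrades from local asymptotic stability to \emph{global exponential} stability, after which the residual leader feedforward can be treated as a bounded input. First I would write the internal dynamics exactly as $\mathbf{F}_{\mathrm{INT}}(\omega,\mathbf{X}_{\mathrm{SYS}})=\mathbf{F}_{\mathrm{I},\omega}\omega+(\text{a linear map of }\mathbf{X}_{\mathrm{SYS}})$, with \emph{no} higher-order remainder, so that the cascade appearing in the proof of Theorem \ref{Theorem55} is not merely a linearization but holds identically. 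Combined with the gain choice that makes $\mathbf{A}_{\mathrm{SYS}}$ Hurwitz and the standing hypothesis that $\mathbf{F}_{\mathrm{I},\omega}$ is Hurwitz, the unforced error cascade (obtained by setting $\mathbf{V}_{\mathrm{SYS}}=\mathbf{0}$ in \eqref{ERROR}) is then globally exponentially stable; in particular there exist constants $M\ge 1$ and $\lambda>0$ with $\|e^{\mathbf{A}_{\mathrm{SYS}}t}\|\le M e^{-\lambda t}$.

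Next I would reinstate the leader feedforward and bound the deviation via variation of constants. Because every leader desired component $q_{i,HT}$ and the Lie-derivative terms $L_{\mathbf{f}_i}^{\rho_q-j}q_{i,HT}$ entering \eqref{VMUS} are constructed from the smooth fifth-order profiles of \eqref{poly1}--\eqref{barsss} on the compact horizon $[t_s,t_f]$ (and held at the steady final configuration thereafter), while the gain matrices $\mathbf{\Gamma}_j$ and the constant one-sum matrix $\mathbf{W}_L$ are fixed, the forcing $\mathbf{V}_{\mathrm{SYS}}$ is uniformly bounded, say $\|\mathbf{V}_{\mathrm{SYS}}(t)\|\le\bar V$ for all $t\ge t_0$. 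Applying the variation-of-constants formula to \eqref{ERROR} together with the exponential bound on $e^{\mathbf{A}_{\mathrm{SYS}}t}$ yields
\[
\|\mathbf{E}_{\mathrm{SYS}}(t)\|\le M e^{-\lambda(t-t_0)}\|\mathbf{E}_{\mathrm{SYS}}(t_0)\|+\frac{M}{\lambda}\bar V,\qquad t\ge t_0,
\]
so the aggregate tracking error stays bounded for all time by $\delta_E:=M\|\mathbf{E}_{\mathrm{SYS}}(t_0)\|+M\bar V/\lambda$; this is exactly the bounded-input bounded-state property of an exponentially stable linear system, which is what licenses using Theorem \ref{Theorem55} to \emph{assign} rather than merely assert the deviation limit.

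Finally I would project this aggregate bound onto individual vehicles. Since for each $q\in\{x,y,z\}$ the scalar deviation $q_i-q_{i,HT}$ is one coordinate of $\mathbf{E}_{\mathrm{SYS}}$, we have $|q_i(t)-q_{i,HT}(t)|\le\|\mathbf{E}_{\mathrm{SYS}}(t)\|\le\delta_E$, whence $\|\mathbf{r}_i(t)-\mathbf{r}_{i,HT}(t)\|=\big(\sum_{q}(q_i-q_{i,HT})^2\big)^{1/2}\le\sqrt{3}\,\delta_E=:\delta$ for every $i\in\mathcal{V}$ and every $t\ge t_0$, which is the claimed uniform bound. The main obstacle I anticipate is justifying the uniform boundedness of $\mathbf{V}_{\mathrm{SYS}}$: one must confirm that the polynomial leader profiles have uniformly bounded derivatives up to the order required by the $L_{\mathbf{f}_i}^{\rho_q-j}$ terms on each segment, and that the post-$t_f$ steady configuration contributes only a bounded constant feedforward; once that is in hand, global exponential stability of the homogeneous error dynamics closes the estimate for all $t\ge t_0$.
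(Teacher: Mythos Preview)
The paper states this corollary without proof, so there is no ``paper's own proof'' to compare against. Your argument is sound and, in fact, anticipates exactly the tool the paper later invokes: the variation-of-constants representation you use is precisely equation (BB) in Section~VII, which the authors exploit for travel-time planning rather than for proving this corollary.

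Two minor observations on your write-up. First, the error dynamics \eqref{ERROR} for $\mathbf{E}_{\mathrm{SYS}}$ is already decoupled from the internal state $\omega$: the forcing $\mathbf{V}_{\mathrm{SYS}}$ depends only on leader desired trajectories (see \eqref{VMUS}), not on $\omega$. Hence the $\delta$-bound on $\|\mathbf{r}_i-\mathbf{r}_{i,HT}\|$ follows from $\mathbf{A}_{\mathrm{SYS}}$ being Hurwitz and $\mathbf{V}_{\mathrm{SYS}}$ being bounded, without needing to invoke the internal cascade at all; the linearity of the internal dynamics and the Hurwitz property of $\mathbf{F}_{\mathrm{I},\omega}$ are required only for the \emph{stability} claim on the full collective dynamics, not for the deviation bound. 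You conflate these slightly when you write ``the unforced error cascade'' and then bound only $e^{\mathbf{A}_{\mathrm{SYS}}t}$. Second, your concern about bounding $\mathbf{V}_{\mathrm{SYS}}$ is well-placed and is exactly what the paper handles in Section~VII by noting that the quintic $\beta(t,T_k)$ has derivatives that decrease in $T_k$, so $\mathbf{V}_{\mathrm{SYS}}$ can be made as small as desired by choosing $T_k$ large enough; this is how $\delta$ is ultimately \emph{assigned} in practice rather than merely shown to exist.
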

}

\section{\hspace{0.2cm}Problem 2: Continuum Deformation Safety Specification}
\label{SimpleSafety}
\textit{A continuum deformation coordination is called \underline{valid} over time interval $[t_s,t_{f}]$ if the safety requirements are all satisfied at any time $t_s\leq t\leq t_f$. Required safety conditions are specified in Section \ref{Safety Conditions for MVS Continuum Deformation Coordination}. Also, sufficient conditions for inter-agent collision avoidance and vehicle containment are provided in Section \ref{Sufficient conditions for Collision Avoidance}.}

\subsection{Safety Conditions for MVS Continuum Deformation Coordination}\label{Safety Conditions for MVS Continuum Deformation Coordination}
This paper considers the following four safety requirements for a valid continuum deformation coordination:

\textbf{Safety Condition 1 (Bounded Deviation):} Transient error of every follower vehicle $i\in \mathcal{V}_F$ {\color{black}must be bounded} at any time $t\in [t_s,t_{f}]$. This condition can be expressed as follows:
\begin{equation}
\label{deltainequality}
    \forall t\in[t_s,t_f],~\forall i\in \mathcal{V},\qquad \|\mathbf{r}_{i}(t)-\mathbf{r}_{i,HT}(t)\|_2\leq \delta,
\end{equation}
where $\delta>0$ is constant.

\textbf{Safety Condition 2 (Inter-Agent Collision Avoidance):} Inter-agent collision {\color{black}must be} avoided in an MVS continuum deformation coordination {\color{black}by satisfying the condition}
\begin{equation}
\label{collisionavoidancecondition}
    \forall t\in[t_s,t_f],~\forall i,j\in \mathcal{V},~i\neq j,\qquad \|\mathbf{r}_{i}(t)-\mathbf{r}_{i}(t)\|_2\geq 2\epsilon,
\end{equation}
{\color{black}where $\epsilon>0$ is the radius of the ball enclosing each vehicle.}

\textbf{Safety Condition 3 (Vehicle Containment Condition):} {\color{black}For continuum deformation coordination in an obstacle-laden environment, the MVS {\color{black}needs to be contained} contained by} a virtual $n$-D simplex, called \emph{virtual containment simplex ({\color{black}VCS})}, at any time $t$. {\color{black}By ensuring MVS containment, obstacle collision avoidance is guaranteed if the {\color{black}VCS} boundary surfaces {\color{black}do not} hit obstacles in the motion space.} 

Let $\mathbf{h}_{i,0}=h_{x,i,0}{\color{black}\hat{\mathbf{e}}_1}+h_{y,i,0}{\color{black}\hat{\mathbf{e}}_2}+h_{z,i,0}{\color{black}\hat{\mathbf{e}}_3}$ and $\mathbf{h}_{i}\left(t\right)=h_{x,i}{\color{black}\hat{\mathbf{e}}_1}+h_{y,i}{\color{black}\hat{\mathbf{e}}_2}+h_{z,i}{\color{black}\hat{\mathbf{e}}_3}$($i=1,\cdots,n+1$) denote positions of vertex $i$ of the {\color{black}VCS} at reference time $t_0$ and current time $t$, respectively. {\color{black}VCS} evolution is defined by a {\color{black}homogeneous} transformation, therefore, $\mathbf{h}_{i,0}$ and $\mathbf{h}_{i}\left(t\right)$ are related by
\begin{equation}
    i=1,\cdots,n+1,\qquad \qquad \mathbf{h}_{i}\left(t\right)=\mathbf{Q}\left(t\right)\mathbf{h}_{i,0}+\mathbf{d}\left(t\right),
\end{equation}
where $\mathbf{Q}$ and $\mathbf{d}$ are {\color{black}computed based on} leaders' positions {\color{black}in the reference configuration and the current configuration at current time $t$} per Section \ref{MVS Homogeneous Deformation Coordination}. {\color{black}The MVS containment condition is mathematically specified by
\begin{equation}
    t\in [t_s,t_f],~\forall i\in \mathcal{V},\qquad \mathbf{\Theta}_n\left(\mathbf{h}_1\left(t\right),\cdots,\mathbf{h}_{n+1}(t),\mathbf{r}_i(t)\right)> \mathbf{0},
\end{equation}
{\color{black}where $\mathbf{\Theta}_n$ is an appropriately defined function
defined by Eq.  \eqref{Thetaaan}.}
} 

\textbf{Safety Condition 4 (Feasibility of Vehicular Inputs):} A desired continuum deformation is planned by leaders  and must be followed by every vehicle in the team. In other words, the generalized coordinate $\mathbf{s}_\varrho^n(t)$ ($\varrho\in\{\mathrm{OF},\mathrm{OL}\},~n=1,2,3$) must be planned such that every vehicle can follow the desired coordination by employing {\color{black}control inputs which satisfy control constraints}
\begin{equation}
t\in [t_s,t_f],~\forall i\in \mathcal{V},\qquad     \mathbf{u}_i(t)\in \mathcal{U}_i.
\end{equation}

\subsection{Sufficient conditions for Collision Avoidance}\label{Sufficient conditions for Collision Avoidance} It is computationally expensive to ensure inter-agent collision avoidance by satisfying the safety condition \eqref{deltainequality} for every vehicle pair $i$ and $j$. Because MVS desired coordination is defined by an affine transformation, inter-agent collision avoidance and follower containment conditions can be {\color{black}enforced} by assigning lower limits on eigenvalues of {\color{black}the} matrix $\mathbf{U}_D=\left(\mathbf{Q}^T\mathbf{Q}\right)^\frac{1}{ 2}$. Section \ref{Sufficient conditions for Collision Avoidance}-1 provides a conservative condition for {\color{black}follower} containment and inter-agent collision avoidance in an obstacle-laden environment. Furthermore, an aggressive condition for inter-agent collision avoidance in obstacle-free environments is provided in Section \ref{Sufficient conditions for Collision Avoidance}-2.

\subsubsection{Conservative Collision Avoidance Condition}\label{Conservative Collision Avoidance Condition}
{\color{black}We can ensure collision avoidance and vehicle containment by assigning a {single} lower-limit for \underline{all} eigenvalues of matrix $\mathbf{U}_D(t)$, denoted by $\lambda_1(t)$, $\lambda_2(t)$, and $\lambda_3(t)$, at any time $t$.}


\begin{proposition}\label{prop5}
Assume agents' desired positions are defined by the homogeneous transformation given in Eq. \eqref{homogtrans}. Suppose that inequality \eqref{deltainequality} is satisfied at any time $t$. Inter-agent collision avoidance, defined by Eq. \eqref{collisionavoidancecondition}, is guaranteed, if
\begin{equation}
\label{originalcollisionavoidance}
   t\in[t_s,t_f],~\forall i,j\in \mathcal{V},~i\neq j,~~ \|\mathbf{r}_{i,HT}(t)-\mathbf{r}_{j,HT}(t)\|_2\geq 2\left(\delta+\epsilon\right).
\end{equation}
\end{proposition}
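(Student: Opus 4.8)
The plan is to combine the triangle inequality with the hypotheses in a direct, short argument. The key observation is that Safety Condition 1, inequality \eqref{deltainequality}, bounds each actual position relative to its global desired position, while the assumed inequality \eqref{originalcollisionavoidance} separates the global desired positions of any two distinct agents. So the strategy is to write the actual inter-agent separation $\|\mathbf{r}_i(t)-\mathbf{r}_j(t)\|_2$, insert and subtract the desired positions $\mathbf{r}_{i,HT}(t)$ and $\mathbf{r}_{j,HT}(t)$, and estimate the cross terms using \eqref{deltainequality}.

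First I would fix an arbitrary time $t\in[t_s,t_f]$ and an arbitrary pair of distinct agents $i,j\in\mathcal{V}$. Then I would apply the reverse triangle inequality in the form
\[
\|\mathbf{r}_i(t)-\mathbf{r}_j(t)\|_2 \ge \|\mathbf{r}_{i,HT}(t)-\mathbf{r}_{j,HT}(t)\|_2 - \|\mathbf{r}_i(t)-\mathbf{r}_{i,HT}(t)\|_2 - \|\mathbf{r}_j(t)-\mathbf{r}_{j,HT}(t)\|_2.
\]
By Safety Condition 1 the last two terms are each at most $\delta$, and by the hypothesis \eqref{originalcollisionavoidance} the first term is at least $2(\delta+\epsilon)$. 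Substituting gives
\[
\|\mathbf{r}_i(t)-\mathbf{r}_j(t)\|_2 \ge 2(\delta+\epsilon) - \delta - \delta = 2\epsilon,
\]
which is exactly the inter-agent collision avoidance condition \eqref{collisionavoidancecondition}. Since $t$, $i$, and $j$ were arbitrary, the conclusion follows for all times and all distinct agent pairs.

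There is no real obstacle here; the only mild care needed is to make sure the reverse triangle inequality is applied in the correct direction, i.e. to lower-bound the actual separation by the desired separation minus the two deviation terms, and to confirm that \eqref{deltainequality} is assumed to hold for all $i\in\mathcal{V}$ (including whichever of $i,j$ is being considered) over the whole interval $[t_s,t_f]$, which is indeed how Safety Condition 1 is stated. The proposition is essentially a robustness margin computation: reserve a buffer of $\delta$ around each agent's planned position, and then a planned separation of $2(\delta+\epsilon)$ guarantees an actual separation of $2\epsilon$.
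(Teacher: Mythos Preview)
Your proof is correct and follows essentially the same approach as the paper: decompose $\mathbf{r}_i-\mathbf{r}_j$ by inserting the desired positions, apply the reverse triangle inequality, and then bound using \eqref{deltainequality} and \eqref{originalcollisionavoidance} to obtain $\|\mathbf{r}_i-\mathbf{r}_j\|_2\ge 2(\delta+\epsilon)-2\delta=2\epsilon$.
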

\begin{proof}
{\color{black}Note that}
\[
\left(\mathbf{r}_{i}-\mathbf{r}_{j}\right)=\left(\mathbf{r}_{i,HT}-\mathbf{r}_{j,HT}\right)-\left(\mathbf{r}_{i,HT}-\mathbf{r}_i\right)-\left(\mathbf{r}_j-\mathbf{r}_{j,HT}\right).
\]
{\color{black}Hence,}
\[
\begin{split}
    \|\mathbf{r}_{i}-\mathbf{r}_{j}\|_2
     \geq&\|\mathbf{r}_{i,HT}-\mathbf{r}_{j,HT}\|_2-\|\mathbf{r}_{i,HT}-\mathbf{r}_i\|_2-\|\mathbf{r}_j-\mathbf{r}_{j,HT}\|_2\\
     \geq & 2\left(\delta+\epsilon\right)-\delta-\delta=2\epsilon.
\end{split}
\]
\end{proof}
Proposition \ref{prop5} implies {\color{black}that, assuming \eqref{deltainequality} holds,} Eq. \eqref{originalcollisionavoidance} can be used instead of Eq. \eqref{collisionavoidancecondition}
to ensure inter-agent collision avoidance. Substituting $\mathbf{r}_{i,HT}=\mathbf{Q}\mathbf{r}_{i,0}+\mathbf{d}$ and $\mathbf{r}_{j,HT}=\mathbf{Q}\mathbf{r}_{j,0}+\mathbf{d}$, \eqref{originalcollisionavoidance} holds if and only if
\begin{equation}
\label{eQ22}
\begin{split}
    \forall i,j\in \mathcal{V},~i\neq j\qquad 4\left(\delta+\epsilon\right)^2\leq
    &
    \left(\mathbf{r}_{i,0}-\mathbf{r}_{j,0}\right)^T\mathbf{U}_D^2(t)\left(\mathbf{r}_{i,0}-\mathbf{r}_{j,0}\right).
\end{split}
\end{equation}
{\color{black}for all $t$ in $[t_s,t_f]$} 
where $\mathbf{U}_D^2(t)=\mathbf{Q}^T(t)\mathbf{Q}(t)$.
\begin{theorem}\label{conservativecollisionavoidancetheorem}
Define
\begin{equation}
\label{dlmin}
d_{s}=\min\limits_{i,j\in\mathcal{V}_R,i\neq j}\|\mathbf{r}_{i,0}-\mathbf{r}_{j,0} \|
\end{equation}
as the minimum separation distance between any two agents in the reference configuration. Inter-agent collision avoidance is guaranteed if
\begin{equation}
\label{colavoidcond}
    \lambda_i\geq  \lambda_{\mathrm{min}},\qquad i=1,2,3,
\end{equation}
where $\lambda_1$, $\lambda_2$, and $\lambda_3$ are the eigenvalues of {\color{black}$\mathbf{U}_D(t)$} and
\begin{equation}
\label{lambdaminnn}
   \lambda_{\mathrm{min}}= 2 \dfrac{\delta+\epsilon}{d_s}.
\end{equation}
\end{theorem}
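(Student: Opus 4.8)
The plan is to reduce the claim to the quadratic-form inequality \eqref{eQ22} and then bound that form below using the spectrum of $\mathbf{U}_D(t)$. First I would recall that the present theorem inherits the standing hypothesis of Proposition \ref{prop5}, namely Safety Condition~1 \eqref{deltainequality}; under that hypothesis Proposition \ref{prop5} shows that inter-agent collision avoidance \eqref{collisionavoidancecondition} follows from \eqref{originalcollisionavoidance}, which (after substituting $\mathbf{r}_{i,HT}=\mathbf{Q}\mathbf{r}_{i,0}+\mathbf{d}$, $\mathbf{r}_{j,HT}=\mathbf{Q}\mathbf{r}_{j,0}+\mathbf{d}$ and using $\mathbf{U}_D^2(t)=\mathbf{Q}^T(t)\mathbf{Q}(t)$) is exactly the inequality \eqref{eQ22}. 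Hence it suffices to show that \eqref{colavoidcond} implies, for every pair $i\neq j$ in $\mathcal{V}_R$ and every $t\in[t_s,t_f]$, that $4(\delta+\epsilon)^2\le(\mathbf{r}_{i,0}-\mathbf{r}_{j,0})^T\mathbf{U}_D^2(t)(\mathbf{r}_{i,0}-\mathbf{r}_{j,0})$.

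The next step uses the spectral structure of $\mathbf{U}_D(t)$ recorded in \eqref{ROTDEF}: $\mathbf{U}_D(t)=\sum_{k=1}^3\lambda_k(t)\hat{\mathbf{u}}_k\hat{\mathbf{u}}_k^T$ is symmetric positive definite with orthonormal eigenvectors $\hat{\mathbf{u}}_1,\hat{\mathbf{u}}_2,\hat{\mathbf{u}}_3$ and positive eigenvalues $\lambda_1(t),\lambda_2(t),\lambda_3(t)$, so $\mathbf{U}_D^2(t)$ is symmetric positive definite with the same eigenvectors and eigenvalues $\lambda_1^2(t),\lambda_2^2(t),\lambda_3^2(t)$. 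Writing $\mathbf{v}_{ij}=\mathbf{r}_{i,0}-\mathbf{r}_{j,0}=\sum_{k=1}^3 c_k\hat{\mathbf{u}}_k$ with $\sum_{k=1}^3 c_k^2=\|\mathbf{v}_{ij}\|_2^2$ yields the Rayleigh-quotient bound
\[
\mathbf{v}_{ij}^T\mathbf{U}_D^2(t)\mathbf{v}_{ij}=\sum_{k=1}^3\lambda_k^2(t)c_k^2\ \ge\ \Big(\min_{k\in\{1,2,3\}}\lambda_k(t)\Big)^2\|\mathbf{v}_{ij}\|_2^2 .
\]
Combining this with $\|\mathbf{v}_{ij}\|_2\ge d_s$ for all distinct $i,j\in\mathcal{V}_R$ (from the definition \eqref{dlmin}) and with the hypothesis $\lambda_k(t)\ge\lambda_{\mathrm{min}}$ for $k=1,2,3$ (so $\min_k\lambda_k(t)\ge\lambda_{\mathrm{min}}$), we get
\[
\mathbf{v}_{ij}^T\mathbf{U}_D^2(t)\mathbf{v}_{ij}\ \ge\ \lambda_{\mathrm{min}}^2 d_s^2=\Big(2\,\frac{\delta+\epsilon}{d_s}\Big)^2 d_s^2=4(\delta+\epsilon)^2 ,
\]
which is precisely \eqref{eQ22}; invoking Proposition \ref{prop5} then gives \eqref{collisionavoidancecondition}, completing the proof. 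Note the estimate is uniform in $t$ because it only uses the pointwise eigenvalue lower bound on $\mathbf{U}_D(t)$ together with the fixed reference separations.

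There is no deep obstacle here; the argument is a one-line spectral estimate once \eqref{eQ22} is in hand. The only points requiring care are (i) making explicit that the theorem is stated under the standing assumption \eqref{deltainequality} of Proposition \ref{prop5}, (ii) justifying that $\mathbf{U}_D(t)$ is the symmetric positive definite square root of $\mathbf{Q}^T(t)\mathbf{Q}(t)$ so that $\mathbf{U}_D^2(t)$ shares its eigenvectors and has eigenvalues $\lambda_k^2(t)$, and (iii) keeping the chain of implications Proposition~\ref{prop5} $\Rightarrow$ \eqref{originalcollisionavoidance} $\Leftrightarrow$ \eqref{eQ22} straight so that the final spectral bound is applied to the correct quadratic form.
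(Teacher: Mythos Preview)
Your proposal is correct and follows essentially the same approach as the paper: both reduce to the quadratic-form inequality \eqref{eQ22} via Proposition~\ref{prop5} and then bound $(\mathbf{r}_{i,0}-\mathbf{r}_{j,0})^T\mathbf{U}_D^2(t)(\mathbf{r}_{i,0}-\mathbf{r}_{j,0})$ below by $\lambda_{\mathrm{min}}^2 d_s^2$ using the spectrum of $\mathbf{U}_D(t)$. Your version is in fact more explicit than the paper's, which simply asserts the bound without writing out the Rayleigh-quotient step.
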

\begin{proof}
The right-hand side of Eq. \eqref{eQ22} satisfies the following inequality:
\begin{equation}
    \forall i,j\in \mathcal{V},~i\neq j,\qquad \lambda_{\mathrm{min}}^2 d_s^2\leq \left(\mathbf{r}_{i,0}-\mathbf{r}_{j,0}\right)^T\mathbf{U}_D^2(t)\left(\mathbf{r}_{i,0}-\mathbf{r}_{j,0}\right).
\end{equation}
Inter-agent collision avoidance is guaranteed {\color{black}when}
\[
 4\left(\delta+\epsilon\right)^2\leq \lambda_i^2d_s^2,\qquad  i=1,2,3.
\]
{\color{black}Thus, collision} avoidance is ensured if Eq. \eqref{colavoidcond} is satisfied.
\end{proof}

\begin{theorem}\label{thm4}
Assume each vehicle is enclosed by a ball with radius $\epsilon$. Given deviation upper-bound $\delta$ (Eq. \eqref{deltainequality}), we define
\begin{equation}
\label{deltaaamax}
\delta_{\mathrm{max}}=\min\big\{\left(d_b-\epsilon\right),\frac{1}{2}\left(d_s-2\epsilon\right)\big\},
\end{equation} 
where {$d_b$ is the minimum distance from the boundary of the leading simplex at in the reference configuration.} {\color{black}The vehicle} containment and inter-agent collision avoidance are guaranteed {\color{black}at time instant $t$} if eigenvalues $\lambda_1,~\lambda_2,~\lambda_3$ {\color{black}of $\mathbf{U}_D(t)$} satisfy the following inequality constraint \cite{rastgoftar2018asymptotic}:
\begin{equation}
\label{NOinteragent}
j=1,2,3,\qquad    \lambda_{j}\geq \lambda_{\mathrm{CD,min}}
\end{equation}
where
\begin{equation}
\label{lcmin}
    \lambda_{\mathrm{CD,min}}=\dfrac{\delta+\epsilon}{\delta_{\mathrm{max}}+\epsilon}.
\end{equation}
\end{theorem}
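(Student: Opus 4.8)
The plan is to obtain both conclusions of Theorem~\ref{thm4} --- inter-agent collision avoidance and vehicle containment --- from a single geometric fact about how Euclidean distances scale under the homogeneous map $\mathbf{x}\mapsto\mathbf{Q}(t)\mathbf{x}+\mathbf{d}(t)$, combined with the already-established Theorem~\ref{conservativecollisionavoidancetheorem}.

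First I would dispatch the collision-avoidance part. By Theorem~\ref{conservativecollisionavoidancetheorem}, inter-agent collision avoidance at time $t$ is guaranteed once $\lambda_j\ge\lambda_{\mathrm{min}}=2(\delta+\epsilon)/d_s$ for $j=1,2,3$. The definition \eqref{deltaaamax} gives $\delta_{\mathrm{max}}\le\frac{1}{2}(d_s-2\epsilon)$, hence $\delta_{\mathrm{max}}+\epsilon\le d_s/2$, and therefore $\lambda_{\mathrm{CD,min}}=(\delta+\epsilon)/(\delta_{\mathrm{max}}+\epsilon)\ge 2(\delta+\epsilon)/d_s=\lambda_{\mathrm{min}}$. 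So the hypothesis $\lambda_j\ge\lambda_{\mathrm{CD,min}}$ implies $\lambda_j\ge\lambda_{\mathrm{min}}$, and collision avoidance follows immediately.

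For containment, the key step is to lower-bound the distance from the global desired position $\mathbf{r}_{i,HT}(t)$ to the boundary of the (transformed) leading simplex. Writing a bounding face of the reference simplex as $\{\mathbf{x}:\mathbf{n}^T\mathbf{x}=c\}$ with $\|\mathbf{n}\|=1$, the image face under the homogeneous map is $\{\mathbf{y}:(\mathbf{Q}^{-T}\mathbf{n})^T(\mathbf{y}-\mathbf{d})=c\}$, and a direct computation shows that the signed distance of $\mathbf{r}_{i,HT}=\mathbf{Q}\mathbf{r}_{i,0}+\mathbf{d}$ from that image face equals $(\mathbf{n}^T\mathbf{r}_{i,0}-c)/\|\mathbf{Q}^{-T}\mathbf{n}\|$. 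Since the simplex vertices and all vehicles obey the same transformation, the numerator is exactly the reference signed distance, which is at least $d_b$ in magnitude by the definition of $d_b$; and $\|\mathbf{Q}^{-T}\mathbf{n}\|\le\|\mathbf{Q}^{-1}\|_2=1/\sigma_{\min}(\mathbf{Q})$, where $\sigma_{\min}(\mathbf{Q})=\min\{\lambda_1,\lambda_2,\lambda_3\}$ because $\mathbf{U}_D=(\mathbf{Q}^T\mathbf{Q})^{1/2}$. Applying this to every face (each normal gives the same upper bound $1/\sigma_{\min}(\mathbf{Q})$) shows that $\mathbf{r}_{i,HT}(t)$ lies at distance at least $d_b\min\{\lambda_1,\lambda_2,\lambda_3\}$ from the boundary of the leading simplex, and on the interior side since the affine map is nonsingular and sign-preserving.

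To close, note that a vehicle of radius $\epsilon$ centered at $\mathbf{r}_i(t)$ is contained in the leading simplex whenever $\mathbf{r}_{i,HT}(t)$ is at distance at least $\delta+\epsilon$ from its boundary, because $\|\mathbf{r}_i(t)-\mathbf{r}_{i,HT}(t)\|\le\delta$ by Safety Condition~1. Combining with the previous bound, it suffices that $d_b\min\{\lambda_1,\lambda_2,\lambda_3\}\ge\delta+\epsilon$, i.e. $\min\{\lambda_1,\lambda_2,\lambda_3\}\ge(\delta+\epsilon)/d_b$. Since $\delta_{\mathrm{max}}+\epsilon\le d_b$ by \eqref{deltaaamax}, we have $(\delta+\epsilon)/d_b\le\lambda_{\mathrm{CD,min}}$, so the hypothesis $\lambda_j\ge\lambda_{\mathrm{CD,min}}$ for $j=1,2,3$ yields containment. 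The part I expect to be the main obstacle is the distance-transformation step: one must argue that every supporting hyperplane of the transformed simplex is the image of a supporting hyperplane of the reference simplex --- so that the reference-to-current distance contracts by exactly the factor $\|\mathbf{Q}^{-T}\mathbf{n}\|^{-1}\ge\sigma_{\min}(\mathbf{Q})$ --- and one must treat the cases $n=1,2$ in which the leading simplex lives in a lower-dimensional affine subspace of $\mathbb{R}^3$, so that distances and the relevant singular values of $\mathbf{Q}$ should be read off within that subspace (where the ``extra'' eigenvalues $\lambda_{n+1},\dots,\lambda_3$ equal $1$ per Table~\ref{tab:l}).
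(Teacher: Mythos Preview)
Your argument is sound, but note that the paper does not actually supply its own proof of Theorem~\ref{thm4}: it simply refers the reader to \cite{rastgoftar2018safe}. Hence there is no in-paper argument to compare your approach against.

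That said, your proof is self-contained and correct. The collision-avoidance half is immediate from Theorem~\ref{conservativecollisionavoidancetheorem} once you observe $\delta_{\mathrm{max}}+\epsilon\le d_s/2$, so $\lambda_{\mathrm{CD,min}}\ge\lambda_{\mathrm{min}}$. For containment, your hyperplane-distance computation is the natural tool: writing a reference face as $\mathbf{n}^T\mathbf{x}=c$ and using $\|\mathbf{Q}^{-T}\mathbf{n}\|\le 1/\sigma_{\min}(\mathbf{Q})=1/\min_j\lambda_j$ gives exactly the scaling factor you need, and the inequality $\delta_{\mathrm{max}}+\epsilon\le d_b$ from \eqref{deltaaamax} closes the argument. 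Your caveat about the lower-dimensional cases $n=1,2$ is well placed; since the simplex then lies in the coordinate subspace spanned by $\hat{\mathbf{e}}_1,\dots,\hat{\mathbf{e}}_n$ and the remaining eigenvalues of $\mathbf{U}_D$ equal $1$ (Table~\ref{tab:l}), the face normals $\mathbf{n}$ lie in that subspace as well and the same bound $\|\mathbf{Q}^{-T}\mathbf{n}\|\le 1/\min_j\lambda_j$ continues to hold without modification.
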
 
{\color{black}
\begin{proof}
See the proof of Theorem \ref{thm4} in Ref. \cite{rastgoftar2018safe}.
\end{proof}

\begin{remark}\label{RMK5}
Theorems \ref{thm4} provides a {\color{black}conservative} collision avoidance guarantee condition independent of the total number of agents ($N$). Therefore, inter-agent collision avoidance can be guaranteed for a large-scale MVS  by checking {\color{black}this} single safety condition. However, {\color{black}the conditions \eqref{NOinteragent}-\eqref{lcmin}} can be overly restrictive when agents are not uniformly distributed in the reference configuration. For example, consider the non-uniform initial configuration shown in Fig. \ref{consvAGGS}  where $\delta=\delta_{\mathrm{max}}$. Here, $\lambda_{\mathrm{CD,min}}=1$ and the MVS continuum deformation coordination can be either rigid or expansive. However, contraction along the $\hat{\mathbf{e}}_{2}$ axis is safe but it is not permitted because eigenvalues $\lambda_1(t)$, $\lambda_2(t)$, and $\lambda_3(t)$ {\color{black}of $\mathbf{U}_D(t)$} must be greater than or equal to $1$ at any time $t$. 
\end{remark}
\begin{figure}
 \centering
 \subfigure[]{\includegraphics[width=0.45\linewidth]{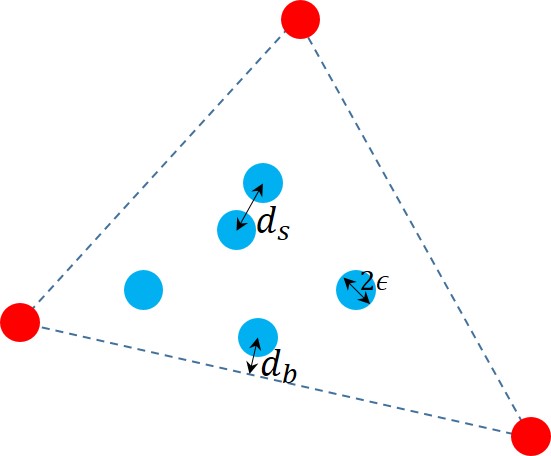}}
  \subfigure[]{\includegraphics[width=0.45\linewidth]{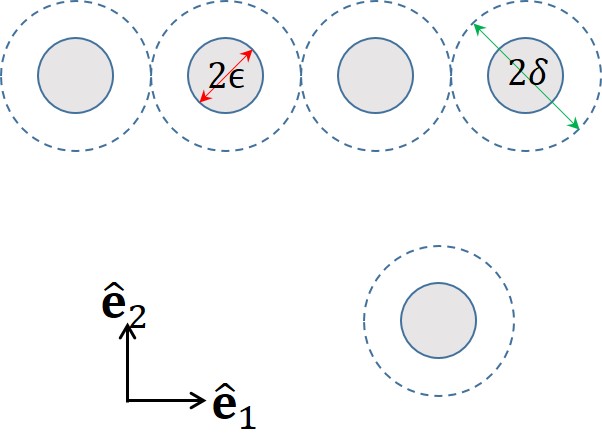}}
     \caption{(a) Schematic for MVS {\color{black}reference} configuration and the VCS containing all vehicles. (b) Schematic of a non-uniform MVS reference configuration where $\delta=\delta_{\mathrm{max}}$ and $\lambda_{\mathrm{CD,min}}=1$.}
\label{consvAGGS}
\end{figure}

\subsubsection{Relaxation of the Collision Avoidance Condition}\label{Aggressive Collision Avoidance Condition}
{\color{black}The issue in Remark \ref{RMK5} can be dealt with, if we only constrain {\color{black}the eigenvalue $\lambda_1$} of matrix $\mathbf{U}_D{\color{black}(t)}$ and direct eigenvector $\hat{\mathbf{u}}_1$ along the vector connecting a pair of agents with the minimum separation distance.} Given MVS reference positions, reference eigenvector $\hat{\mathbf{u}}_{1,0}$ is defined as follows:
\begin{equation}
    \hat{\mathbf{u}}_{1,0}=\min\limits_{i,j\in\mathcal{V}_R,i\neq j}\left\{\dfrac{\mathbf{r}_{i,0}-\mathbf{r}_{j,0}}{\|\mathbf{r}_{i,0}-\mathbf{r}_{j,0}\|}\right\}.
\end{equation}

 Because we use the $3-2-1$ {\color{black}Euler angles} to define a rigid-body rotation, {\color{black}the} eigenvector $\hat{\mathbf{u}}_1$ {\color{black}of ${\mathbf{U}}_D(t)$} only depends on $\theta_u$ and $\psi_u$ at any time $t$ (see Eq. \eqref{ui}). Thus, the reference deformation angles $\theta_{u,0}$ and $\psi_{u,0}$ are obtained by
 \begin{subequations}
 \label{thspssassignment}
 \begin{equation}
     \theta_{u,0}=-\sin^{-1}\left(\hat{\mathbf{u}}_{1,0}\cdot\hat{\mathbf{e}}_3\right),
 \end{equation}
 \begin{equation}
     \psi_{u,0}=\tan^{-1}\left(\dfrac{\hat{\mathbf{u}}_{1,0}\cdot\hat{\mathbf{e}}_2}{\hat{\mathbf{u}}_{1,0}\cdot\hat{\mathbf{e}}_1}\right).
 \end{equation}
 \end{subequations}
{\color{black}Suppose now that the} desired continuum deformation is designed such that deformation angles $\phi_u(t)=\phi_{u,0}$,  $\theta_u(t)=\theta_{u,0}$ and $\psi_u(t)=\psi_{u,0}$ remain constant {\color{black}for} $t\in [t_s,t_f]$, {\color{black}where $\phi_{u,0}$ takes {\color{black}an} arbitrary value between $0$ and $2\pi$.} 
\begin{lemma}\label{lemmmma1}
If  $\phi_u(t)=\phi_{u,0}^*$ $\theta_u(t)=\theta_{u,0}^*$ and $\psi_u(t)=\psi_{u,0}^*$ are constant {\color{black}for} $t\in [t_s,t_f]$ and $\mathbf{r}_{i.HT}$ and $\mathbf{r}_{j.HT}$ are defined by the homogeneous transformation given in \eqref{homogtrans} ($\forall i,j\in \mathcal{V}_R$), then, the following relation holds:
\begin{equation}
\label{lemmaaaaproven}
\resizebox{0.99\hsize}{!}{%
$
    \forall i,j\in \mathcal{V}_R,\qquad \left(\mathbf{r}_{i,HT}(t)-\mathbf{r}_{j,HT}(t)\right)^T \left(\mathbf{R}_D(t)\hat{\mathbf{u}}_{l,0}\right)=\lambda_l(t)\left(\mathbf{r}_{i.0}-\mathbf{r}_{j,0}\right)^T{\hat{\mathbf{u}}_{l,0}}
$
}
\end{equation}
\end{lemma}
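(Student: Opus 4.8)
The plan is a short direct computation exploiting the decomposition \eqref{DECOM} together with the constancy of the deformation angles. First, I would substitute the homogeneous transformation \eqref{homogtrans} to write the displacement vector between the two global desired positions purely in terms of the Jacobian matrix and the reference positions:
\begin{equation}
\mathbf{r}_{i,HT}(t)-\mathbf{r}_{j,HT}(t)=\mathbf{Q}(t)\left(\mathbf{r}_{i,0}-\mathbf{r}_{j,0}\right),
\end{equation}
since the rigid-body displacement $\mathbf{d}(t)$ cancels. The key observation is that, because $\phi_u(t)$, $\theta_u(t)$, $\psi_u(t)$ are held fixed at $\phi_{u,0}^*$, $\theta_{u,0}^*$, $\psi_{u,0}^*$, the rotation $\mathbf{R}\left(\phi_u(t),\theta_u(t),\psi_u(t)\right)$ in \eqref{ui} is time-invariant, so each eigenvector satisfies $\hat{\mathbf{u}}_l(t)=\hat{\mathbf{u}}_{l,0}$ for all $t\in[t_s,t_f]$; consequently $\mathbf{U}_D(t)\hat{\mathbf{u}}_{l,0}=\lambda_l(t)\hat{\mathbf{u}}_{l,0}$ by the spectral form in \eqref{ROTDEF}.

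Next I would insert $\mathbf{Q}(t)=\mathbf{R}_D(t)\mathbf{U}_D(t)$ into the left-hand side of \eqref{lemmaaaaproven} and transpose:
\begin{equation}
\left(\mathbf{r}_{i,HT}-\mathbf{r}_{j,HT}\right)^T\!\left(\mathbf{R}_D\hat{\mathbf{u}}_{l,0}\right)
=\left(\mathbf{r}_{i,0}-\mathbf{r}_{j,0}\right)^T\mathbf{U}_D^T\mathbf{R}_D^T\mathbf{R}_D\hat{\mathbf{u}}_{l,0}.
\end{equation}
Using orthonormality of $\mathbf{R}_D$ (property 1 of the rotation matrix) gives $\mathbf{R}_D^T\mathbf{R}_D=\mathbf{I}_3$, and symmetry of the pure-deformation matrix gives $\mathbf{U}_D^T=\mathbf{U}_D$, so the right-hand side collapses to $\left(\mathbf{r}_{i,0}-\mathbf{r}_{j,0}\right)^T\mathbf{U}_D\hat{\mathbf{u}}_{l,0}$. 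Applying the eigenvalue relation $\mathbf{U}_D\hat{\mathbf{u}}_{l,0}=\lambda_l\hat{\mathbf{u}}_{l,0}$ then yields $\lambda_l(t)\left(\mathbf{r}_{i,0}-\mathbf{r}_{j,0}\right)^T\hat{\mathbf{u}}_{l,0}$, which is exactly \eqref{lemmaaaaproven}.

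There is essentially no hard part here: the statement is a linear-algebra identity once one recognizes that freezing the deformation angles freezes the eigenbasis. The only point requiring care is the bookkeeping of which matrix is symmetric versus orthogonal, and making explicit that $\hat{\mathbf{u}}_{l,0}$ remains a genuine eigenvector of $\mathbf{U}_D(t)$ for every $t$ in the interval — a fact that fails in general but holds precisely under the constant-angle hypothesis. I would state this eigenbasis-invariance explicitly as the first step so that the subsequent chain of equalities is transparent.
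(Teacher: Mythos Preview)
Your proposal is correct and follows essentially the same route as the paper: both start from $\mathbf{r}_{i,HT}-\mathbf{r}_{j,HT}=\mathbf{Q}(\mathbf{r}_{i,0}-\mathbf{r}_{j,0})$, insert $\mathbf{Q}=\mathbf{R}_D\mathbf{U}_D$, and use orthogonality of $\mathbf{R}_D$ together with the (time-invariant) eigenbasis of $\mathbf{U}_D$. The only cosmetic difference is that the paper writes out the full spectral sum $\mathbf{U}_D=\sum_h\lambda_h\hat{\mathbf{u}}_{h,0}\hat{\mathbf{u}}_{h,0}^T$ and then projects via $\hat{\mathbf{u}}_{l,0}^T\mathbf{R}_D^T$, whereas you invoke symmetry $\mathbf{U}_D^T=\mathbf{U}_D$ and the eigenvector relation $\mathbf{U}_D\hat{\mathbf{u}}_{l,0}=\lambda_l\hat{\mathbf{u}}_{l,0}$ directly; these are the same computation.
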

\begin{proof}
Given global desired positions of vehicles $i,j\in \mathcal{V}_R$ defined by Eq. \eqref{homogtrans}, we can write
\begin{equation}\label{prflemme}
    \resizebox{0.99\hsize}{!}{%
$
\begin{split}
    \left(\mathbf{r}_{i,HT}(t)-\mathbf{r}_{j,HT}(t)\right)=&\mathbf{Q}(t)\left(\mathbf{r}_{i,0}(t)-\mathbf{r}_{j,0}\right)\\
    =&\mathbf{R}_D(t)\underbrace{\sum_{h=1}^3\lambda_h(t)\hat{\mathbf{u}}_{h,0}{\hat{\mathbf{u}}}_{h,0}^T}_{\mathbf{U}_D{\color{black}(t)}}\left(\mathbf{r}_{i,0}-\mathbf{r}_{j,0}\right)\\
    =&\mathbf{R}_D(t)\sum_{h=1}^3\lambda_h(t)\hat{\mathbf{u}}_{h,0}\left(\mathbf{r}_{i,0}-\mathbf{r}_{j,0}\right)^T{\hat{\mathbf{u}}_{h,0}}\\
    =&\sum_{h=1}^3\lambda_h(t)\left[\left(\mathbf{r}_{i,0}-\mathbf{r}_{j,0}\right)^T\hat{\mathbf{u}}_{h,0}\right]
    \mathbf{R}_D(t)\hat{\mathbf{u}}_{h,0}
\end{split}
$
}
\end{equation}
Note that $\lambda_h(t)\left[\left(\mathbf{r}_{i,0}-\mathbf{r}_{j,0}\right)^T\hat{\mathbf{u}}_{h,0}\right]\in \mathbb{R}$ at any time $t$ ($h=1,2,3$). Thus, pre-multiplying both sides of Eq. \eqref{prflemme} by ${\hat{\mathbf{u}}_{l,0}}^T\mathbf{R}_D^T(t)$ and substituting 
\[
{\hat{\mathbf{u}}_{l,0}}^T\mathbf{R}_D^T(t)\mathbf{R}_D(t)\hat{\mathbf{u}}_{h,0}=\begin{cases}
1&h=l\\
0&h\neq l
\end{cases}
,
\]
Eq. \eqref{lemmaaaaproven} {\color{black}follows}.

\end{proof}
\begin{theorem}
Assume every vehicle is enclosed by a ball of radius $\epsilon$. MVS inter-agent collision avoidance is guaranteed if
\begin{equation}
\label{EQ64}
    \forall t\in [t_s,t_f],\qquad \lambda_1(t)\geq\lambda_{\mathrm{min}},
\end{equation}
where {\color{black}$\lambda_1(t)$ is the first eigenvalues of $\mathbf{U}_D(t)$,} $d_s$ is the minimum separation distance defined in \eqref{dlmin} and
\begin{equation}
\label{relaxedsafetyy}
\lambda_{\mathrm{min}}=2\dfrac{\delta+\epsilon}{d_{s}}.
\end{equation}
\end{theorem}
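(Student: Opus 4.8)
The plan is to chain the reduction already furnished by Proposition~\ref{prop5} with the projection identity of Lemma~\ref{lemmmma1}. By Proposition~\ref{prop5}, once Safety Condition~1 (the bounded-deviation bound \eqref{deltainequality}) is in force, the geometric collision-avoidance requirement \eqref{collisionavoidancecondition} is implied by the desired-position separation bound $\|\mathbf{r}_{i,HT}(t)-\mathbf{r}_{j,HT}(t)\|_2\ge 2(\delta+\epsilon)$ holding for every pair $i\neq j$ and every $t\in[t_s,t_f]$. So the task reduces to deducing this desired-position bound from the single scalar hypothesis \eqref{EQ64}, under the standing assumptions of this subsection: the deformation angles $\phi_u,\theta_u,\psi_u$ are frozen at $\phi_{u,0},\theta_{u,0},\psi_{u,0}$ and $\hat{\mathbf{u}}_{1,0}$ is aligned with the minimum-separation direction, so that $\mathbf{U}_D(t)=\sum_{h=1}^3\lambda_h(t)\hat{\mathbf{u}}_{h,0}\hat{\mathbf{u}}_{h,0}^T$ with a time-invariant eigenframe.

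I would first settle the binding pair. Let $(i^\star,j^\star)$ realize $d_s$ in \eqref{dlmin}; by the construction of $\hat{\mathbf{u}}_{1,0}$ we have $\mathbf{r}_{i^\star,0}-\mathbf{r}_{j^\star,0}=d_s\,\hat{\mathbf{u}}_{1,0}$, hence $(\mathbf{r}_{i^\star,0}-\mathbf{r}_{j^\star,0})^T\hat{\mathbf{u}}_{1,0}=d_s$. Applying Lemma~\ref{lemmmma1} with $l=1$ gives $(\mathbf{r}_{i^\star,HT}(t)-\mathbf{r}_{j^\star,HT}(t))^T\big(\mathbf{R}_D(t)\hat{\mathbf{u}}_{1,0}\big)=\lambda_1(t)\,d_s$. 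Since $\mathbf{R}_D(t)$ is orthonormal, $\mathbf{R}_D(t)\hat{\mathbf{u}}_{1,0}$ is a unit vector, so Cauchy--Schwarz yields $\|\mathbf{r}_{i^\star,HT}(t)-\mathbf{r}_{j^\star,HT}(t)\|_2\ge\lambda_1(t)\,d_s\ge\lambda_{\mathrm{min}}d_s=2(\delta+\epsilon)$ by \eqref{relaxedsafetyy}. For a generic pair $i\neq j$ I would write $\mathbf{r}_{i,HT}(t)-\mathbf{r}_{j,HT}(t)=\mathbf{Q}(t)(\mathbf{r}_{i,0}-\mathbf{r}_{j,0})=\mathbf{R}_D(t)\mathbf{U}_D(t)(\mathbf{r}_{i,0}-\mathbf{r}_{j,0})$ and, using orthonormality of $\mathbf{R}_D(t)$ together with the spectral form of $\mathbf{U}_D(t)$, obtain $\|\mathbf{r}_{i,HT}(t)-\mathbf{r}_{j,HT}(t)\|_2^2=\sum_{h=1}^3\lambda_h^2(t)\big((\mathbf{r}_{i,0}-\mathbf{r}_{j,0})^T\hat{\mathbf{u}}_{h,0}\big)^2$, whose $h=1$ term is already $\ge\lambda_{\mathrm{min}}^2\big((\mathbf{r}_{i,0}-\mathbf{r}_{j,0})^T\hat{\mathbf{u}}_{1,0}\big)^2$. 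Having established the desired-position bound, I would close by invoking Proposition~\ref{prop5} to get \eqref{collisionavoidancecondition} on $[t_s,t_f]$.

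The main obstacle is closing the last step for arbitrary pairs. For a generic pair the reference offset may have $\hat{\mathbf{u}}_{1,0}$-projection strictly below $d_s$ (it can even vanish when the offset is orthogonal to $\hat{\mathbf{u}}_{1,0}$), so the $\lambda_1$ term alone does not supply $2(\delta+\epsilon)$ and one must also control the contributions along $\hat{\mathbf{u}}_{2,0},\hat{\mathbf{u}}_{3,0}$. The way I would resolve this is to make explicit the design rationale of Remark~\ref{RMK5}: $\lambda_1$ is the only binding eigenvalue precisely because the reference configuration is close along $\hat{\mathbf{u}}_{1,0}$ and well separated along the other principal directions, so the remaining eigenvalues, although only required to stay positive for nonsingularity of $\mathbf{Q}(t)$, in practice remain above the (much smaller) per-direction thresholds $2(\delta+\epsilon)/d_{s,h}$ with $d_{s,h}=\min_{i\neq j}|(\mathbf{r}_{i,0}-\mathbf{r}_{j,0})^T\hat{\mathbf{u}}_{h,0}|$; under such a choice $\sum_h\lambda_h^2(t)(\cdot)^2$ dominates $4(\delta+\epsilon)^2$ for every pair. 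Restricted to deformations for which the closest-approach pair coincides with the reference minimum-separation pair, the first two paragraphs already give a complete argument.
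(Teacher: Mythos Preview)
Your route is the paper's route: project the desired-position difference onto $\mathbf{R}_D(t)\hat{\mathbf{u}}_{1,0}$ via Lemma~\ref{lemmmma1}, identify the projection with $\lambda_1(t)\,(\mathbf{r}_{i,0}-\mathbf{r}_{j,0})^T\hat{\mathbf{u}}_{1,0}$, and bound from below by $\lambda_1(t)d_s$. The paper's proof is terser: it skips the explicit appeal to Proposition~\ref{prop5} and the eigen-expansion of the squared norm, simply asserting that ``the lowest minimum separation distance is along $\hat{\mathbf{u}}_1(t)$'' and then writing $\min_{i,j}\{(\mathbf{r}_{i,0}-\mathbf{r}_{j,0})^T\hat{\mathbf{u}}_{1,0}\}=d_s$ before concluding $\lambda_1(t)d_s\ge 2(\delta+\epsilon)$.

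The obstacle you flag for generic pairs---that a reference offset may have small or zero $\hat{\mathbf{u}}_{1,0}$-component, so the $\lambda_1$ term alone cannot supply $2(\delta+\epsilon)$---is genuine, and the paper's proof does not close it either. The paper effectively takes as a standing hypothesis that the $\hat{\mathbf{u}}_{1,0}$-aligned minimum-separation pair remains the binding pair throughout the maneuver (equivalently, that $\lambda_2,\lambda_3$ do not compress any other pair below $2(\delta+\epsilon)$), without stating or justifying this. Your third paragraph makes that hidden assumption explicit and proposes the natural per-direction thresholds $2(\delta+\epsilon)/d_{s,h}$ on $\lambda_2,\lambda_3$; that is a correct reading of the situation and a more honest treatment than the paper's own argument.
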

\begin{proof}
Vectors $\mathbf{R}_D(t)\hat{\mathbf{u}}_{1,0}$, $\mathbf{R}_D(t)\hat{\mathbf{u}}_{2,0}$, $\mathbf{R}_D(t)\hat{\mathbf{u}}_{3,0}$ are {\color{black}parallel to} $\hat{\mathbf{u}}_1(t)$, $\hat{\mathbf{u}}_2(t)$, $\hat{\mathbf{u}}_3(t)$, respectively. Because the lowest minimum separation distance is along the unit vector $\hat{\mathbf{u}}_1(t)=\mathbf{R}_D(t)\hat{\mathbf{u}}_{1,0}$, inter-agent collision is avoided if 
\begin{equation}
\label{COLAvoidtheoremmm}
    \forall i,j\in \mathcal{V}_R,\qquad \left(\mathbf{r}_{i,HT}(t)-\mathbf{r}_{j,HT}(t)\right)^T \left(\mathbf{R}_D(t)\mathbf{u}_{1,0}\right)\geq 
2\left(\delta+\epsilon\right).
\end{equation}
By Lemma \ref{lemmmma1}, {\color{black}it follows that}
\[
\resizebox{0.99\hsize}{!}{%
$
\begin{split}
    \min\limits_{i,j\in \mathcal{V}_R}\left\{\left(\mathbf{r}_{i,HT}(t)-\mathbf{r}_{j,HT}(t)\right)^T \left(\mathbf{R}_D(t)\mathbf{u}_{1,0}\right)\right\}=&\lambda_1(t)\min\limits_{i,j\in \mathcal{V}_R}\left\{\left(\mathbf{r}_{i,0}-\mathbf{r}_{j,0}\right)^T\hat{\mathbf{u}}_{1,0}\right\}\\
   =&\lambda_1(t)d_s,
\end{split}
$
}
\]
{\color{black}Thus the} inter-agent collision is avoided if 
\[
t\in [r_s,t_f],\qquad \lambda_1(t)d_s\geq 2\left(\delta+\epsilon\right).
\]
{\color{black}This is ensured if \eqref{EQ64} and} \eqref{relaxedsafetyy} are satisfied.
\end{proof}
{\color{black}
\begin{remark}\label{importantremark}
Eigenvalues $\lambda_2(t)$ and $\lambda_3(t)$ of matrix $\mathbf{U}_D$ must be positive at any time $t$ as is required for continum deformation coordination. Furthermore, shear deformation angle $\phi_{u,0}^*$ can be arbitrarily selected. Without loss of generality, this paper chooses $\phi_{u}(t)=\phi_{u,0}^*=0$ at any time $t$.
\end{remark}

}

}

\section{\hspace{0.4cm}Problem 3: Continuum Deformation Planning}
\label{Continuum Deformation Optimization}
The paper offers two {\color{black}distinct} strategies for continuum deformation planning in obstacle-free and obstacle-laden environments. In an obstacle-free environment, safe continuum deformation coordination is planned by shaping homogeneous transformation features. In an obstacle-laden environment, A* search is used to plan a safe continuum deformation by choosing optimal leaders' paths given MVS initial and target formations.

\subsection{Obstacle-Free Environment}
 {\color{black}For continuum deformation coordination in an obstacle-free environment, generalized coordinate vector $\mathbf{s}_{\mathrm{OF}}^n$ is defined as follows:}
\begin{subequations}
\begin{equation}\setlength\arraycolsep{3pt}
    {\color{black}\mathbf{s}_{\mathrm{OF}}^1}=
    \begin{bmatrix}
    \lambda_1&\theta_r&\psi_r&d_1&d_2&d_3
    \end{bmatrix}
    ^T,
\end{equation}
\begin{equation}\setlength\arraycolsep{3pt}
\label{HomTransFeat2D}
    {\color{black}\mathbf{s}_{\mathrm{OF}}^2}=
    \begin{bmatrix}
    \lambda_1&\lambda_2&
    \phi_r&\theta_r&\psi_r&d_1&d_2&d_3
    \end{bmatrix}
    ^T,
\end{equation}
\begin{equation}\setlength\arraycolsep{2pt}
{\color{black}\mathbf{s}_{\mathrm{OF}}^3}=
\begin{bmatrix}
    \lambda_1&\lambda_2&\lambda_3
    &\phi_r&\theta_r&\psi_r&d_1&d_2&d_3
    \end{bmatrix}
    ^T,
\end{equation}
\end{subequations}
{\color{black}
where $\mathbf{s}_{\mathrm{OF}}^n$ is defined by quintic polynomial \eqref{barsss} with $t_1=t_s$ and $t_2=t_f$. The global desired position of vehicle $i$ is given by \eqref{homogtrans}, where $\mathbf{d}=\left[d_1(t)~d_2(t)~d_3(t)\right]^T$ for $n=1,2,3$, and
\begin{equation}
    \mathbf{Q}=
    \begin{cases}
        \mathbf{Q}\left(\lambda_1,\theta_r,\psi_r\right)&n=1\\
        \mathbf{Q}\left(\lambda_1,\lambda_2,\phi_r,\theta_r,\psi_r\right)&n=2\\
        \mathbf{Q}\left(\lambda_1,\lambda_2,\lambda_3,\phi_r,\theta_r,\psi_r\right)&n=3\\
    \end{cases}
    .
\end{equation}
\begin{remark}
For coordination in an obstacle-free environment, shear deformation angles $\phi_u$, $\theta_u$, and $\psi_u$ are either $0$ or constant at any time $t\in [t_s,t_f]$ (See Section \ref{Sufficient conditions for Collision Avoidance}-2). Therefore, 
shear deformation angles are not included as generalized coordinates.
\end{remark}
}

\subsection{Obstacle-Laden Environment}
\label{Obstacle-Laden Environment}
{\color{black}
For coordination in an obstacle-laden environment, 
\begin{equation}
    \mathbf{s}_{\mathrm{OL}}^n(t)=\mathrm{vec}\left(
    \begin{bmatrix}
    \mathbf{h}_1(t)&\cdots\mathbf{h}_{n+1}(t)
    \end{bmatrix}
    ^T
    \right),
\end{equation}
is the generalized coordinate vector which is defined based on the $n(n+1)$ leader position components. Note that ``$\mathrm{vec}$'' is the matrix vectorization symbol. Also,  
\[
\mathbf{r}_{i,HT}(t)=\left(\mathbf{I}_3\otimes
\mathbf{\Theta}_n\left(\mathbf{h}_{1,0},\cdots,\mathbf{h}_{n+1,0},\mathbf{r}_{i,0}\right)\right)\mathbf{s}_{\mathrm{OL}}^n,
\]
where $n\in \{1,2,3\}$ is the dimension of the homogeneous deformation, $\mathbf{r}_{i,0}$ is the reference position of vehicle $i\in \mathcal{V}_R$, $\mathbf{h}_{1,0}$ through $\mathbf{h}_{n+1,0}$ denote reference positions of VCS vertices.
}

This paper applies A* search to optimally {\color{black}determine intermediate generalized coordinate vectors $\bar{\mathbf{s}}_{2,\mathrm{OL}}^n$, $\cdots$, $\bar{\mathbf{s}}_{n_\tau,\mathrm{OL}}^n$ given initial and final generalized coordinate vectors $\bar{\mathbf{s}}_{1,\mathrm{OL}}^n$ and $\bar{\mathbf{s}}_{n_\tau+1,\mathrm{OL}}^n$, respectively. }  
We define the following:
\[
\begin{split}
   \Omega_s^n=\left(\mathbf{h}_{1,s},\cdots, \mathbf{h}_{n+1,s}\right):=&\mathrm{Initial~{\color{black}VCS}}\\
   \Omega_f^n=\left(\mathbf{h}_{1,f},\cdots, \mathbf{h}_{n+1,f}\right):=&\mathrm{Final~{\color{black}VCS}}\\
   \Omega_c^n=\left(\mathbf{h}_{1,c},\cdots, \mathbf{h}_{n+1,c}\right):=&\mathrm{Current~{\color{black}VCS}}\\
   \Omega_{\mathrm{next}}^n\left(\mathbf{h}_{1,\mathrm{next}},\cdots, \mathbf{h}_{n+1,\mathrm{next}}\right):=&\mathrm{Next~{\color{black}VCS}}\\
\end{split}
\]
where $\mathbf{h}_{i,0}$, $\mathbf{h}_{i,f}$, $\mathbf{h}_{i,c}$, and $\mathbf{h}_{i,n}$ denote position of vertex $i$ of the {\color{black}VCS} at initial time $t_s$, final time $t_f$, current time, and next time, respectively.
Leaders' optimal paths are determined by minimizing continuum deformation cost, given by
\begin{equation}
n=1,2,3,\qquad    F\left(\Omega_{\mathrm{next}}^n\right)=G\left(\Omega_{\mathrm{next}}^n\right)+H\left(\Omega_{\mathrm{next}}^n\right),
\end{equation}
where $\Omega_{\mathrm{next}}^n$ is a \underline{valid} continuum deformation from $\Omega_{\mathrm{c}}^n$ and 
\begin{equation}
 n=1,2,3,\qquad    H\left(\Omega_{\mathrm{next}}^n\right)=\sqrt{\sum_{l=1}^{n+1}\bigg\|\mathbf{h}_{l,\mathrm{next}}-\mathbf{h}_{l,f}\bigg\|_2^2},
\end{equation}
is the heuristic cost.
Furthermore, 
\begin{equation}
n=1,2,3,\qquad     G\left(\Omega_{\mathrm{next}}^n\right)=\min\big\{G\left(\Omega_{c}^n\right)+C_{c,{\mathrm{next}}}^n\big\},
\end{equation}
is the minimum estimated cost from $\Omega_{s}^n$ to $\Omega_{\mathrm{next}}^n${\color{black},} where $C_{c,{\mathrm{next}}}^n=\sum_{l=1}^{n+1}\bigg\|\mathbf{h}_{l,\mathrm{next}}-\mathbf{h}_{l,c}\bigg\|_2$ ($n=1,2,3$).

\subsection{Planning of travel time}
{\color{black}{\color{black}From \eqref{ERROR}}
\[
\label{BB}
\mathbf{E}_{\mathrm{SYS}}(t)=
    \mathrm{e}^{\mathbf{A}_{\mathrm{SYS}}{\color{black}\left(t-t_s\right)}}\mathbf{E}_{\mathrm{SYS}}\left(t_s\right)+\int_{t_s}^t\mathrm{e}^{\mathbf{A}_{\mathrm{SYS}}\left(t-\tau\right)}{\color{black}\mathbf{V}_{\mathrm{SYS}}\left(\tau\right)}\mathrm{d}\tau.\tag{BB}
\]
{\color{black}Define} $\mathbf{C}_{i}=[{\mathbf{C}_i}_{_{lh}}]\in {\color{black}\mathbb{R}^{1\times \left(\rho_x+\rho_y+\rho_z\right) N}}$, with elements
\[
\resizebox{0.99\hsize}{!}{%
$
{\mathbf{C}_i}_{_{lh}}=
\begin{cases}
    1&\left(l=1\wedge h=i\right)\vee \left(l=2\wedge h=i+\rho_xN\right)\vee \left(l=3\wedge h=i+\left(\rho_x+\rho_y\right)N\right)\\
   0&\mathrm{else}
\end{cases}
.$
}
\]
Assume consecutive generalized coordinate vectors  $\bar{\mathbf{s}}_{k,\varrho}^n$ and $\bar{\mathbf{s}}_{k+1,\varrho}^n$ are known for $\varrho\in \{\mathrm{OF},\mathrm{OL}\}$ and $k=1,\cdots,n_\tau$). Then, the travel time planning problem is defined as follows. Choose $T_k\geq T_k^*$, where minimum travel time $T_k^*$ is assigned by solving the following optimization problem:
\begin{equation}
    T_k^*=\min ~T_k
\end{equation}
{\color{black}subject to \eqref{BB} and}
\begin{equation}
\label{safetyfirst}
\begin{split}
     \mathbf{s}_{\varrho}^n(t)&=\bar{\mathbf{s}}_{k,\varrho}^n(1-\beta(t,T_k))+\beta(t,T_k)\bar{\mathbf{s}}_{k+1,\varrho}^n,\qquad t\in [t_k,t_{k+1}],\\
     \mathbf{E}_{\mathrm{SYS}}^T(t)&\mathbf{C}_i^T\mathbf{C}_i\mathbf{E}_{\mathrm{SYS}}(t)\leq \delta^2,\qquad \forall t\in [t_k,t_{k+1}],\forall i\in \mathcal{V},\\
     \mathbf{u}_i(t)&\in \mathcal{U},\qquad \forall t\in [t_k,t_{k+1}],\forall i\in \mathcal{V},\\
     \beta(t,T_k)&=\sum_{j=0}^5\zeta_{j,k}\left({t-t_k\over T_k}\right)^5,
     \end{split}
\end{equation}
where {\color{black}$t_s=t_1$, $t_f=t_{n_\tau+1}$, $t_{k+1}=t_k+T_k$, and $\zeta_{0,k}$ through $\zeta_{5,k}$ are specified coefficients satisfying assumptions discussed after \eqref{poly1}. Note that typically $\mathbf{E}_{\mathrm{SYS}}(t_s)=\mathbf{0}$.
Because $\beta (t,T_k)$ is assigned by Eq. \eqref{poly1} for $t\in [t_k,t_{k+1}]$, the following properties hold:
\begin{enumerate}
    \item{$\left|\dfrac{d^j\beta (t,T_k)}{dt^j}\right|$
is a decreasing function with respect to $T_k$ at any time $t\in [t_k,t_{k+1}]$ for $j\in\{\rho_q-\Xi,\cdots\rho_q\}$ ($q\in\{x,y,z\}$), and}
\item{$\mathbf{V}_{\mathrm{SYS}}(t)$, defined by \eqref{AAAAAAAAAAAAAAAAA}, and $\mathbf{E}_{\mathrm{SYS}}(t)$, defined by \eqref{BB}, tend to $\mathbf{0}\in \mathbb{R}^{3N\times 1}$ and $\mathbf{0}\in \mathbb{R}^{\left(\rho_x+\rho_y+\rho_z\right)N\times 1}$, respectively, at any time $t\in [t_k,t_{k+1}]$ as $T_k\rightarrow \infty$, } 
\end{enumerate}
Consequently, there exists a minimum travel time $T_k^*$ such that the safety inequality constraints given in \eqref{safetyfirst} {\color{black}are} all satisfied.}

}

\section{\hspace{0.3cm}Simulation Results}
\label{simulation}
Case studies of $2$-D, and $3$-D continuum deformation are presented below with and without obstacles. 
We assume each vehicle is a quadcopter. 
The variables $\phi_i$, $\theta_i$, and $\psi_i$ {\color{black}denote the $i$th quadcopter} Euler angles, {\color{black}the variables} $x_i$, $y_i$, $z_i$ denote its position coordinates, $v_{x,i}$, $v_{y,i}$, $v_{z,i}$ denote its velocity components, $m_i$ {\color{black}is} mass, $F_{T,i}$ {\color{black}is the total thrust magnitude}, $\bar{F}_{T,i}={F_{T,i}\over m_i}$ {\color{black}is the} thrust force per unit mass, and $g=981m/s^2$ {\color{black}is} the gravity acceleration. Quadcopter $i$ dynamics is then given by Eq. \eqref{Dynamicsofagenti}
where
\[
 \mathbf{X}_{i}=[x_i~y_i~z_i~v_{x,i}~v_{y,i}~v_{z,i}~\bar{F}_{T,i}~\phi_i~\theta_i~\psi_i~\dot{\bar{F}}_{T,i}~\dot{\phi}_i~\dot{\theta}_i~\psi_i]^T 
\]
is the state, $\mathbf{r}_i$ is the control output, {\color{black}and}
$\mathbf{u}_{i}=[u_{T,i}~u_{\phi,i}~u_{\theta,i}]$ is the control input vector. Then,
\[
\begin{split}
     \mathbf{f}_i=&[v_{x,i}~v_{y,i}~v_{z,i}~f_{4,i}~f_{5,i}~f_{6,i}~\dot{\bar{F}}_{T,i}~\dot{\phi}_i~\dot{\theta}_i~\dot{\psi}_i~0~0~0~f_{14,i}]^T,\\
     \begin{bmatrix}
         f_{4,i}\\
         f_{5,i}\\
         f_{6,i}\\
     \end{bmatrix}=&
     \begin{bmatrix}
        0\\
        0\\
        -g
        \end{bmatrix}
        +
        \bar{F}_{T,i}
    \begin{bmatrix}
        C_{\phi_{i}}S_{\theta_{i}}C_{\psi_{i}}+S_{\phi_{i}}S_{\psi_{i}}\\
        C_{\phi_{i}}S_{\theta_{i}}S_{\psi_{i}}-S_{\phi_{i}}C_{\psi_{i}}\\
        C_{\phi_{i}}C_{\theta_{i}}\\
    \end{bmatrix},
    \\
    \mathbf{g}_i=&
    \begin{bmatrix}
        \mathbf{0}_{3\times 9}&
        \mathbf{I}_3&
        \mathbf{0}_{3\times 1}
    \end{bmatrix}
    ^T.
\end{split}
\]
Using feedback linearization, quadcopter dynamics \eqref{Dynamicsofagenti} can be expressed as
\begin{subequations}
\begin{equation}
\label{mainequation}
\dfrac{\mathrm{d}^4\mathbf{r}_i}{\mathrm{d}t^4}=\mathbf{v}_{i}
\end{equation}
\begin{equation}
\label{INTDYNAMICSS}
    \dfrac{\mathrm{d}^2\psi_i}{\mathrm{d}t^2}=\ddot{\psi}_{i,HT}+k_{\dot{\psi}_i}\left(\dot{\psi}_{d,i}-\dot{\psi}_{i}\right)+k_{{\psi}_i}\left({\psi}_{d,i}-{\psi}_{i}\right),
\end{equation}
\end{subequations}
where $\rho_x=\rho_y=\rho_z=4$ and \eqref{INTDYNAMICSS} represents the internal dynamics of the quadcopter stytem. {\color{black}Parameters} $k_{\psi_i}$ and $k_{\dot{\psi}_i}$ are positive constants and we choose $\psi_{d,i}=0$, $\dot{\psi}_{d,i}=0$, and $\ddot{\psi}_{d,i}=0$. Therefore, the quadcopter zero-dynamics is stable.
For the quadcopter model \cite{rastgoftar2018safe}, 
\begin{equation}
   \mathbf{v}_{i}=\mathbf{M}_{i}
     \mathbf{u}_{i}
    +\mathbf{n}_{i}
    ,
\end{equation}
where
\[
\begin{split}
    \mathbf{M}_{i}=&
    \begin{bmatrix}
        \Lambda_{1,i}&\Lambda_{2,i}&\Lambda_{3,i}
    \end{bmatrix}
    ,\\
     \mathbf{n}_{i}=&\dot{\Lambda}_{0,i}+\dot{\Lambda}_{1,i}\dot{\bar{F}}_{T,i}+\dot{\Lambda}_{2,i}\dot{\phi}_{i}+\dot{\Lambda}_{3,i}\dot{\theta}_{T,i},
    \end{split}
\]
\[
\resizebox{0.99\hsize}{!}{%
$
\begin{split}
       \Lambda_{0,i}=&\dot{\psi}_{i}
    \begin{bmatrix}
        -C_{\phi_{i}}S_{\theta_{i}}S_{\psi_{i}}+S_{\phi_{i}}C_{\psi_{i}}\\
        C_{\phi_{i}}S_{\theta_{i}}C_{\psi_{i}}+S_{\phi_{i}}S_{\psi_{i}}\\
        0\\
    \end{bmatrix}
    ,~
    \Lambda_{1,i}=
    \begin{bmatrix}
        C_{\phi_{i}}S_{\theta_{i}}C_{\psi_{i}}+S_{\phi_{i}}S_{\psi_{i}}\\
        C_{\phi_{i}}S_{\theta_{i}}S_{\psi_{i}}-S_{\phi_{i}}C_{\psi_{i}}\\
        C_{\phi_{i}}C_{\theta_{i}}\\
    \end{bmatrix}
    ,\\
    \Lambda_{2,i}=&
    \begin{bmatrix}
        -S_{\phi_{i}}S_{\theta_{i}}C_{\psi_{i}}+C_{\phi_{i}}S_{\psi_{i}}\\
        -S_{\phi_{i}}S_{\theta_{i}}S_{\psi_{i}}-C_{\phi_{i}}C_{\psi_{i}}\\
        -S_{\phi_{i}}C_{\theta_{i}}\\
    \end{bmatrix}
    ,~
    \Lambda_{3,i}=
    \begin{bmatrix}
        C_{\phi_{i}}C_{\theta_{i}}C_{\psi_{i}}\\
        C_{\phi_{i}}C_{\theta_{i}}S_{\psi_{i}}\\
        -S_{\theta_{i}}C_{\phi_{i}}\\
    \end{bmatrix}
    .
\end{split}
$
}
\]

\subsection{2-D Continuum Deformation Coordination}
\label{2-D Continuum Deformation Coordination}
Consider an MVS with reference formation shown in Fig. \ref{obstacleladen} (a). The MVS consists of $N=27$ quadcopters;  quadcopters $1$, $2$, and $3$ are leaders and the {\color{black}remaining} quadcopters $4,5,\cdots,27$ are followers, e.g. $\mathcal{V}_L=\{1,2,3\}$ and $\mathcal{V}_F=\{4,\cdots,27\}$. Boundary quadcopters are defined by {\color{black}the} set 
$
\mathcal{V}_B=\{1,4,5,10,12,14,17,24,25,26,27\}
$
and $N_a=11$ auxiliary nodes are defined by the set 
$
\mathcal{V}_{aux}=\{28,29,\cdots,38\}$.
Without loss of generality, reference positions of auxiliary nodes and boundary quadcopters are the same (Fig. \ref{obstacleladen} (a)). {\color{black}Unidirectional} and bidirectional communications are shown by one-sided green arrows and double-sided blue arrows, respectively. Leaders move independently. {\color{black}}Therefore, no edge is incident to a node $i\in \mathcal{V}_L$. Every follower communicates with three in-neighbor nodes
{\color{black}where} followers' {\color{black}reference} communication weights are determined by using Eq. \eqref{communicationwitfollowers}  given quadcopters' reference positions shown in Fig. \ref{obstacleladen} (a). Every auxiliary node communicates with leader agents $1$, $2$, and $3$, e.g. $\mathcal{N}_i=\{1,2,3\}$ if $i\in \mathcal{V}_{aux}$. Note that communication links between auxiliary nodes and leaders are not shown in Fig. \ref{obstacleladen} (a).
As shown in Fig. \ref{obstacleladen} (a), reference {\color{black}VCS} is a containing triangle with vertices positioned at $\mathbf{h}_{1,0}=(0,40)$, $\mathbf{h}_{2,0}=(60,100)$, and $\mathbf{h}_{3,0}=(0,140)$. Given reference {\color{black}VCS} configuration and {\color{black}the} MVS formation, $d_s=5.5875$ is the minimum separation distance between two quadcopters and $d_b=4.5358$ is the minimum distance from the {\color{black}VCS} sides. Assuming every quadcopter is enclosed by a ball with radius $\epsilon=0.5m$, $\delta_{\mathrm{max}}=2.2938m$ is obtained by using Eq. \eqref{deltaaamax}. 
\begin{figure}
 \centering
 \subfigure[$t=0s$]{\includegraphics[width=0.47\linewidth]{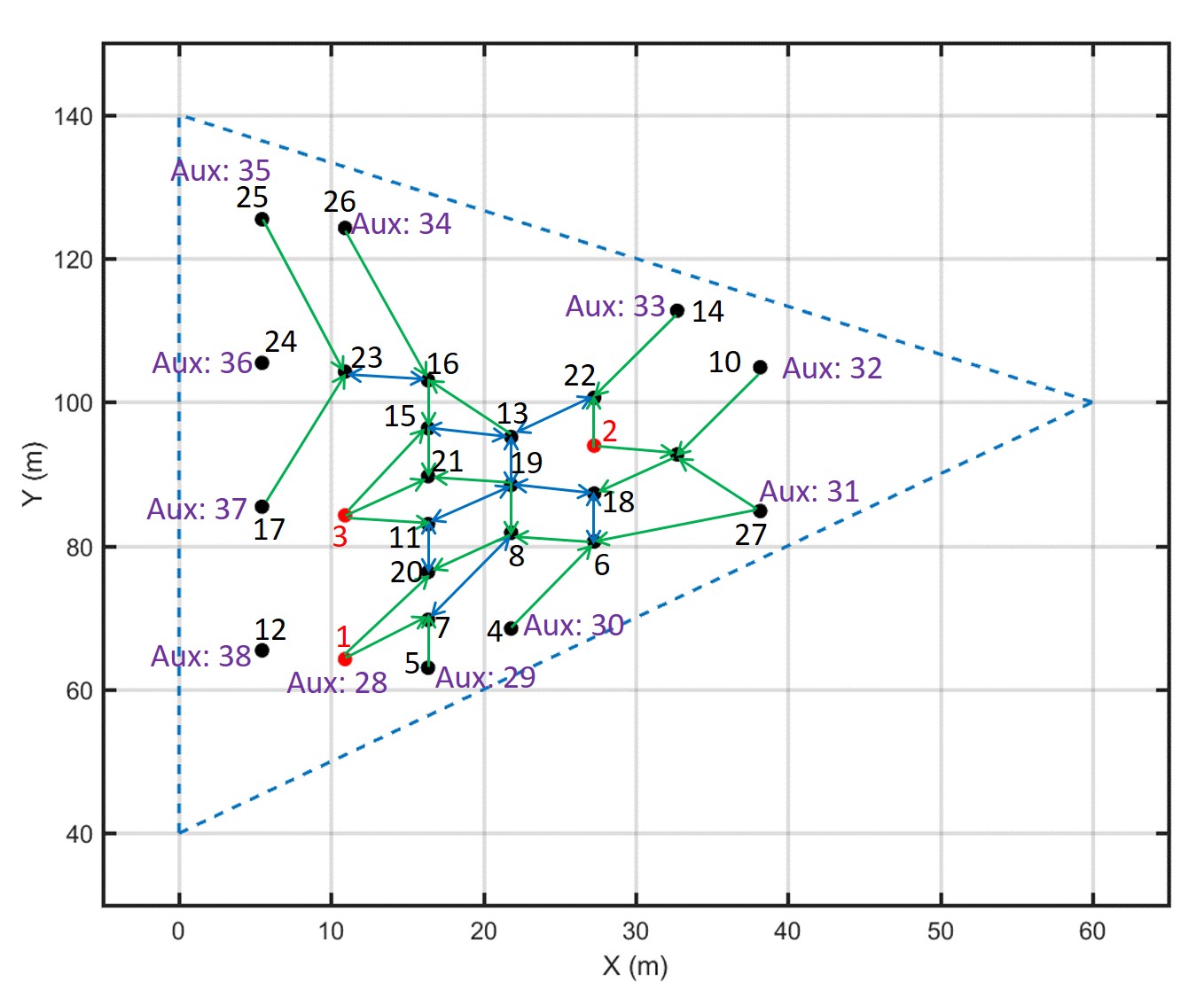}}
  \subfigure[$t=100s$]{\includegraphics[width=0.47\linewidth]{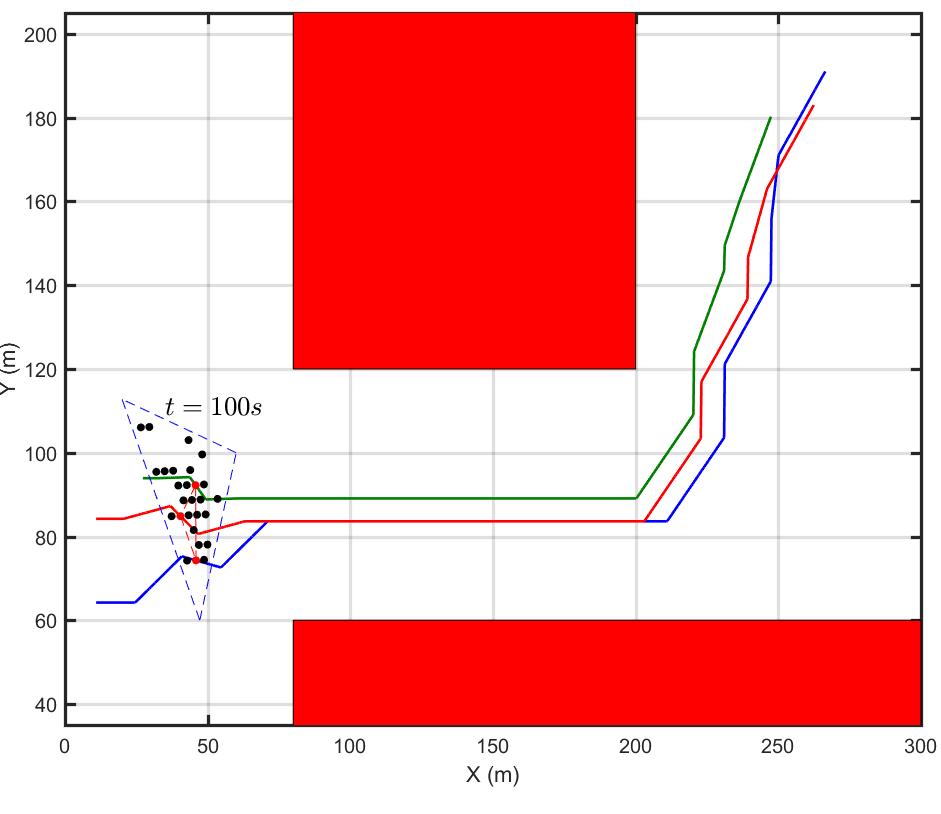}}
  \subfigure[$t=500s$]{\includegraphics[width=0.47\linewidth]{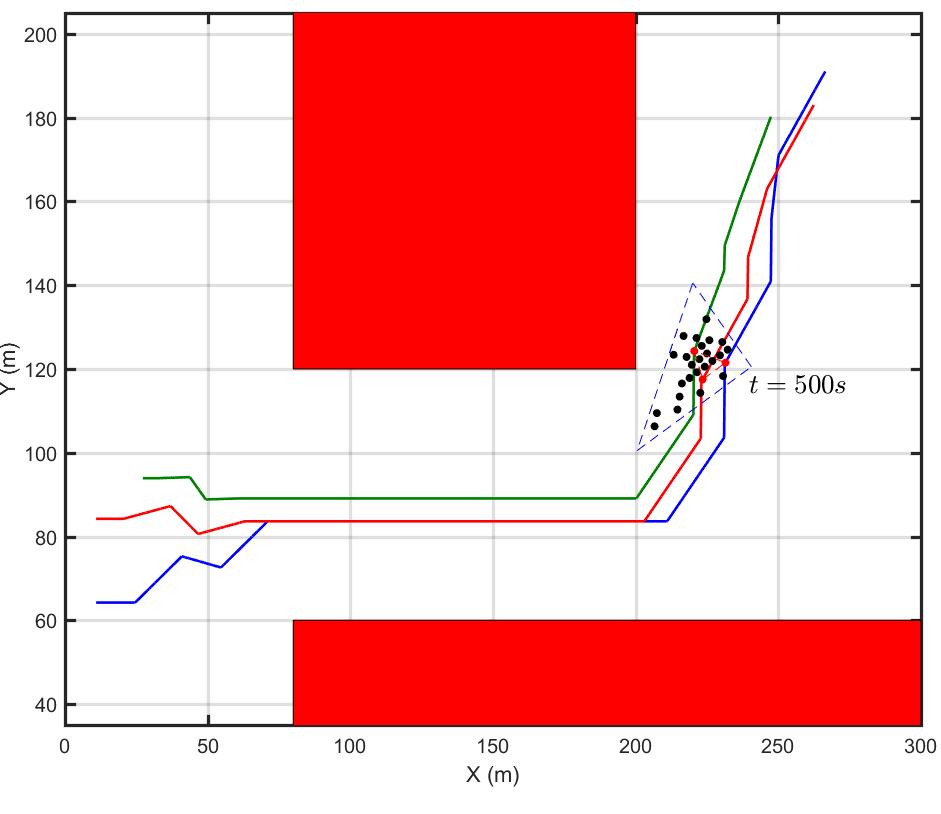}}
  \subfigure[$t=812s$]{\includegraphics[width=0.47\linewidth]{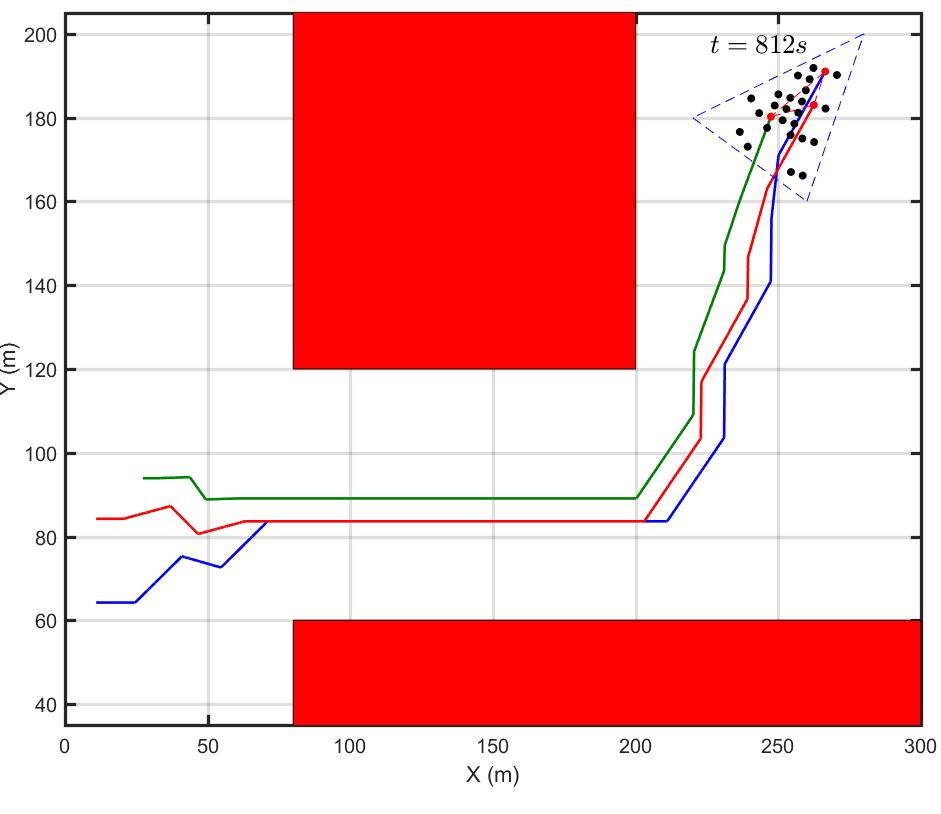}}
  \subfigure[Eigenvalues of matrix $\mathbf{U}_D$ versus time.]{\includegraphics[width=0.47\linewidth]{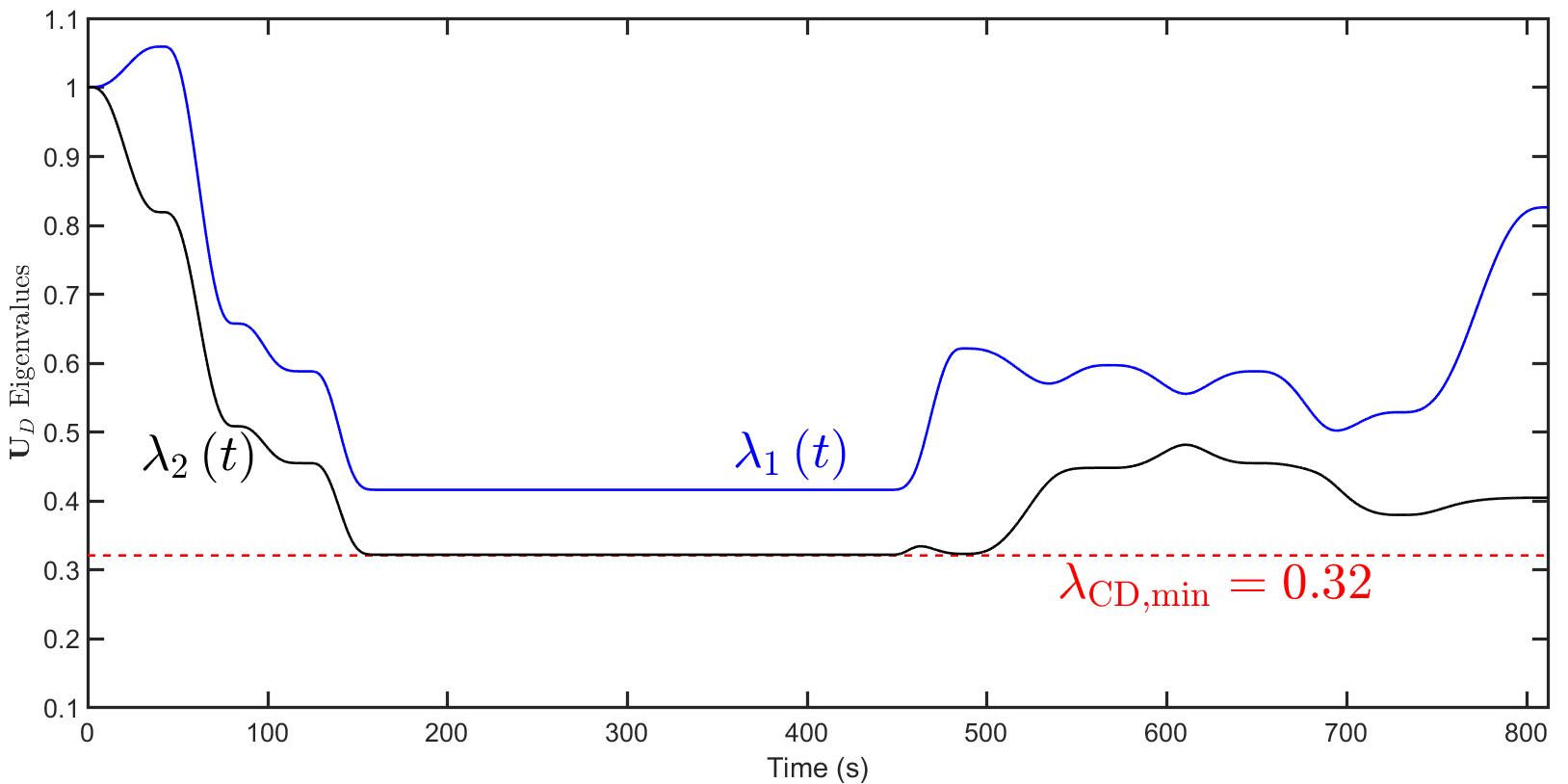}}
  \subfigure[Deviation of quadcopters from their global desired trajectories]{\includegraphics[width=0.47\linewidth]{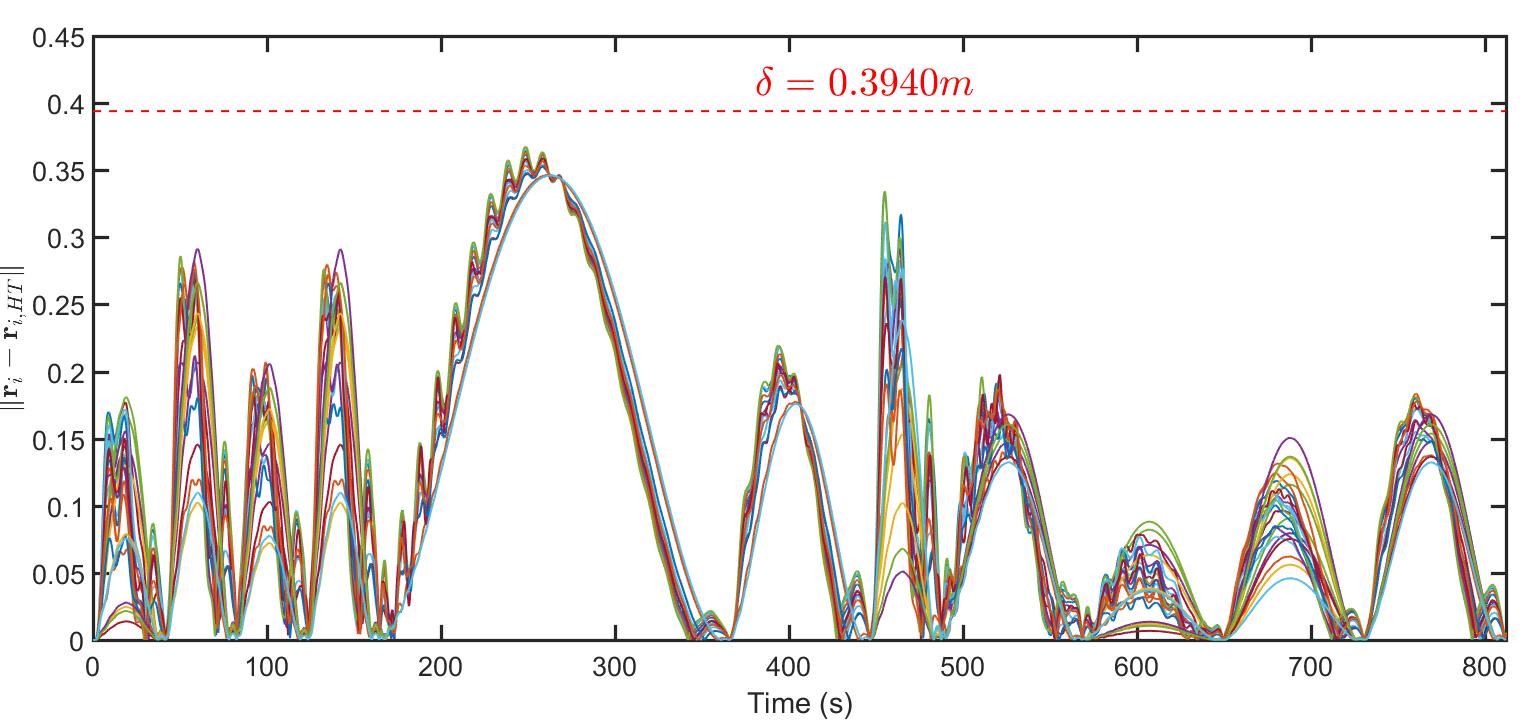}}
     \caption{(a-e) MVS at sample times $0s$, $100s$, $500s$, and $812s$. (e) $\mathbf{U}_D$ eigenvalues $\lambda_1$ and $\lambda_2$ versus time. (f) Deviation of each follower $i\in \mathcal{V}_F$ versus time. Note that $\sup\limits_{t}\|\mathbf{r}_i-\mathbf{r}_{i,HT}\|\leq \delta=0.3940m,~\forall i\in \mathcal{V}_F$.}
\label{obstacleladen}
\end{figure}
We consider MVS collective motion in an obstacle laden environment {\color{black}} shwon in Fig. \ref{obstacleladen}. Because {\color{black}the} $z$ component of {\color{black}the} MVS initial formation is $0$, reference and initial MVS formations are considered the same; $t_0=t_s=0$, $\mathbf{R}_s=\mathbf{I}_3\in \mathbb{R}^{3\times 3}$, and $\mathbf{d}_s=\mathbf{0}\in \mathbb{R}^{3\times 1}$. Optimal leaders' paths, minimizing travel distance between initial and target MVS formations, are determined using A* search and shown in Figs. \ref{obstacleladen} (a-h) (See Section \eqref{Obstacle-Laden Environment}). Given leaders' trajectories, followers use the communication graph shown in Fig. \ref{obstacleladen} (a) to acquire the desired continuum deformation through local communication. MVS formations at sample times $0s$, 
$100s$, 
$500s$, 
and $812s$ are shown in Figs. \ref{obstacleladen} (a-d). Blue and red triangles show the containing {\color{black}VCS} and leading triangle configurations in Figs. \ref{obstacleladen} (a-d). Given leaders' positions at reference time $t=0$ and current time time $t$, $\mathbf{U}_D$ eigenvalues, $\lambda_1$ and $\lambda_2$, are plotted versus time in Fig. \ref{obstacleladen} (e), e.g. $\lambda_3\left(t\right)=1$ at any time $t$. As shown $\lambda_{\mathrm{CD,min}}=0.32$ is {\color{black}a} lower limit for {\color{black}the} $\mathbf{U}_D$ eigenvalues. Substituting  $\lambda_{\mathrm{CD,min}}=0.32$, quadcopter size $\epsilon=0.5m$, $\delta_{\mathrm{max}}=2.2938m$ into Eq. \eqref{lcmin}, $\delta$ is obtained from:
\[
\delta=\lambda_{\mathrm{CD,min}}\left(\delta_{\mathrm{max}}+\epsilon\right)-\epsilon=0.3940m.
\]
Transient error $\|\mathbf{r}_i-\mathbf{r}_{i,HT}\|$ is plotted versus {\color{black}time} for every follower quadcopter $i\in \mathcal{V}_F$ in Fig. \ref{obstacleladen} (f). {\color{black}Note that the} deviation of every follower {\color{black}remains less than} $\delta=0.3940m$. Therefore, inequality constraints \eqref{deltainequality} and \eqref{NOinteragent} are satisfied{\color{black}, and}  inter-agent and obstacle collision avoidance, as well as quadcopter containment {\color{black}constraints} are {\color{black}satisfied}.

\begin{figure}
 \centering
  \subfigure[MVS initial formation]{\includegraphics[width=0.7\linewidth]{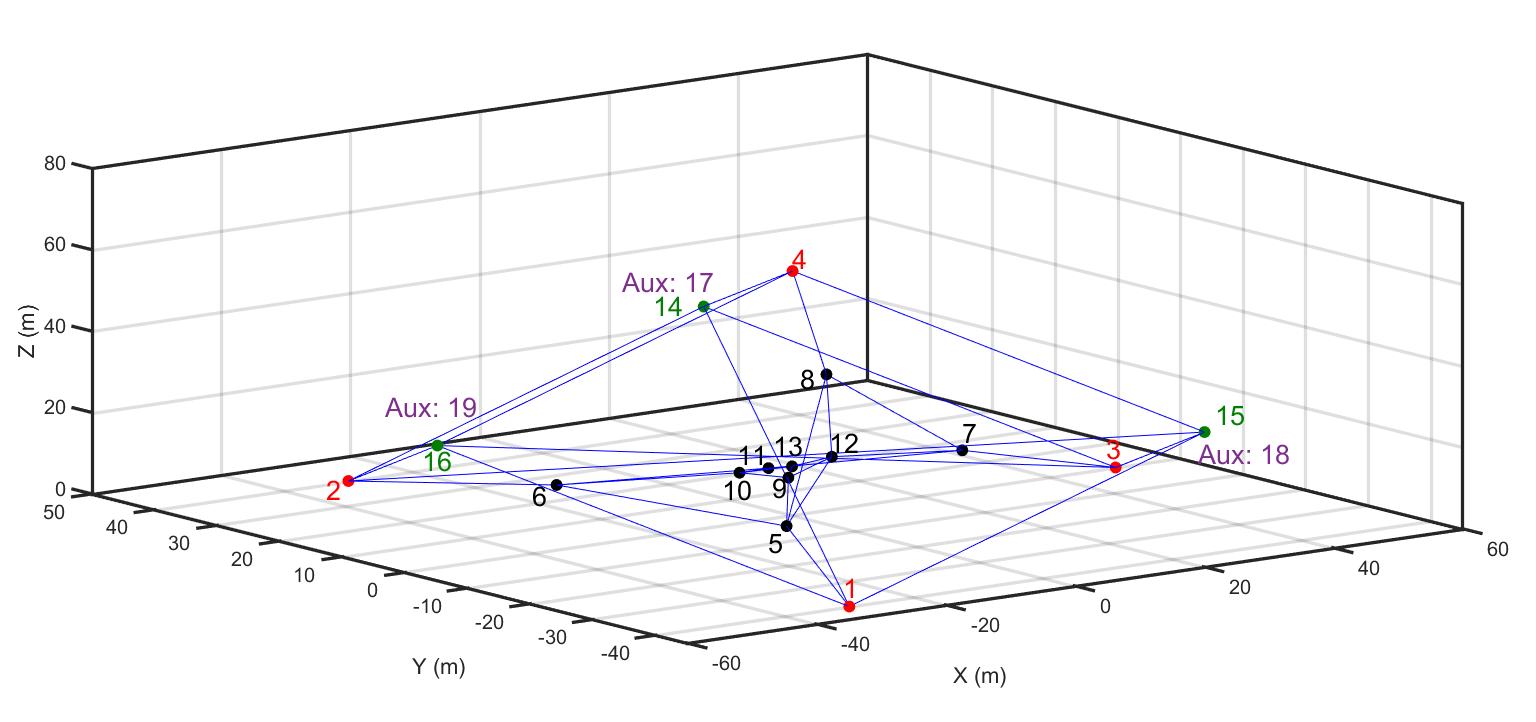}}
  \subfigure[MVS evolution]{\includegraphics[width=0.7\linewidth]{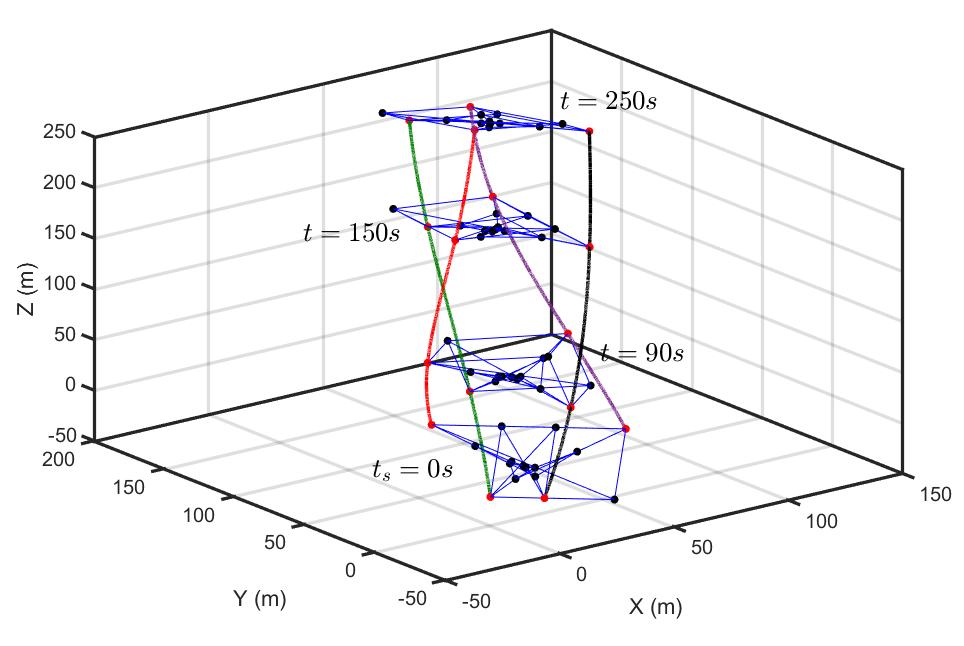}}
  \subfigure[Transient error]{\includegraphics[width=0.9\linewidth]{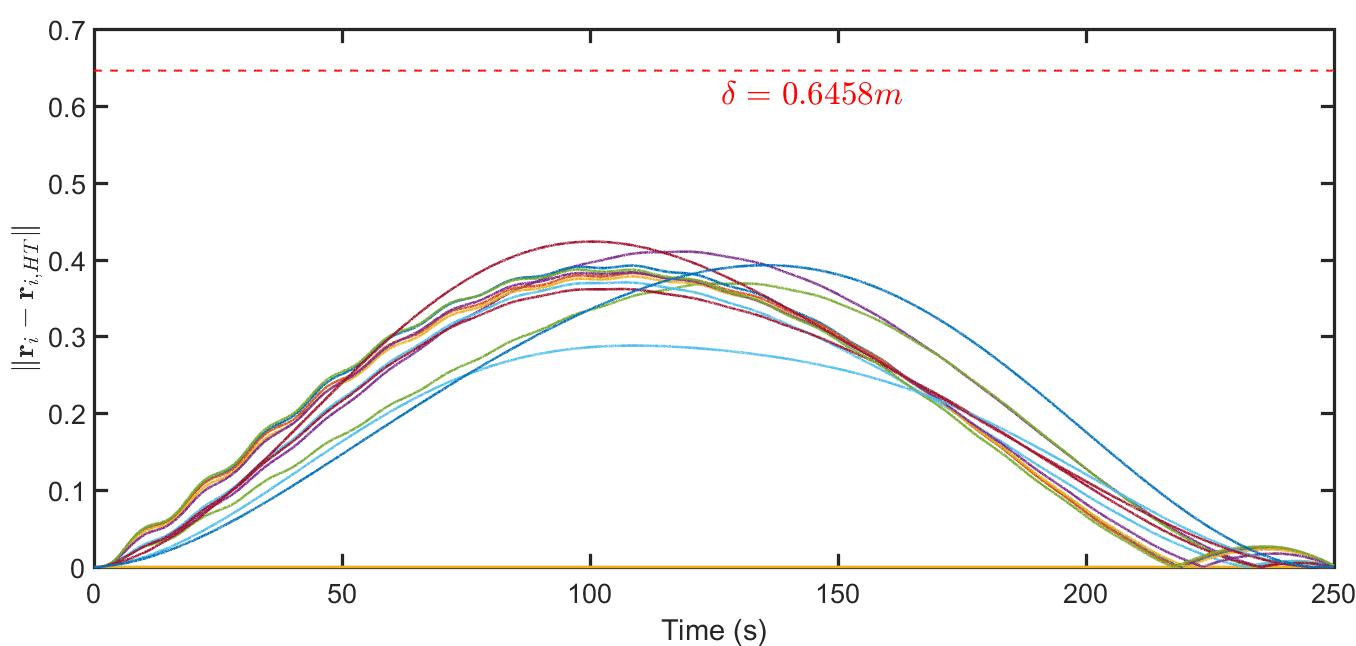}}
     \caption{(a) MVS initial formation and inter-agent communication in case-study 4. (b) MVS at times $0s$, $90s$, $150s$, $250s$. Red and black nodes show leaders and followers, respectively. 
     Leader $1$, $2$, $3$, $4$ paths are shown by black, {\color{black}green}, red, and pink curves. (c) Deviation of each follower $i\in \mathcal{V}_F$ versus time in case study 4. Note that $\sup\limits_{t}\|\mathbf{r}_i-\mathbf{r}_{i,HT}\|\leq \delta=0.6652m,~\forall i\in \mathcal{V}_F$.}
\label{InitialForm}
\end{figure}
\subsection{3-D Continuum Deformation Coordination}
\label{3-D Continuum Deformation Coordination}
We simulate collective takeoff with $N=16$ quadcopter MVS. Leaders are indexed by $1$ through $4$ ($\mathcal{V}_L=\{1,2,3,4\}$), followers are defined by set $\mathcal{V}_F=\{5,\cdots,16\}$ where $\mathcal{V}_B=\{1,2,3,4,14,15,16\}$ specifies the boundary quadcopters. $N_a=3$ auxiliary nodes are defined by $\mathcal{V}_{aux}=\{17,18,19\}$ where $\mathbf{r}_{14,0}=\mathbf{r}_{17,0}$, $\mathbf{r}_{15,0}=\mathbf{r}_{18,0}$, and $\mathbf{r}_{16,0}=\mathbf{r}_{19,0}$. Reference positions of auxiliary nodes $17$, $18$, and $19$ and boundary quadcopters $14$, $15$, and $16$ are the same. 
{\color{black}The} MVS initial formation is shown in Fig. \ref{InitialForm} (a) where leaders are illustrated by red and followers $5$ through $13$ are shown by black. In addition, follower quadcopters $14$, $15$, and $16$ (auxilliary nodes $17$, $18$, and $19$) are shown {\color{black}in} green in Fig. \ref{InitialForm} (b). Initial and reference formations are considered the same in a 3-D continuum deformation coordination. Follower communication weights are assigned using Eq. \eqref{communicationwitfollowers}.  Table  \ref{tab:2} lists initial positions and {\color{black}reference} communication weights per Eq. \eqref{communicationwitfollowers}.

\begin{table}
    \small
    \centering
        \caption{Quadcopters' initial positions and {\color{black}reference} communication weights {\color{black}of followers} in case study 4}
    \scalebox{0.69}{
    \begin{tabular}{|c| c| c| c| c| c| c| c| c |c |c |c| c| }
    \hline
    $i$&$x_{i,0}$&$y_{i,0}$&$z_{i,0}$&$i_1$&$i_2$&$i_3$&$i_4$&$w_{i,i_1}$& $w_{i,i_2}$&$w_{i,i_3}$& $w_{i,i_4}$ \\
    \hline
    1&  -30& -40 &   0 &   -&   - &   -    &-&-&-&-&-\\
    2&  -30&  40 &   0 &   -&   - &   -    &-&-&-&-&-\\
    3&  50&  0 &   0 &   -&   - &   -    &-&-&-&-&-\\
    4&  0&  0 &   60 &   -&   - &   -    &-&-&-&-&-\\
    \hline
    5&  -19.07&  -18.70 &   8.99&   1&    6 &   8    &9&0.5&${1\over6}$&${1\over6}$&${1\over6}$\\
    6&-19.43&   17.65&   5.16 &  2  &  5  & 10 &  11&0.5&${1\over6}$&${1\over6}$&${1\over6}$\\
    7& 25.05 &  -1.25 &  10.56 &   3  &  8  & 11&   12&0.5&${1\over6}$&${1\over6}$&${1\over6}$\\
    8&    1.06 &  -4.30 & 36.01 &   4 &   5  &  7&   12&0.5&${1\over6}$&${1\over6}$&${1\over6}$\\
    9& -6.03&   -5.54  & 12.77 &   5&   10&   12 &  13&0.2&0.2&0.2&0.4\\
    10& -6.35   & 1.94 & 11.17&   6  &  9  & 11 &  13&0.2&0.2&0.2&0.4\\
    11&-1.17 &  2.65&  10.80  &  6 &  10 &   7 &  13&0.2&0.2&0.2&0.4\\
    12&    0.39 &  -5.87 &  16.54  &  7  &  8 &   5  & 13&0.2&0.2&0.2&0.4\\
    13&   -2.55&-2.54&  13.56 &   9 &  10  & 11 &  12&0.2&0.2&0.2&0.4\\
    \hline
    14 (Aux: 17)    &   25  & 40  & 30  &  1  &  2    &3   & 4&0-0.5&0.5&0.5&0.5\\
    15 (Aux: 18)
    &   25 & -40 &  30  &  1 &   2&    3  &  4&0.5&-0.5&0.5&0.5\\
    16 (Aux: 19)
    &  -55 &   0 &  30  &  1 &   2 &   3 &   4&0.5&0.5&-0.5&0.5\\
    \hline
    \end{tabular}}
    \label{tab:2}
\end{table}

{\color{black}Given quadcopter initial reference positions, the shear deformation angles $\theta_{u,0}=-0.1721~\mathrm{rad}$ and $\psi_{u,0}=0.7130~\mathrm{rad}$ are obtained using Eq. \eqref{thspssassignment}. Furthermore,  we choose $\phi_{u,0}=0$ at any time $t$ (see Remark \ref{importantremark}).}
 We consider MVS collective motion over the time interval $[0,250]$ ($t_s=0s$ and $t_f=250$) where 
$\phi_r\left(t_f\right)=0~\mathrm{rad}$, $\theta_r\left(t_f\right)=0.0713~\mathrm{rad}$, $\psi_r\left(t_f\right)={\pi\over 2}~~\mathrm{rad}$, $d_1\left(t_f\right)=100m$, $d_2\left(t_f\right)=165m$, and $d_3\left(t_f\right)=200m$. Given $\mathbf{s}_{\mathrm{OF}}^3\left(0\right)$ and $\mathbf{s}_{\mathrm{OF}}^3\left(250\right)$, {\color{black}a} homogeneous transformation is defined by Eq. \eqref{homogtrans} and acquired by followers through local communication. MVS formations at sample times $t_s=0s$, $t=90s$, $t=150s$, and $t=250s$ are shown in Fig. \ref{InitialForm} (b).

\textbf{Inter-Agent Collision Avoidance:} Given {\color{black}the} MVS initial formation, $d_s=4.6607m$ is the minimum separation distance. Given $\lambda_{\mathrm{min}}=\lambda_1\left(t_f\right)=0.5$, deviation upper-bound $\delta$ becomes
$
\delta=\lambda_{\mathrm{CD,min}}\left(\delta_{\mathrm{max}}+\epsilon\right)-\epsilon=0.6458m.
$ Deviation of every follower from {\color{black}the} desired position, defined by the continuum deformation, is plotted versus time in Fig. \ref{InitialForm} (c). Because $\sup\limits_{t}\|\mathbf{r}_i(t)-\mathbf{r}_{i,HT}(t)\|\leq 0.6652$ ($\forall i\in \mathcal{V}_F$) inter-agent collision avoidance is gauranteed.


\section{Conclusion}
\label{Conclusion}
This paper advanced continuum deformation coordination by relaxing existing containment {\color{black}constraints}. We {\color{black}showed} that any $n+1$ agents forming an $n$-D simplex can be considered as leaders; followers can be placed inside or outside the leading simplex in an $n$-D homogeneous transformation ($n=1,2,3$). This paper also formulated continuum deformation coordination eigen-decomposition to determine a nonsingular mapping between leader position components and homogeneous transformation features assigned by continuum deformation eigen-decomposition. With this approach, leader trajectories ensuring collision avoidance and quadcopter containment can be {\color{black}safely} planned. {\color{black}Furthermore, this paper advances the exisiting condition for inter-agent collision avoidance in a large-scale continuum deformation. This new safety condition is much less restrictive and significantly advances the maneuverability and flexibility of the continuum deformation coordination.}

\bibliographystyle{IEEEtran}
\bibliography{reference}



 \appendices
 \section{$n$-D Homogeneous Deformation Decomposition}\label{APA}

By decomposition, homogeneous transformation features are uniquely related to leaders' position components at any time $t\geq t_0$ given leaders' reference positions. Decomposition is straightforward for a 3-D homogeneous transformation ($n=3$) when four leaders define a desired homogeneous transformation, e.g.  $12$ leaders position components can be simply related to the homogeneous transformation features. However, homogeneous transformation decomposition {\color{black}will not be straightforward} for 1-D and 2-D continuum deformation coordination, defined by $2$ and $3$ leaders, in a 3-D motion space because the number of leader position components differs from the number of homogeneous transformation features.

\subsubsection{1-D Homogeneous Transformation Decomposition}
\label{1-D Homogeneous Transformation Decomposition}
Assume that vehicles are {\color{black}treated as particles of a} 1-D deformable body distributed along a deformable line segment in a $3D$ motion space at a time $t$. The MVS transformation is guided by two leaders (agents $1$ and $2$) and the remaining agents are followers. For 1-D homogenous transformation, $\lambda_1\left(t\right)>0$, $\lambda_2\left(t\right)=\lambda_3\left(t\right)=1$ and $\phi_u\left(t\right)=\theta_u\left(t\right)=\psi_u\left(t\right)=\phi_r\left(t\right)=0$ at all {\color{black}times} $t$,  {\color{black}see} the first row of Table  \ref{tab:l}. 

\begin{proposition}\label{prop2} If agents are distributed along $\mathbf{U}_D$ eigenvector $\hat{\mathbf{u}}_1$ at time $t\geq t_0$,  they were distributed along {\color{black}$\hat{\mathbf{e}}_1=[1~0~0]^T$} at reference time $t_0$. 
\end{proposition}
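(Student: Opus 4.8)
The plan is to unpack the hypothesis through the pairwise form of the homogeneous transformation \eqref{homogtrans} and then invert $\mathbf{Q}(t)$, using the $n=1$ entries of Table \ref{tab:l} to make that inversion explicit. First I would subtract \eqref{homogtrans} written for two vehicles $i,j\in\mathcal{V}$, which cancels the rigid-body term and gives $\mathbf{r}_{i,HT}(t)-\mathbf{r}_{j,HT}(t)=\mathbf{Q}(t)\left(\mathbf{r}_{i,0}-\mathbf{r}_{j,0}\right)$; since $\mathbf{Q}(t)$ is nonsingular this reads backwards as $\mathbf{r}_{i,0}-\mathbf{r}_{j,0}=\mathbf{Q}^{-1}(t)\left(\mathbf{r}_{i,HT}(t)-\mathbf{r}_{j,HT}(t)\right)$. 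Second, I would specialize the decomposition \eqref{DECOM}--\eqref{ui} to $n=1$: by the first row of Table \ref{tab:l} we have $\phi_u=\theta_u=\psi_u=0$, hence $\hat{\mathbf{u}}_{i}=\hat{\mathbf{u}}_{i,0}=\hat{\mathbf{e}}_i$, and $\lambda_2=\lambda_3=1$, so $\mathbf{U}_D(t)=\mathrm{diag}\left(\lambda_1(t),1,1\right)$, $\mathbf{Q}(t)=\mathbf{R}_D(t)\mathbf{U}_D(t)$, and $\mathbf{Q}^{-1}(t)=\mathbf{U}_D^{-1}(t)\mathbf{R}_D^{T}(t)$.

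The core step is then immediate. The hypothesis that the vehicles lie along the eigen-direction $\hat{\mathbf{u}}_1$ at time $t$ says the current body axis is $\hat{\mathbf{u}}_1(t)=\mathbf{R}_D(t)\hat{\mathbf{u}}_{1,0}=\mathbf{R}_D(t)\hat{\mathbf{e}}_1$, i.e. there are scalars $\mu_{ij}(t)$ with $\mathbf{r}_{i,HT}(t)-\mathbf{r}_{j,HT}(t)=\mu_{ij}(t)\,\mathbf{R}_D(t)\hat{\mathbf{e}}_1$ for all $i,j$. Substituting into the inverted relation and cancelling $\mathbf{R}_D^{T}(t)\mathbf{R}_D(t)=\mathbf{I}_3$, then using $\mathbf{U}_D^{-1}(t)\hat{\mathbf{e}}_1=\lambda_1^{-1}(t)\hat{\mathbf{e}}_1$, yields $\mathbf{r}_{i,0}-\mathbf{r}_{j,0}=\left(\mu_{ij}(t)/\lambda_1(t)\right)\hat{\mathbf{e}}_1$. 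Thus every reference difference vector is a scalar multiple of $\hat{\mathbf{e}}_1$; equivalently, with one vehicle fixed as a base point, $\mathbf{r}_{i,0}=\mathbf{r}_{1,0}+\eta_i\hat{\mathbf{e}}_1$ for scalars $\eta_i$, which is exactly the claim that the reference configuration is distributed along $\hat{\mathbf{e}}_1$.

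The one place that needs justification --- and the only real obstacle --- is the identity $\hat{\mathbf{u}}_1(t)=\mathbf{R}_D(t)\hat{\mathbf{e}}_1$, i.e. that the current orientation of the one-dimensional body is the image under the rotational factor $\mathbf{R}_D(t)$ of the reference axis; this is what makes the $\mathbf{R}_D^{T}\mathbf{R}_D$ factor collapse and is precisely where the left-polar structure $\mathbf{Q}=\mathbf{R}_D\mathbf{U}_D$ together with the constraints $\lambda_2=\lambda_3=1$, $\phi_u=\theta_u=\psi_u=0$ enter. I would derive it directly from $\mathbf{Q}(t)\hat{\mathbf{e}}_1=\mathbf{R}_D(t)\mathbf{U}_D(t)\hat{\mathbf{e}}_1=\lambda_1(t)\,\mathbf{R}_D(t)\hat{\mathbf{e}}_1$, so that $\mathbf{Q}(t)$ carries the reference axis onto the span of $\mathbf{R}_D(t)\hat{\mathbf{e}}_1$, which is therefore the current body axis $\hat{\mathbf{u}}_1(t)$. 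A minor loose end is the degenerate case $\lambda_1(t)=1$, in which $\mathbf{U}_D(t)=\mathbf{I}_3$ and the eigen-direction is not unique; there the same computation applies verbatim with $\mathbf{U}_D^{-1}(t)=\mathbf{I}_3$, so no separate argument is required.
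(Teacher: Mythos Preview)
Your argument is correct, but it takes a different and more explicit route than the paper's. The paper's proof is a one-liner: it simply evaluates the decomposition \eqref{DECOM}--\eqref{ui} at the reference instant $t_0$, where all six angles $\phi_u,\theta_u,\psi_u,\phi_r,\theta_r,\psi_r$ vanish, and reads off $\hat{\mathbf{u}}_i=\hat{\mathbf{e}}_i$ directly. In effect, the paper treats the proposition as an immediate definitional consequence of the $n=1$ row of Table~\ref{tab:l} specialized to $t=t_0$, leaving the connection between ``distributed along $\hat{\mathbf{u}}_1$ at time $t$'' and ``distributed along $\hat{\mathbf{e}}_1$ at $t_0$'' implicit.

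You instead work at the generic time $t$ and invert the homogeneous map: you pull the hypothesis back through $\mathbf{Q}^{-1}(t)=\mathbf{U}_D^{-1}(t)\mathbf{R}_D^{T}(t)$ and show algebraically that every reference difference $\mathbf{r}_{i,0}-\mathbf{r}_{j,0}$ is a scalar multiple of $\hat{\mathbf{e}}_1$. This buys you an honest derivation that does not rely on the reader interpolating between the hypothesis at $t$ and the conclusion at $t_0$; it also makes transparent the role of the polar structure $\mathbf{Q}=\mathbf{R}_D\mathbf{U}_D$ and of the constraint $\mathbf{U}_D=\mathrm{diag}(\lambda_1,1,1)$. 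The paper's approach is shorter and adequate given the surrounding context, while yours is self-contained and would survive even if the reader had not absorbed how the body axis evolves under $\mathbf{R}_D$. One notational caution: strictly by \eqref{ui} with $\phi_u=\theta_u=\psi_u=0$ the eigenvector $\hat{\mathbf{u}}_1$ of $\mathbf{U}_D$ is $\hat{\mathbf{e}}_1$ at every $t$, whereas you (and the paper in Theorem~\ref{theorem1}) use $\hat{\mathbf{u}}_1(t)$ for the current body direction $\mathbf{R}_D(t)\hat{\mathbf{e}}_1$; this overloading is the paper's, not yours, and your identification $\hat{\mathbf{u}}_1(t)=\mathbf{R}_D(t)\hat{\mathbf{e}}_1$ is exactly the reading needed for the proposition to have content.
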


\begin{proof}
Substituting $\phi_u=\theta_u=\psi_u=\phi_r=\theta_r=\psi_r=0$ in Eq. \eqref{DECOM},  $\hat{\mathbf{u}}_1=\hat{\mathbf{e}}_1=[1~0~0]^T$, $\hat{\mathbf{u}}_2=\hat{\mathbf{e}}_2=[0~1~0]^T$, and $\hat{\mathbf{u}}_3=\hat{\mathbf{e}}_3=[0~0~1]^T$. 
\end{proof}

Theorem \ref{theorem1} describes how to determine $\left(\lambda_1,\phi_r,\theta_r,d_1,d_2,d_3\right)$ {\color{black}leader  position components $\left(x_{1,HT},y_{1,HT},z_{1,HT},x_{2,HT},y_{2,HT},z_{2,HT}\right)$} at any time $t$.

\begin{theorem}\label{theorem1}
If leaders' reference positions ($\mathbf{r}_{1,0}$ and $\mathbf{r}_{3,0}$) and leaders' global desired positions {\color{black}$\mathbf{r}_{1,HT}\left(t\right)$ and $\mathbf{r}_{2,HT}\left(t\right)$} at current time $t\geq t_0$ are given, then,
\begin{equation}
\label{lambda1D}
    t\geq t_0,~~ \lambda_1\left(t\right)=\dfrac{\|\mathbf{r}_{2,HT}\left(t\right)-\mathbf{r}_{1,HT}\left(t\right)\|}{\|\mathbf{r}_{2,0}-\mathbf{r}_{1,0}\|}=\dfrac{\|\mathbf{r}_{2,HT}\left(t\right)-\mathbf{r}_{1,HT}\left(t\right)\|}{\big|x_{2,0}-x_{1,0}\big|}
\end{equation}
Also,
\begin{subequations}
\begin{equation}
    \hat{\mathbf{u}}_1\left(t\right)={\mathbf{r}_{2,HT}\left(t\right)-\mathbf{r}_{1,HT}\left(t\right)\over \mathbf{r}_{2,0}-\mathbf{r}_{1,0}},
\end{equation}
\begin{equation}
\label{thetaar}
    \theta_r=-\sin^{-1}\left(\hat{\mathbf{u}}_1\cdot\hat{\mathbf{e}}_3\right),
\end{equation}
\begin{equation}
\label{psiir}
    \psi_r=\tan^{-1}\left(\dfrac{\hat{\mathbf{u}}_1\cdot\hat{\mathbf{e}}_2}{\hat{\mathbf{u}}_1\cdot\hat{\mathbf{e}}_1}\right),
\end{equation}
\end{subequations}
where $\hat{\mathbf{e}}_1=[1~0~0]^T$, $\hat{\mathbf{e}}_2=[0~1~0]^T$, and $\hat{\mathbf{e}}_3=[0~0~1]^T$.
Additionally, $d_1$, $d_2$, and $d_3$ are assigned by Eq.  \eqref{homogtrans}: $\mathbf{d}\left(t\right)=[d_1\left(t\right)~d_2\left(t\right)~d_3\left(t\right)]^T=\mathbf{r}_{1,HT}\left(t\right)-\mathbf{Q}\mathbf{r}_{1,0}$. 

\end{theorem}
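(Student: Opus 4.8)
The plan is to reduce the whole statement to the relative position of the two leaders and then peel off the features one at a time. First I would write the homogeneous transformation \eqref{homogtrans} for $i=1$ and $i=2$ and subtract, which annihilates the rigid-body displacement and leaves $\mathbf{r}_{2,HT}(t)-\mathbf{r}_{1,HT}(t)=\mathbf{Q}(t)(\mathbf{r}_{2,0}-\mathbf{r}_{1,0})$. Because the configuration is one-dimensional, the first row of Table \ref{tab:l} together with Proposition \ref{prop2} tell us that in the reference frame the vehicles lie along $\hat{\mathbf{e}}_1$, so $\mathbf{r}_{2,0}-\mathbf{r}_{1,0}=(x_{2,0}-x_{1,0})\hat{\mathbf{e}}_1$, and that $\lambda_2=\lambda_3=1$ with $\phi_u=\theta_u=\psi_u=0$ force $\hat{\mathbf{u}}_i=\hat{\mathbf{e}}_i$; hence $\mathbf{U}_D(t)\hat{\mathbf{e}}_1=\lambda_1(t)\hat{\mathbf{e}}_1$. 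Substituting the decomposition $\mathbf{Q}=\mathbf{R}_D\mathbf{U}_D$ from \eqref{DECOM} then gives $\mathbf{r}_{2,HT}(t)-\mathbf{r}_{1,HT}(t)=\lambda_1(t)(x_{2,0}-x_{1,0})\,\mathbf{R}_D(t)\hat{\mathbf{e}}_1$. Taking Euclidean norms and using the orthonormality $\mathbf{R}_D^T\mathbf{R}_D=\mathbf{I}_3$ immediately yields \eqref{lambda1D}, with $\|\mathbf{r}_{2,0}-\mathbf{r}_{1,0}\|=|x_{2,0}-x_{1,0}|$ from the one-dimensional reference geometry; this is legitimate because the two distinct leaders satisfy $x_{2,0}\neq x_{1,0}$.

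Dividing that same identity through by the scalar $\lambda_1(t)(x_{2,0}-x_{1,0})$ identifies the unit direction $\hat{\mathbf{u}}_1(t)$ along $\mathbf{r}_{2,HT}(t)-\mathbf{r}_{1,HT}(t)$ with $\mathbf{R}_D(t)\hat{\mathbf{e}}_1$. It then remains to invert the rotation parametrization: since $\phi_r=0$ in the one-dimensional case (Table \ref{tab:l}), $\mathbf{R}_D(t)=\mathbf{R}(0,\theta_r(t),\psi_r(t))$, and I would equate the three components of $\hat{\mathbf{u}}_1(t)$ with the corresponding entries of the closed form \eqref{rotationmatrix} (with the orientation convention of \eqref{ui}). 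The $\hat{\mathbf{e}}_3$-component gives a single scalar equation in $\theta_r$ solved by \eqref{thetaar}, and the ratio of the $\hat{\mathbf{e}}_2$- and $\hat{\mathbf{e}}_1$-components gives \eqref{psiir}; the ranges in Table \ref{tab:l} ($\theta_r\in[-\pi/2,\pi/2]$, $\psi_r\in[0,2\pi]$) select the correct branches of $\sin^{-1}$ and $\tan^{-1}$ and make the choice unique away from the degenerate alignment $\hat{\mathbf{u}}_1\parallel\hat{\mathbf{e}}_3$. Finally, with $\lambda_1,\theta_r,\psi_r$ in hand the Jacobian $\mathbf{Q}(t)=\mathbf{R}_D\mathbf{U}_D$ is fully determined, so evaluating \eqref{homogtrans} at $i=1$ and solving for the displacement gives $\mathbf{d}(t)=\mathbf{r}_{1,HT}(t)-\mathbf{Q}(t)\mathbf{r}_{1,0}$, which closes the one-to-one correspondence between the six leader position components and the six features $(\lambda_1,\theta_r,\psi_r,d_1,d_2,d_3)$.

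The routine parts are the norm computation and the algebra of \eqref{rotationmatrix}; the step I expect to require the most care is the rotation inversion, namely checking that the $(\theta_r,\psi_r)$ recovered from only two of the scalar equations is in fact consistent with the third component of $\hat{\mathbf{u}}_1(t)$ (so that $\mathbf{R}(0,\theta_r,\psi_r)\hat{\mathbf{e}}_1$ truly equals $\hat{\mathbf{u}}_1$), and handling the measure-zero direction where $\psi_r$ is undefined. One should also pin down the sign convention for $\hat{\mathbf{u}}_1$ — whether it points along $\mathbf{r}_{2,HT}-\mathbf{r}_{1,HT}$ or its negative — so that $\lambda_1>0$ as demanded by Table \ref{tab:l}.
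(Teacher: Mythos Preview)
Your proposal is correct and follows essentially the same route as the paper's proof: subtract the two instances of \eqref{homogtrans} to kill $\mathbf{d}$, use the 1-D reference geometry (Proposition~\ref{prop2}) to reduce $\mathbf{r}_{2,0}-\mathbf{r}_{1,0}$ to $(x_{2,0}-x_{1,0})\hat{\mathbf{e}}_1$, extract $\lambda_1$ by taking norms, read off $\theta_r,\psi_r$ from the first column of $\mathbf{R}_D$, and recover $\mathbf{d}$ from \eqref{homogtrans} at $i=1$. The only cosmetic difference is that the paper forms $\mathbf{Q}^T\mathbf{Q}=\mathbf{U}_D^2$ and then evaluates the quadratic form, whereas you use $\mathbf{U}_D\hat{\mathbf{e}}_1=\lambda_1\hat{\mathbf{e}}_1$ directly; your version is arguably cleaner and your remarks about branch selection and the degenerate direction $\hat{\mathbf{u}}_1\parallel\hat{\mathbf{e}}_3$ are points the paper does not make explicit.
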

\begin{proof}
Leaders' desired positions satisfy Eq. \eqref{homogtrans}{\color{black};} thus, 
$
\forall t,\qquad \mathbf{r}_{2,HT}\left(t\right)-\mathbf{r}_{1,HT}\left(t\right)=\mathbf{Q}\left(\mathbf{r}_{2,0}-\mathbf{r}_{1,0}\right)
$
and 
\[
\resizebox{0.99\hsize}{!}{%
$
\begin{split}
\left(\mathbf{r}_{2,HT}-\mathbf{r}_{1,HT}\right)^T\left(\mathbf{r}_{2,HT}-\mathbf{r}_{1,HT}\right)=\left(\mathbf{r}_{2,0}-\mathbf{r}_{1,0}\right)^T\mathbf{R}_D^T\mathbf{U}_D^2\mathbf{R}_D\left(\mathbf{r}_{2,0}-\mathbf{r}_{1,0}\right),
\end{split}
$
}
\]
at any time $t\in [t_s,t_f]$, where $\mathbf{r}_{2,0}-\mathbf{r}_{1,0}=[\left(x_{2,0}-x_{1,0}\right)~0~0]^T$. Because $\mathbf{R}_D$ is orthogonal,  $\mathbf{R}_D^T\mathbf{R}_D=\mathbf{I}$ and $\lambda_1$ is obtained as given in Eq. \eqref{lambda1D}. 
Because vehicles' desired positions are distributed along $\hat{\mathbf{u}}_1\left(t\right)$ at any time $t$, $ \hat{\mathbf{u}}_1\left(t\right)=\dfrac{\mathbf{r}_{2,HT}\left(t\right)-\mathbf{r}_{1,HT}\left(t\right)}{\|\mathbf{r}_{2,HT}\left(t\right)-\mathbf{r}_{1,HT}\left(t\right)\|}.
$ 
Considering Proposition \ref{prop2}, vehicles were distributed along {\color{black}the} ${\color{black}\hat{\mathbf{e}}_1}$ axis at reference time $t_0$. Therefore,
\[
{\mathbf{r}}_{2,0}-{\mathbf{r}}_{1,0}=
\begin{bmatrix}
\|{\mathbf{r}}_{2,0}-{\mathbf{r}}_{1,0}\|&0&0
\end{bmatrix}
^T.
\]
Because $\mathbf{r}_{i,0}$ and $\mathbf{r}_{i,HT}\left(t\right)$ ($i=1,2$) satisfy Eq. \eqref{homogtrans}, we can write:
\[
\resizebox{0.99\hsize}{!}{%
$
\mathbf{r}_{2,HT}-\mathbf{r}_{1,HT}= \|\mathbf{r}_{2,HT}-\mathbf{r}_{1,HT}\|\hat{\mathbf{u}}_1=\lambda_1\|\mathbf{r}_{2,0}-\mathbf{r}_{1,0}\|
    \begin{bmatrix}
    \cos\theta_r\cos\psi_r\\\cos\theta_r\sin\psi_r\\-\sin\psi_r
    \end{bmatrix}
    .
    $
    }
\]
Therefore,
$
\hat{\mathbf{u}}_1=
\begin{bmatrix}
\cos\theta_r\cos\psi_r&\cos\theta_r\sin\psi_r&-\sin\theta_r
\end{bmatrix}
^T
$
and $\theta_r$ and $\psi_r$ {\color{black}are} per Eqs. \eqref{thetaar} and \eqref{psiir}. Given $\lambda_1\left(t\right)$, $\theta_r\left(t\right)$, and $\psi_r\left(t\right)$ at time $t$, $\mathbf{U}_D$ and $\mathbf{R}_D$ are determined using Eq. \eqref{ROTDEF}.  Substituting $\mathbf{r}_1\left(t\right)$ and $\mathbf{r}_{1,0}$ into Eq. \eqref{homogtrans}, 
$\mathbf{d}\left(t\right)=[d_1\left(t\right)~d_2\left(t\right)~d_3\left(t\right)]^T=\mathbf{r}_{1,HT}\left(t\right)-\mathbf{Q}\mathbf{r}_{1,0}$.
\end{proof}

\subsubsection{2-D Homogeneous Transformation Decomposition}
Assume that {\color{black}vehicles' global desired positions} are on a plane in a {\color{black}$3$-D} motion space at current time $t$. The MVS transformation is guided by three leaders (vehicles $1$, $2$, and $3$) and the remaining vehicles are followers. In a 2-D homogeneous transformation, $\lambda_3(t)=1$ and $\phi_u(t)=\theta_u(t)=0$ at any time $t$.  Therefore, $\mathbf{U}_D$ simplifies to
\begin{equation}
\label{matrixUD}
\begin{split}
    \mathbf{U}_D=
    \begin{bmatrix}
    \lambda_1\cos^2\psi_u+\lambda_2\sin^2\psi_u&\left(\lambda_1-\lambda_2\right)\sin\psi_u\cos\psi_u&0\\
    \left(\lambda_1-\lambda_2\right)\sin\psi_u\cos\psi_u&\lambda_1\sin^2\psi_u+\lambda_2\cos^2\psi_u&0\\
    0&0&1
    \end{bmatrix}
    .
\end{split}
\end{equation}
\begin{proposition}\label{prop3}
If vehicles are distributed on a plane with normal vector $\hat{\mathbf{u}}_3$ at any time $t$ ($\hat{\mathbf{u}}_1$, $\hat{\mathbf{u}}_2$, and $\hat{\mathbf{u}}_3$ are the eigenvectors of  $\mathbf{U}_D$),  they were on this plane at reference time $t_0$. 
\end{proposition}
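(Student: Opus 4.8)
The plan is to adapt the argument behind Proposition~\ref{prop2} to the planar case, working with inter-vehicle difference vectors so that the (generally time-varying) in-plane deformation angle $\psi_u(t)$ drops out. First I would collect the structural facts that Table~\ref{tab:l} and Eqs.~\eqref{DECOM}, \eqref{ROTDEF} force on a $2$-D homogeneous transformation: $\lambda_3(t)=1$ and $\phi_u(t)=\theta_u(t)=0$, so that $\mathbf{R}(0,0,\psi_u(t))$ is a rotation about $\hat{\mathbf{e}}_3$ and hence $\hat{\mathbf{u}}_3(t)=\mathbf{R}^{T}(0,0,\psi_u(t))\hat{\mathbf{e}}_3=\hat{\mathbf{e}}_3$. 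It then follows --- as is already explicit in the reduced form \eqref{matrixUD} --- that $\mathbf{U}_D(t)\hat{\mathbf{e}}_3=\hat{\mathbf{e}}_3$ and $\hat{\mathbf{e}}_3^{T}\mathbf{U}_D(t)=\hat{\mathbf{e}}_3^{T}$ for every $t$; that is, $\hat{\mathbf{e}}_3$ persists as an eigenvector of the pure-deformation matrix with eigenvalue $1$ throughout the motion, regardless of how $\psi_u$ evolves.

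Next I would use the homogeneous map \eqref{homogtrans}: for any two vehicles $i,j\in\mathcal{V}_R$, $\mathbf{r}_{i,HT}(t)-\mathbf{r}_{j,HT}(t)=\mathbf{Q}(t)(\mathbf{r}_{i,0}-\mathbf{r}_{j,0})=\mathbf{R}_D(t)\mathbf{U}_D(t)(\mathbf{r}_{i,0}-\mathbf{r}_{j,0})$. The current normal direction of the (planar) vehicle set is the rotated third principal axis $\hat{\mathbf{u}}_3(t):=\mathbf{R}_D(t)\hat{\mathbf{e}}_3$; projecting the difference vector onto it and invoking $\mathbf{R}_D^{T}(t)\mathbf{R}_D(t)=\mathbf{I}_3$ together with $\hat{\mathbf{e}}_3^{T}\mathbf{U}_D(t)=\hat{\mathbf{e}}_3^{T}$ gives the key identity
\[
\big(\mathbf{R}_D(t)\hat{\mathbf{e}}_3\big)^{T}\big(\mathbf{r}_{i,HT}(t)-\mathbf{r}_{j,HT}(t)\big)=\hat{\mathbf{e}}_3^{T}\big(\mathbf{r}_{i,0}-\mathbf{r}_{j,0}\big),
\]
which is exactly the $l=3$, $\lambda_3=1$ specialization of Lemma~\ref{lemmmma1}; here it is obtained directly and does not require $\psi_u$ to be constant, precisely because $\hat{\mathbf{e}}_3$ stays a unit-eigenvalue eigenvector of $\mathbf{U}_D(t)$.

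To conclude, I would invoke the hypothesis: all $\mathbf{r}_{i,HT}(t)$ lie on a common plane with normal $\hat{\mathbf{u}}_3(t)=\mathbf{R}_D(t)\hat{\mathbf{e}}_3$, so the left-hand side of the identity vanishes for every pair $i,j$; the identity then forces $\hat{\mathbf{e}}_3^{T}(\mathbf{r}_{i,0}-\mathbf{r}_{j,0})=0$ for all $i,j\in\mathcal{V}_R$, i.e. all reference positions share a common $\hat{\mathbf{e}}_3$-coordinate and hence lie on a plane with normal $\hat{\mathbf{e}}_3$ --- which is ``this plane'' evaluated at $t_0$, consistent with the $2$-D reference convention $\mathbf{r}_{i,0}=[x_{i,0}~y_{i,0}~0]^{T}$. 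The point I expect to need the most care is the two-rotation bookkeeping in the decomposition $\mathbf{Q}=\mathbf{R}_D\mathbf{U}_D$: I must state unambiguously that the ``$\hat{\mathbf{u}}_3$'' of the proposition denotes the \emph{current} principal direction $\mathbf{R}_D(t)\hat{\mathbf{e}}_3$ rather than the static eigenvector $\hat{\mathbf{e}}_3$ of $\mathbf{U}_D(t)$, and check that this eigenvector keeps eigenvalue $1$ as $\psi_u(t)$ varies; once that is pinned down, the rest is a one-line computation with orthogonal matrices.
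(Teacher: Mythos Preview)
Your argument is correct, but it is considerably more elaborate than the paper's own proof. The paper simply observes that at the reference time $t_0$ all deformation and rotation angles vanish, substitutes $\phi_u=\theta_u=\psi_u=\phi_r=\theta_r=\psi_r=0$ into Eq.~\eqref{ROTDEF}, and reads off $\hat{\mathbf{u}}_i=\hat{\mathbf{e}}_i$ for $i=1,2,3$; since the 2-D reference convention already places every $\mathbf{r}_{i,0}$ in the $z=0$ plane, the conclusion follows immediately without ever invoking the hypothesis. Your route, by contrast, works with inter-vehicle difference vectors, exploits the orthogonality of $\mathbf{R}_D$ together with the fact that $\hat{\mathbf{e}}_3$ is a unit-eigenvalue eigenvector of $\mathbf{U}_D(t)$ for all $t$, and actually \emph{derives} planarity at $t_0$ from planarity at $t$ via the projection identity (your $l=3$ specialization of Lemma~\ref{lemmmma1}). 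What your approach buys is a genuine implication that does not lean on the reference convention as an unstated premise, and it forces you to confront --- and resolve --- the ambiguity you correctly flagged about whether the proposition's $\hat{\mathbf{u}}_3$ denotes the static eigenvector $\hat{\mathbf{e}}_3$ of $\mathbf{U}_D$ or its rotated image $\mathbf{R}_D(t)\hat{\mathbf{e}}_3$; the paper's one-line argument sidesteps that issue entirely by evaluating at $t_0$, where the two coincide.
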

\begin{proof}
Substituting $\phi_u=\theta_u=\psi_u=\phi_r=\theta_r=\psi_r=0$ in Eq. \eqref{ROTDEF},  $\hat{\mathbf{u}}_1=\hat{\mathbf{e}}_1=[1~0~0]^T$, $\hat{\mathbf{u}}_2=\hat{\mathbf{e}}_2=[0~1~0]^T$, and $\hat{\mathbf{u}}_3=\hat{\mathbf{e}}_3=[0~0~1]^T$. 
\end{proof}
\begin{remark}
Given leaders' reference positions $\mathbf{r}_{i,0}=[x_{i,0}~y_{i,0}~0]^T$ and $\mathbf{r}_{j,0}=[x_{j,0}~y_{j,0}~0]^T$ and leaders' global desired positions $\mathbf{r}_{i,HT}\left(t\right)=[x_{i,HT}\left(t\right)~y_{i,HT}\left(t\right)~z_{i,HT}\left(t\right)]^T$ and $\mathbf{r}_{j,HT}\left(t\right)=[x_{j,HT}\left(t\right)~y_{j,HT}\left(t\right)~z_{j,HT}\left(t\right)]^T$ satisfying homogeneous transformation condition \eqref{homogtrans}, the following relation holds:
\[
  \left(\mathbf{r}_{d,i}-\mathbf{r}_{j,HT}\right)^T\left(\mathbf{r}_{d,i}-\mathbf{r}_{j,HT}\right)=\left(\mathbf{r}_{i,0}-\mathbf{r}_{j,0}\right)^T\mathbf{U}_D^2\left(\mathbf{r}_{d,i}-\mathbf{r}_{j,HT}\right),
\]
where
\begin{equation}
    \mathbf{U}_D^2=
    \begin{bmatrix}
    \lambda_1^2\cos^2\psi_u+\lambda_2^2\sin^2\psi_u&\left(\lambda_1^2-\lambda_2^2\right)\sin\psi_u\cos\psi_u&0\\
    \left(\lambda_1^2-\lambda_2^2\right)\sin\psi_u\cos\psi_u&\lambda_1^2\sin^2\psi_u+\lambda_2^2\cos^2\psi_u&0\\
    0&0&1
    \end{bmatrix}
    .
\end{equation}
\end{remark}

\begin{theorem}
Assume vehicles are distributed on the plane normal to $\hat{\mathbf{u}}_3$. Given leaders' reference positions ($\mathbf{r}_{1,0}$, $\mathbf{r}_{2,0}$, and $\mathbf{r}_{3,0}$) and leaders' global desired positions at a time $t\geq t_0$ ($\mathbf{r}_{1,HT}$, $\mathbf{r}_{2,HT}$, and $\mathbf{r}_{3,HT}$), $\mathbf{U}_D$ eigenvalues $\lambda_1$ and $\lambda_2$ and deformation angle $\psi_u$ are obtained by
\begin{subequations}
\label{l1l2l3}
\begin{equation}
    \lambda_1=\sqrt{\dfrac{a+c}{2}+\sqrt{\big[{1\over 2}\left(a-c\right)\big]^2+b^2}}
\end{equation}
\begin{equation}
    \lambda_2=\sqrt{\dfrac{a+c}{2}-\sqrt{\big[{1\over 2}\left(a-c\right)\big]^2+b^2}}
\end{equation}
\begin{equation}
    \psi_u=\dfrac{1}{2}\tan^{-1}\left(\dfrac{2b}{a-c}\right),
\end{equation}
\end{subequations}
where
\begin{equation}
\label{abc}
\setlength\arraycolsep{0.5pt}
\begin{split}
&
    \begin{bmatrix}
    a\\
    b\\
    c
    \end{bmatrix}
    =
    \begin{bmatrix}
    \left(x_{2,0}-x_{1,0}\right)^2&~2\left(x_{2,0}-x_{1,0}\right)\left(y_{2,0}-y_{1,0}\right)~&\left(y_{2,0}-y_{1,0}\right)^2\\
    \left(x_{3,0}-x_{2,0}\right)^2&~2\left(x_{3,0}-x_{2,0}\right)\left(y_{3,0}-y_{2,0}\right)~&\left(y_{3,0}-y_{2,0}\right)^2\\
    \left(x_{1,0}-x_{3,0}\right)^2&~2\left(x_{1,0}-x_{3,0}\right)\left(y_{1,0}-y_{3,0}\right)~&\left(y_{1,0}-y_{3,0}\right)^2\\
    \end{bmatrix}
    ^{-1}
    \\
    &
    \begin{bmatrix}
    \left(x_{2,HT}-x_{1,HT}\right)^2+\left(y_{2,HT}-y_{1,HT}\right)^2+\left(z_{2,HT}-z_{1,HT}\right)^2\\
    \left(x_{3,HT}-x_{2,HT}\right)^2+\left(y_{3,HT}-y_{2,HT}\right)^2+\left(z_{3,HT}-z_{2,HT}\right)^2\\
    \left(x_{1,HT}-x_{3,HT}\right)^2+\left(y_{1,HT}-y_{3,HT}\right)^2+\left(z_{1,HT}-z_{3,HT}\right)^2\\
    \end{bmatrix}
    .
\end{split}
\end{equation}
\end{theorem}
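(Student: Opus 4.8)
The plan is to reduce the homogeneous transformation constraint \eqref{homogtrans}, applied to pairs of leaders, to scalar distance identities, to recognize the resulting linear system as the one in \eqref{abc}, and then to read off $\lambda_1$, $\lambda_2$, $\psi_u$ from the elementary spectral formula for a symmetric $2\times2$ matrix. First I would record that, by Proposition \ref{prop3} together with the $n=2$ reference-configuration convention, every leader has $\mathbf{r}_{i,0}=[x_{i,0}~y_{i,0}~0]^T$, so $\mathbf{r}_{i,0}-\mathbf{r}_{j,0}$ has zero third component. Evaluating \eqref{homogtrans} at leaders $i$ and $j$ gives $\mathbf{r}_{i,HT}(t)-\mathbf{r}_{j,HT}(t)=\mathbf{Q}(t)(\mathbf{r}_{i,0}-\mathbf{r}_{j,0})$; taking squared Euclidean norms and using the decomposition \eqref{DECOM} with $\mathbf{R}_D$ orthonormal and $\mathbf{U}_D$ symmetric yields
\[
\|\mathbf{r}_{i,HT}(t)-\mathbf{r}_{j,HT}(t)\|^2=(\mathbf{r}_{i,0}-\mathbf{r}_{j,0})^T\mathbf{U}_D^2(t)(\mathbf{r}_{i,0}-\mathbf{r}_{j,0}).
\]
Since the third component of $\mathbf{r}_{i,0}-\mathbf{r}_{j,0}$ vanishes and the $(3,3)$ entry of $\mathbf{U}_D^2$ equals $1$ (see the explicit form in the preceding remark), only the $2\times2$ leading block of $\mathbf{U}_D^2$ contributes, and with $a$, $b$, $c$ denoting its $(1,1)$, $(1,2)$, $(2,2)$ entries the right-hand side expands to $a(x_{i,0}-x_{j,0})^2+2b(x_{i,0}-x_{j,0})(y_{i,0}-y_{j,0})+c(y_{i,0}-y_{j,0})^2$.

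Next I would instantiate this identity for the three leader pairs $(1,2)$, $(2,3)$, $(3,1)$, which produces exactly the $3\times3$ linear system for $a$, $b$, $c$ written in \eqref{abc}. The step that needs justification is the invertibility of its coefficient matrix: setting $\mathbf{v}_1=\mathbf{r}_{2,0}-\mathbf{r}_{1,0}$, $\mathbf{v}_2=\mathbf{r}_{3,0}-\mathbf{r}_{2,0}$, $\mathbf{v}_3=\mathbf{r}_{1,0}-\mathbf{r}_{3,0}$ (so $\mathbf{v}_1+\mathbf{v}_2+\mathbf{v}_3=\mathbf{0}$), a short row elimination shows that the determinant of the coefficient matrix equals $-2\big[(x_{2,0}-x_{1,0})(y_{3,0}-y_{1,0})-(x_{3,0}-x_{1,0})(y_{2,0}-y_{1,0})\big]^3$, i.e. $-2$ times the cube of twice the signed area of the leading triangle, which is nonzero precisely because leaders $1$, $2$, $3$ form a $2$-D simplex in the reference configuration (Assumption \ref{leaderrankcondition}; Assumption \ref{leaderglobaldesired} guarantees $\mathbf{Q}(t)$ is well-defined). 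Hence $a$, $b$, $c$ given by \eqref{abc} are uniquely determined and coincide with the block entries $\lambda_1^2\cos^2\psi_u+\lambda_2^2\sin^2\psi_u$, $(\lambda_1^2-\lambda_2^2)\sin\psi_u\cos\psi_u$, $\lambda_1^2\sin^2\psi_u+\lambda_2^2\cos^2\psi_u$.

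Finally, I would apply the standard spectral formula for the symmetric matrix $\left[\begin{smallmatrix}a&b\\ b&c\end{smallmatrix}\right]$: its eigenvalues are $\tfrac{a+c}{2}\pm\sqrt{[\tfrac12(a-c)]^2+b^2}$ and the angle between its first principal axis and $\hat{\mathbf{e}}_1$ is $\tfrac12\tan^{-1}\!\big(\tfrac{2b}{a-c}\big)$. Because this block of $\mathbf{U}_D^2$ has eigenvalues $\lambda_1^2\ge\lambda_2^2$ and eigenvectors obtained from $\hat{\mathbf{e}}_1,\hat{\mathbf{e}}_2$ by the rotation through $\psi_u$ (Eq. \eqref{ui} with $\phi_u=\theta_u=0$), identifying the larger root with $\lambda_1^2$, the smaller with $\lambda_2^2$, the axis angle with $\psi_u$, and taking positive square roots yields \eqref{l1l2l3}. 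The main obstacle is the invertibility of the coefficient matrix of \eqref{abc}; the remaining steps are bookkeeping once the distance identities and the $2\times2$ spectral formulas are in place. One should also note that $\tan^{-1}$ pins down $\psi_u$ only modulo $\pi/2$, but this ambiguity is harmless since $(\lambda_1,\lambda_2,\psi_u)$ and $(\lambda_2,\lambda_1,\psi_u+\pi/2)$ produce the same $\mathbf{U}_D$.
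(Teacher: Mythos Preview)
Your proof is correct and follows essentially the same approach as the paper: derive three squared-distance identities from \eqref{homogtrans} using $\mathbf{R}_D^T\mathbf{R}_D=\mathbf{I}$, solve the resulting linear system for the entries $a$, $b$, $c$ of the leading $2\times2$ block of $\mathbf{U}_D^2$, and then extract $\lambda_1$, $\lambda_2$, $\psi_u$ from that block. You go beyond the paper's argument by explicitly computing the determinant of the coefficient matrix in \eqref{abc} to justify its invertibility and by noting the harmless $\pi/2$ ambiguity in $\psi_u$; the paper simply asserts that $a$, $b$, $c$ are obtained from \eqref{abc} and that \eqref{l1l2l3} follows, without addressing either point.
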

\begin{proof}
Given leaders' reference positions $\mathbf{r}_{i,0}$ and $\mathbf{r}_{j,0}$ and leaders' current desired positions $\mathbf{r}_{d,i}\left(t\right)=[x_{d,i}\left(t\right)~y_{d,i}\left(t\right)~z_{d,i}\left(t\right)]^T$ ($i=1,2,3$), satisfying homogeneous transformation condition \eqref{homogtrans}, the following relation holds:
\begin{subequations}
\begin{equation}
\label{rd0rdt1}
\resizebox{0.99\hsize}{!}{%
$
\begin{split}
   \left(\mathbf{r}_{2,HT}-\mathbf{r}_{1,HT}\right)^T\left(\mathbf{r}_{2,HT}-\mathbf{r}_{1,HT}\right)=\left(\mathbf{r}_{2,0}-\mathbf{r}_{1,0}\right)^T\mathbf{U}_D^2\left(\mathbf{r}_{2,HT}-\mathbf{r}_{1,HT}\right),
\end{split}
$
}
\end{equation}
\begin{equation}
\label{rd0rdt2}
\resizebox{0.99\hsize}{!}{%
$
\begin{split}
    &\left(\mathbf{r}_{3,HT}-\mathbf{r}_{2,HT}\right)^T\left(\mathbf{r}_{3,HT}-\mathbf{r}_{2,HT}\right)=\left(\mathbf{r}_{3,0}-\mathbf{r}_{2,0}\right)^T\mathbf{U}_D^2\left(\mathbf{r}_{3,HT}-\mathbf{r}_{2,HT}\right),
\end{split}
$
}
\end{equation}
\begin{equation}
\label{rd0rdt3}
\resizebox{0.99\hsize}{!}{%
$
\begin{split}
    &\left(\mathbf{r}_{1,HT}-\mathbf{r}_{3,HT}\right)^T\left(\mathbf{r}_{1,HT}-\mathbf{r}_{3,HT}\right)=\left(\mathbf{r}_{1,0}-\mathbf{r}_{3,0}\right)^T\mathbf{U}_D^2\left(\mathbf{r}_{1,HT}-\mathbf{r}_{3,HT}\right),
\end{split}
$
}
\end{equation}
\end{subequations}
where
\begin{subequations}
\begin{equation}
    \mathbf{U}_D^2=
    \begin{bmatrix}
    a&b&0\\
    b&c&0\\
    0&0&1
    \end{bmatrix}
    ,
\end{equation}
\begin{equation}
    a=\lambda_1^2\cos^2\psi_u+\lambda_2^2\sin^2\psi_u,
\end{equation}
\begin{equation}
    b=\left(\lambda_1^2-\lambda_2^2\right)\sin\psi_u\cos\psi_u,
\end{equation}
\begin{equation}
    c=\lambda_1^2\sin^2\psi_u+\lambda_2^2\cos^2\psi_u.
\end{equation}
\end{subequations}

Considering Proposition \ref{prop3}, it is concluded that $z$ components of leaders' reference positions are zero  ($\mathbf{r}_{i,0}=[x_{i,0}~y_{i,0}~0]^T$ ($i=1,2,3$, $i\in\mathcal{V}_L$). Substituting $\mathbf{r}_{i,0}$ and $\mathbf{r}_{i,HT}\left(t\right)$ ($i=1,2,3$) into Eqs. \eqref{rd0rdt1}, \eqref{rd0rdt2}, and \eqref{rd0rdt3}, $a$, $b$, and $c$ are obtained as given in Eq. \eqref{abc}. Given $a$, $b$, and $c$, $\lambda_1$, $\lambda_2$, and $\lambda_3$ are obtained as given in Eq. \eqref{l1l2l3}.  
\end{proof}

\textbf{Rotation Matrix $\mathbf{R}_D$:} Given leaders' reference positions $\mathbf{r}_{1,0}$, $\mathbf{r}_{2,0}$, and $\mathbf{r}_{3,0}$, leaders desired positions $\mathbf{r}_{1,HT}\left(t\right)$, $\mathbf{r}_{2,HT}\left(t\right)$, and $\mathbf{r}_{3,HT}\left(t\right)$, {\color{black}the} pure deformation matrix $\mathbf{U}_D$ is computed using Eq. \eqref{matrixUD}. Under a homogeneous transformation, 
$\mathbf{v}_{1,0}=\mathbf{r}_{2,0}-\mathbf{r}_{1,0}$, $\mathbf{v}_{2,0}=\mathbf{r}_{3,0}-\mathbf{r}_{2,0}$, and $\mathbf{v}_{3,0}={\left(\mathbf{r}_{2,0}-\mathbf{r}_{1,0}\right)\times \left(\mathbf{r}_{2,0}-\mathbf{r}_{1,0}\right)\over \|\left(\mathbf{r}_{2,0}-\mathbf{r}_{1,0}\right)\times \left(\mathbf{r}_{2,0}-\mathbf{r}_{1,0}\right)\|}$  are transferred to $\mathbf{v}_{1,HT}=\mathbf{r}_{2,HT}-\mathbf{r}_{1,HT}$, $\mathbf{v}_{2,HT}=\mathbf{r}_{3,HT}-\mathbf{r}_{2,HT}$, and $\mathbf{v}_{3,HT}={\left(\mathbf{r}_{2,HT}-\mathbf{r}_{1,HT}\right)\times \left(\mathbf{r}_{2,HT}-\mathbf{r}_{1,HT}\right)\over \|\left(\mathbf{r}_{2,HT}-\mathbf{r}_{1,HT}\right)\times \left(\mathbf{r}_{2,HT}-\mathbf{r}_{1,HT}\right)\|}$,  where $\mathbf{v}_{i,HT}=\mathbf{Q}\mathbf{v}_{i,0}$.
Define 
\[
\mathbf{L}_0=
\begin{bmatrix}
\mathbf{v}_{1,0}&\mathbf{v}_{2,0}&\mathbf{v}_{3,0}
\end{bmatrix}
~\mathrm{and}~
\mathbf{L}_d=
\begin{bmatrix}
\mathbf{v}_{1,HT}&\mathbf{v}_{2,HT}&\mathbf{v}_{3,HT}
\end{bmatrix}
.
\]
If leaders $1$, $2$, and $3$ form a triangle at time $t$, {\color{black}the elements of $\mathbf{Q}$} are obtained by
\begin{equation}
    \mathrm{vec}\left(\mathbf{Q}^T\right)=\left(\mathbf{I}_3\otimes \mathbf{L}_0^T\right)^{-1}\mathrm{vec}\left(\mathbf{L}_d^T\right).
\end{equation}
Given $\mathbf{Q}$ and $\mathbf{U}_D$, $\mathbf{R}_D=\mathbf{Q}\mathbf{U}_D^{-1}$ {\color{black}can be} obtained. 
\\

\textbf{Rigid-Body Displacement Vector $\mathbf{d}$:} Given $\mathbf{Q}$, $\mathbf{r}_{i,0}$, and $\mathbf{r}_{d,i}$, $\mathbf{d}=\mathbf{r}_{d,i}-\mathbf{Q}\mathbf{r}_{i,0}$.

\subsubsection{3-D Homogeneous Transformation Decomposition}
When an MVS is guided by four leaders and leaders form a tetrahedron at any time $t$, leaders' global desired positions satisfy the following rank condition:
\begin{equation}
\begin{split}
    &\forall t\geq t_0,\\
    &\mathrm{Rank}\left(
    \begin{bmatrix}
    \mathbf{r}_{2,HT}-\mathbf{r}_{1,HT}&\mathbf{r}_{3,HT}-\mathbf{r}_{1,HT}&\mathbf{r}_{d,4}-\mathbf{r}_{1,HT}
    \end{bmatrix}
    \right)
    =3.
\end{split}
\end{equation}
Define
\[
\mathbf{P}_0=
\begin{bmatrix}
{x}_{1,0}&{y}_{1,0}&{z}_{1,0}\\
{x}_{2,0}&{y}_{2,0}&{z}_{2,0}\\
{x}_{3,0}&{y}_{3,0}&{z}_{3,0}\\
{x}_{4,0}&{y}_{4,0}&{z}_{4,0}\\
\end{bmatrix}
~\mathrm{and}~
\mathbf{P}_d=
\begin{bmatrix}
{x}_{1,HT}&{y}_{1,HT}&{z}_{1,HT}\\
{x}_{2,HT}&{y}_{2,HT}&{z}_{2,HT}\\
{x}_{3,HT}&{y}_{3,HT}&{z}_{3,HT}\\
{x}_{d,4}&{y}_{d,4}&{z}_{d,4}\\
\end{bmatrix}
.
\]
Elements of $\mathbf{Q}$ and $\mathbf{d}$ are uniquely related to leaders' position components by \cite{rastgoftar2015swarm}:
\begin{equation}
\label{QQDD}
\begin{bmatrix}
\mathrm{vec}\left(\mathbf{Q}^T\left(t\right)\right)\\
\mathbf{d}\left(t\right)
\end{bmatrix}
=
\begin{bmatrix}
\mathbf I_3\otimes \mathbf{P}_0&\mathbf I_3\otimes \mathbf{1}_4
\end{bmatrix}
^{-1}
\mathrm{vec}\left(\mathbf{P}_d\left(t\right)\right).
\end{equation}
where $\mathbf I_3\in \mathbb{R}^{3\times 3}$ is an identity matrix and $\mathbf 1_4\in \mathbb{R}^{4\times 1}$ is a ones vector.

\begin{IEEEbiography}[{\includegraphics[width=1in,height=1.25in,clip,keepaspectratio]{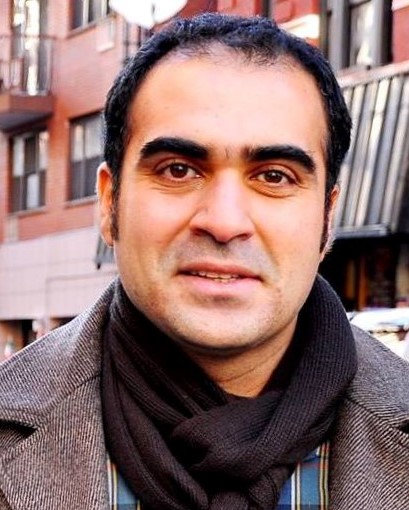}}]
{\textbf{Hossein Rastgoftar}} is an Assistant Research Scientist in the Department of Aerospace Engineering at the Uinveristy of Michigan. He received the B.Sc. degree in mechanical engineering-thermo-fluids from Shiraz University, Shiraz, Iran, the M.S. degrees in mechanical systems and solid mechanics from Shiraz University and the University of Central Florida, Orlando, FL, USA, and the Ph.D. degree in mechanical engineering from Drexel University, Philadelphia, in 2015. His current research interests include dynamics and control, multiagent systems, cyber-physical systems, and optimization and Markov decision processes.
\end{IEEEbiography}
\begin{IEEEbiography}[{\includegraphics[width=1in,height=1.25in,clip,keepaspectratio]{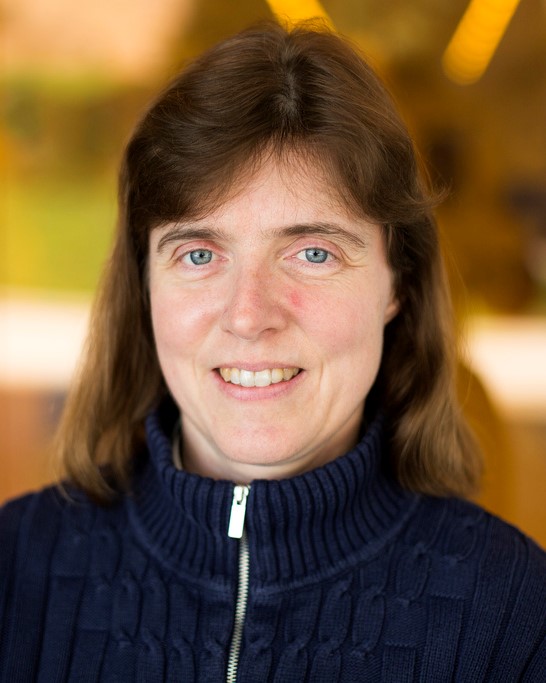}}]
{\textbf{Ella Atkins}} is a Full Professor of Aerospace Engineering at the University of Michigan, where she directs the Autonomous Aerospace Systems Lab and is Associate Director of Graduate Programs for the Robotics Institute.  Dr. Atkins holds B.S. and M.S. degrees in Aeronautics and Astronautics from MIT and M.S. and Ph.D. degrees in Computer Science and Engineering from the University of Michigan.  She is past-chair of the AIAA Intelligent Systems Technical Committee and has served on the National Academy's Aeronautics and Space Engineering Board, the Institute for Defense Analysis Defense Science Studies Group, and an NRC committee to develop an autonomy research agenda for civil aviation. She pursues research in Aerospace system autonomy and safety.
\end{IEEEbiography}
\begin{IEEEbiography}[{\includegraphics[width=1in,height=1.2in,clip,keepaspectratio]{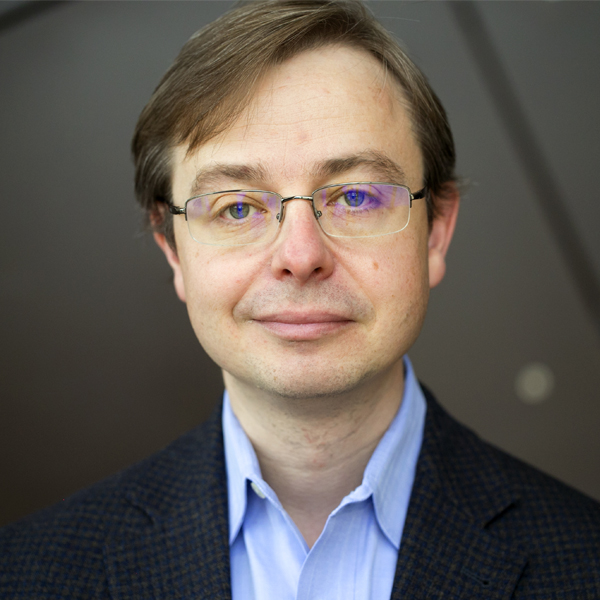}}]
{\textbf{Ilya V. Kolmanovsky}} received M.S. and Ph.D. degrees in aerospace engineering and the M.A. degree in mathematics from the University of Michigan, Ann Arbor, MI, USA, in 1993, 1995, and 1995, respectively.  Between 1995 and 2009, he was with Ford Research and Advanced Engineering, Dearborn, MI, USA. He is currently a Full Professor with the Department of Aerospace Engineering, University of Michigan. His  research interests include control theory for systems with state and control constraints, and control applications to aerospace and automotive systems. Dr. Kolmanovsky was a recipient of the Donald P. Eckman Award of American Automatic Control Council and two IEEE Transactions on Control Systems Technology Outstanding Paper Awards. Dr. Kolmanovsky is an IEEE Fellow and {\color{black}AIAA Associate Fellow}.
\end{IEEEbiography}

\end{document}